\documentclass[10pt, a4paper]{article}
\usepackage{etex}
\usepackage[english]{babel}
\usepackage{mathrsfs}
\usepackage{proof}
\usepackage{MnSymbol}
\usepackage{hyperref}
\usepackage{stmaryrd}
\usepackage{amsthm}
\usepackage{multirow}
\usepackage{amscd}

\theoremstyle{definition}
\newtheorem{thm}{Theorem}[section]
\theoremstyle{definition}
\newtheorem{cor}[thm]{Corrollary}
\theoremstyle{definition}
\newtheorem{vb}[thm]{Example}
\theoremstyle{definition}
\newtheorem{lem}[thm]{Lemma}
\theoremstyle{definition}
\newtheorem{rem}[thm]{Remark}
\theoremstyle{definition}
\newtheorem{defs}[thm]{Definition}

\newcommand{\valp}[1]{{v^+(#1)}}
\newcommand{\valn}[1]{{v^-(#1)}}
\newcommand{\uvalp}[1]{{v^+(#1)}}
\newcommand{\uvaln}[1]{{v^-(#1)}}

\newcommand{\inp}{+}
\newcommand{\outp}{-}

\newcommand{\semt}[1]{\mathop{\updownarrow}(#1)}

\newcommand{\ntop}{\mathop{\downarrow}}
\newcommand{\pton}{\mathop{\uparrow}}

\newcommand{\calclg}{\textbf{LG}}
\newcommand{\calccnl}{\textbf{CNL}}
\newcommand{\cnlpol}{$\textbf{CNL}^{\it pol}$}
\newcommand{\lgpol}{$\textbf{LG}^{\it pol}$}

\newcommand{\forgetp}[1]{(#1)^{\sharp}} 
\newcommand{\forgetn}[1]{(#1)^{\flat}} 
\newcommand{\forgets}[1]{(#1)^{\natural}}
\newcommand{\displ}[2]{#1\mathbin{\div}#2} 

\newcommand{\svara}{\Gamma}
\newcommand{\svarb}{\Delta}
\newcommand{\svarc}{\Theta}

\newcommand{\svard}{\Pi}
\newcommand{\svare}{\Sigma}
\newcommand{\svarf}{\Upsilon}

\newcommand{\cnlos}{\stimes{}{},\sos{}{}}
\newcommand{\cnlobs}{\stimes{}{},\sobs{}{}}
\newcommand{\grisha}[2]{A_{#1}^{#2}}
\newcommand{\grishc}[2]{C_{#1}^{#2}}


\newcommand{\lgalg}[2]{#1\leq #2}
\newcommand{\pres}[2]{#1,#2}
\newcommand{\cnlgseq}[2]{#1,#2\vdash}
\newcommand{\cnlgseqp}[2]{#1\vdash #2}

\newcommand{\presa}{\omega}

\newcommand{\defeq}{=_{\textit{def}}} 


\newcommand{\stimes}[2]{#1\bullet #2}
\newcommand{\sos}[2]{#1\leftfilledspoon #2}
\newcommand{\sobs}[2]{#1\rightfilledspoon #2}


\newcommand{\bs}{\backslash}
\newcommand{\os}{\varoslash}
\newcommand{\obs}{\varobslash}
\newcommand{\lgtimes}{\varotimes}
\newcommand{\lgplus}{\varoplus}

\newcommand{\invp}[1]{{{\Updownline} #1{\Updownline}}}


\newcommand{\suba}[1]{\sigma(#1)}
\newcommand{\subb}[1]{\tau(#1)}

\title{Focalization and phase models for classical extensions of non-associative Lambek calculus}
\author{Arno Bastenhof}

\begin{document}
\bibliographystyle{plain}
\maketitle

\begin{abstract}
Lambek's non-associative syntactic calculus (\textbf{NL}, \cite{lambek61}) excels in its resource consciousness: the usual structural rules for weakening, contraction, exchange and even associativity are all dropped. Recently, there have been proposals for conservative extensions dispensing with \textbf{NL}'s intuitionistic bias towards sequents with single conclusions: De Groote and Lamarche's \textit{classical} \textbf{NL} (\textbf{CNL}, \cite{degrootelamarche}) and Moortgat's \textit{Lambek-Grishin calculus} (\textbf{LG}, \cite{moortgat09}). We demonstrate Andreoli's focalization property (\cite{andreoli92}) for said proposals: a normalization result for Cut-free sequent derivations identifying to a large extent those differing only by trivial rule permutations. In doing so, we proceed from a `uniform' sequent presentation, deriving \textbf{CNL} from \textbf{LG} through the addition of structural rules. The normalization proof proceeds by the construction of syntactic phase models wherein every `truth' has a focused proof, similar to \cite{okada02}. 
\end{abstract}

\section{Introduction}
Logics without structural rules were first proposed by Lambek in the late fifties and early sixties. His \textit{syntactic calculus} from \cite{lambek58} was a logic of strings, making no appeal to weakening, contraction or exchange. Associativity was subsequently dropped in \cite{lambek61}, allowing for reasoning with (binary-branching) trees. These days, said calculi are referred to by the associative and non-associative \textit{Lambek calculus} respectively (\textbf{L}/\textbf{NL}).

Attempts at lifting (\textbf{N})\textbf{L}'s intuitionistic bias towards sequents with single conclusions first culminated in \textit{bilinear logic} and \textit{cyclic linear logic} (\cite{abrusci02}), both conservative extensions of \textbf{L} (hence associative). More recently, two proposals were put forward within the non-associative setting: De Groote and Lamarche's \textit{classical} \textbf{NL} (\textbf{CNL}, \cite{degrootelamarche}) and Moortgat's \textit{Lambek-Grishin calculus} (\textbf{LG}, \cite{moortgat09}). While the latter extends \textbf{NL}'s logical vocabulary by a coresiduated family of connectives, consisting of a par and coimplications, the former instead takes as primitives the tensor and par, augmented by classical linear negation, with the (co)implications reduced to defined operations.

Proof-theoretic investigations into \textbf{CNL} and \textbf{LG} have thus far concentrated on Cut-free sequent calculi and proof nets (\cite{moot07}). The current work contributes a proof of the \textit{focalization} property. First observed by Andreoli within full linear logic (\cite{andreoli92}), the latter is a normalization result for Cut-free sequent derivations, identifying (to a large extent) those that differ only by trivial rule permutations. Roughly, a focused derivation is one where invertible (logical) inferences are always applied as soon as possible, and, for every application of a non-invertible inference, the active formulas appearing in its premises are principal.

Our method of proof avoids the (local) rewriting of derivations found in the usual syntactic demonstrations of Cut elimination (\cite{laurent04}), proceeding instead by model-theoretic means similarly to \cite{okada02} and \cite{herbelinlee09}. More specifically, we propose a phase semantics (\cite{okada02}) for \textbf{CNL} and \textbf{LG}, seeking to define syntactic models wherein every `truth' has a focused derivation. The desired result then follows by composition with soundness of unfocused provability. Like in \cite{okada02}, our treatment is uniform in the sense that we demonstrate focalization of both \textbf{CNL} and \textbf{LG} by a single proof, applying as well to any further extensions by structural rules.

We proceed as follows. $\S$2 recapitulates material on \textbf{CNL} and \textbf{LG}, its modest contribution being a uniform one-sided sequent presentation, expressing their differences by a number of structural rules. The definition of focused derivations is taken up in $\S$3. As an intermediate step, we introduce \textit{polarized} adaptations of \textbf{CNL} and \textbf{LG}, recording alternations between chains of invertible and non-invertible inferences within the logical vocabulary through Girard's \textit{shift} connectives (\cite{girard00}). Phase spaces are defined in $\S$4, together with proofs of soundness and completeness. The former result is stated for polarized sequent derivations with Cut, whereas the latter concerns focused provability, thus obtaining normalization by composition. Figure \ref{summ_results} summarizes our results.\footnote{Throughout this article, in referring to a previously stated definition (lemma, theorem, corollary, figure) $n$, we often use the abbreviation D.$n$ (L.$n$, T.$n$, C.$n$, F.$n$).}

\begin{figure}
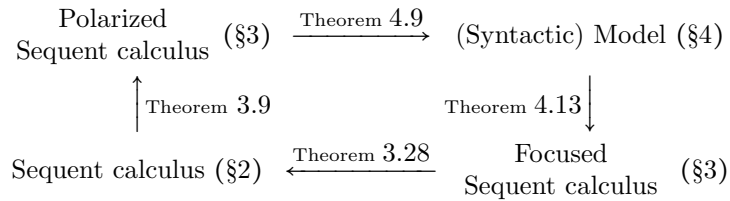


$$ \begin{CD}
\begin{array}{c}\textrm{Polarized} \\ \textrm{Sequent calculus}\end{array} (\S 3) @>\textrm{Theorem } \ref{soundness_sem}>> \textrm{(Syntactic) Model } (\S 4) \\
@AA\textrm{Theorem} \ \ref{unpol_weak}A  @V\textrm{Theorem } \ref{completeness_sem}VV \\
\textrm{Sequent calculus} \ (\S 2) @<\textrm{Theorem } \ref{strong_unpol}<< \begin{array}{c}\textrm{Focused} \\ \textrm{Sequent calculus}\end{array} \ (\S 3)
\end{CD} $$

\caption{Summary of results: composition of arrows yields focalization.}
\label{summ_results}
\end{figure}

\section{Symmetry in non-associativity}
The current section serves as a discussion of sequent presentations for \textbf{LG} and \textbf{CNL}. To simplify matters in the sequel, we aim at a `unified' treatment. Thus, we proceed from a shared logical vocabulary, detailed in $\S$2.1. 
$\S$2.2 first treats a sequent presentation for \textbf{LG}, which is essentially a one-sided play on Moortgat's display calculus (\cite{moortgat09}). By adding structural rules, we obtain in $\S$2.3 a presentation of \textbf{CNL} equivalent (provability-wise) with that of \cite{degrootelamarche}.

\subsection{Logical vocabulary}
Formulas are defined the same for \textbf{LG} and \textbf{CNL}, although some will turn out interderivable in the latter case. 
\begin{defs} Connectives come in dual pairs: a non-commutative, non-associative multiplicative disjunction (\textit{par}) accompanies a similarly resource-sensitive multiplicative conjunction (\textit{tensor}), while direction-sensitive divisions (\textit{implications}) are complemented by left- and right subtractions (\textit{coimplications}). Expanding upon \cite{moortgat09} and \cite{degrootelamarche}, we also take additives into consideration:
\begin{center}
\begin{tabular}{rcll}
$A,B$ & $::=$ & $p$ $|$ $\bar{p}$ & (Positive vs. negative atoms) \\
 & $|$ & $(A\lgtimes B)$ $|$ $(B\lgplus A)$ & (Tensor vs. par) \\
 & $|$ & $(A/B)$ $|$ $(B\obs A)$ & (Right implication vs. left coimplication) \\
 & $|$ & $(B\bs A)$ $|$ $(A\os B)$ & (Left implication vs. right coimplication) \\
 & $|$ & $(A\land B)$ $|$ $(A\lor B)$ & (Additive conjunction and disjunction)
\end{tabular}
\end{center}
\end{defs}
\noindent Note that, for each atomic formula $p$, we also assume to have at our disposal its negation $\bar{p}$. While conflicting with Moortgat's account of \textbf{LG}, 
we will find the choice of (positive/negative) polarity for atoms to influence the shape of focused proofs found in $\S$3 (cf. Example \ref{cnlg_strongfoc_exa}).

\begin{defs} Made explicit, the duality present in the above discussion is realized as a classical linear negation ($\cdot^{\bot}$):
\begin{center}
\begin{tabular}{rclcrcl}
$p^{\bot}$ & $\defeq$ & $\bar{p}$ & & $\bar{p}^{\bot}$ & $\defeq$ & $p$ \\
$(A\lgtimes B)^{\bot}$ & $\defeq$ & $B^{\bot}\lgplus A^{\bot}$ & & $(A\lgplus B)^{\bot}$ & $\defeq$ & $B^{\bot}\lgtimes A^{\bot}$ \\
$(A/B)^{\bot}$ & $\defeq$ & $B^{\bot}\obs A^{\bot}$ & & $(B\obs A)^{\bot}$ & $\defeq$ & $A^{\bot}/B^{\bot}$ \\
$(B\bs A)^{\bot}$ & $\defeq$ & $A^{\bot}\os B^{\bot}$ & & $(A\os B)^{\bot}$ & $\defeq$ & $B^{\bot}\bs A^{\bot}$ \\
$(A\land B)^{\bot}$ & $\defeq$ & $B^{\bot}\lor A^{\bot}$ & & $(A\lor B)^{\bot}$ & $\defeq$ & $B^{\bot}\land A^{\bot}$
\end{tabular}
\end{center}
\end{defs}
\noindent Indeed, involutivity ($A^{\bot\bot}=A$) is easily established. Note that we have not identified $A/B$ with $A\lgplus B^{\bot}$, and similarly for the other (co)implications. Instead, said formulas will turn out interderivable in \textbf{CNL}, though not in \textbf{LG}.

\subsection{\textbf{LG} sequentialized}
Our presentation of \textbf{LG} proceeds in two steps. First, in $\S$2.2.1, we briefly recapitulate the algebraic account, adapted to the extended logical vocabulary. $\S$2.2.2 introduces our sequent calculus and justifies its rules by dual translations into algebraic derivations.

\subsubsection{The minimal logic of (co)residuation}
Defined algebraically, derivability in \textbf{LG} characterizes inequalities $\lgalg{A}{B}$. Figure \ref{lg_algebraic} presents an axiomatization: $\leq$ satisfies the preorder laws I(dentity) and T(ransitivity), $\land$/$\lor$ are realized as meets/joins, $/,\bs$ act as residuals to $\lgtimes$ $(r)$, while, finally, the connectives $\lgplus$, $\os$ and $\obs$ listen to the dual \textit{coresiduation} laws $(cr)$, obtained by reversing $\leq$.
Note the complete absence of axioms licensing the structural rules of sequent calculi: the tensor and par fail to satisfy weakening (e.g., $\lgalg{A\lgtimes B}{A}$ or $\lgalg{A}{A\lgplus B}$) and contraction ($\lgalg{A}{A\lgtimes A}$ or $\lgalg{A\lgplus A}{A}$), nor are they associative or commutative.

\begin{figure}
\begin{center}
\begin{tabular}{c}
Preorder laws \\ \\
\begin{tabular}{ccc}
$\infer[I]{\lgalg{A}{A}}{}$ & & $\infer[T]{\lgalg{A}{C}}{\lgalg{A}{B} & \lgalg{B}{C}}$ 
\end{tabular} \\ \\
(Co)residuation \\ \\
\begin{tabular}{ccccccc}
$\infer=[r]{\lgalg{A}{C/B}}{\lgalg{A\lgtimes B}{C}}$ & & $\infer=[r]{\lgalg{B}{A\bs C}}{\lgalg{A\lgtimes B}{C}}$ & &
$\infer=[cr]{\lgalg{A\obs C}{B}}{\lgalg{C}{A\lgplus B}}$ & & $\infer=[cr]{\lgalg{C\os B}{A}}{\lgalg{C}{A\lgplus B}}$
\end{tabular} \\ \\
Meets/Joins \\ \\
\begin{tabular}{ccccc}
$\infer=[\land]{\lgalg{A}{B\land C}}{\lgalg{A}{B} & \lgalg{A}{C}}$ & & 
$\infer=[\lor]{\lgalg{A\lor B}{C}}{\lgalg{A}{C} & \lgalg{B}{C}}$
\end{tabular}
\end{tabular}
\end{center}
\caption{\textbf{LG} characterized algebraically. Double horizontal inference lines indicate interchangeability of premises and conclusion, where multiple conclusions (in the rules for meets/joins) are to be interpreted conjunctively.}
\label{lg_algebraic}
\end{figure}
\begin{figure}[h]
\begin{center}
\begin{tabular}{ccc}
$\infer[\grisha{I}{1}]{\lgalg{C\lgtimes A}{D\lgplus B}}{\lgalg{A\os B}{C\bs D}}$ & &
$\infer[\grisha{I}{2}]{\lgalg{B\lgtimes D}{A\lgplus C}}{\lgalg{A\obs B}{C/D}}$ \\ \\

$\infer[\grishc{I}{1}]{\lgalg{C\lgtimes B}{A\lgplus D}}{\lgalg{A\obs B}{C\bs D}}$ & &
$\infer[\grishc{I}{2}]{\lgalg{A\lgtimes D}{C\lgplus B}}{\lgalg{A\os B}{C/D}}$ 
%
%
\end{tabular}
\end{center}
\caption{Linear distributivity laws for the Lambek-Grishin calculus.}
\label{lgalg_struc}
\end{figure}

\begin{lem}\label{derived_mon}
Derived rules of inference include the following monotonicity laws:
\begin{center}
\begin{tabular}{ccc}
$\infer[m]{\begin{array}{c}\lgalg{A\lgtimes C}{B\lgtimes D} \\ \lgalg{A/D}{B/C} \\ \lgalg{D\bs A}{C\bs B}\end{array}}{\lgalg{A}{B} & \lgalg{C}{D}}$ & \ \ &
$\infer[m]{\begin{array}{c}\lgalg{A\lgplus C}{B\lgplus D} \\ \lgalg{A\os D}{B\os C} \\ \lgalg{D\obs A}{C\obs B}\end{array}}{\lgalg{A}{B} & \lgalg{C}{D}}$
\end{tabular}
\end{center}
\end{lem}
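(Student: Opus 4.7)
The plan is to derive each of the six laws by chaining $I$, $T$, and the (co)residuation axioms of Figure~\ref{lg_algebraic}. I would prove the tensor and par cases first, and then reuse them when handling the residuals (the (co)implications).

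For $\lgalg{A\lgtimes C}{B\lgtimes D}$: begin from $\lgalg{B\lgtimes D}{B\lgtimes D}$ (by $I$), apply the left half of $(r)$ upwards to obtain $\lgalg{B}{(B\lgtimes D)/D}$, compose with $\lgalg{A}{B}$ via $T$, and reverse $(r)$ to conclude $\lgalg{A\lgtimes D}{B\lgtimes D}$. A symmetric argument via the right half of $(r)$, using $\lgalg{C}{D}$, then yields $\lgalg{A\lgtimes C}{B\lgtimes D}$. Par monotonicity is entirely dual: peel off the summands by applying $(cr)$ in place of $(r)$ (all inequalities flowing the other way).

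For $\lgalg{A/D}{B/C}$: starting from $\lgalg{A/D}{A/D}$, one application of $(r)$ yields $\lgalg{(A/D)\lgtimes D}{A}$; composing with $\lgalg{A}{B}$ via $T$ gives $\lgalg{(A/D)\lgtimes D}{B}$; the already-derived tensor monotonicity, applied to $\lgalg{A/D}{A/D}$ and $\lgalg{C}{D}$, produces $\lgalg{(A/D)\lgtimes C}{(A/D)\lgtimes D}$; one further $T$ then yields $\lgalg{(A/D)\lgtimes C}{B}$, from which $(r)$ delivers $\lgalg{A/D}{B/C}$. The $\bs$ case is obtained by using the other component of $(r)$, and the $\os$ and $\obs$ cases proceed analogously, using $(cr)$ in place of $(r)$ and par monotonicity in place of tensor monotonicity.

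There is no real obstacle here, just combinatorial bookkeeping: tracking which half of $(r)$ or $(cr)$ moves which factor to which side, and in which direction the inequality flows. A useful sanity check is that each derivation uses exactly one (co)residuation step per hypothesis $\lgalg{A}{B}$ and $\lgalg{C}{D}$, together with a single invocation of the previously established tensor or par law in the residual cases.
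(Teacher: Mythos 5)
Your proof is correct and takes essentially the same route as the paper: the paper derives the tensor case by exactly this interleaving of $I$, $T$ and residuation round-trips (merely replacing the two arguments in the opposite order), and dismisses the remaining five laws as similar, which your explicit treatment of the (co)implication cases via the already-derived tensor/par laws fills in consistently.
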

\begin{proof}
As a typical case, we take the rule for $\lgtimes$, derived thus:
\begin{center}
$\infer[r]{\lgalg{A\lgtimes C}{B\lgtimes D}}{
\infer[T]{\lgalg{A}{(B\lgtimes D)/C}}{
\lgalg{A}{B} &
\infer[r]{\lgalg{B}{(B\lgtimes D)/C}}{
\infer[r]{\lgalg{B\lgtimes C}{B\lgtimes D}}{
\infer[T]{\lgalg{C}{B\bs(B\lgtimes D)}}{
\lgalg{C}{D} &
\infer[r]{\lgalg{D}{B\bs(B\lgtimes D)}}{
\infer[I]{\lgalg{B\lgtimes D}{B\lgtimes D}}{}}}}}}}$
\end{center}
\end{proof}
\noindent Grishin (\cite{grishin}) has considered possible extensions of \textbf{LG} by groups of axioms establishing interaction between $\{\lgtimes,/,\bs\}$ and $\{\lgplus,\obs,\os\}$, while remaining conservative over the two families separately. To illustrate the applicability of our method to Grishin's studies, we single out the laws for linear distributivity of $\lgtimes$ over $\lgplus$ (\cite{cockettseely}), presented in rule format in Figure \ref{lgalg_struc}.

\begin{lem}
The following inequalities are derivable in the presence of the rules of Figure \ref{lgalg_struc}.
\begin{center}
\begin{tabular}{lcl}
$\lgalg{A\lgtimes(B\lgplus C)}{(A\lgtimes B)\lgplus C}$ & &
$\lgalg{(A\lgplus B)\lgtimes C}{A\lgplus(B\lgtimes C)}$ \\
$\lgalg{A\lgtimes(B\lgplus C)}{B\lgplus(A\lgtimes C)}$ & &
$\lgalg{(A\lgplus B)\lgtimes C}{(A\lgtimes C)\lgplus B}$
\end{tabular}
\end{center}
\end{lem}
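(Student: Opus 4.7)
The plan is to derive each of the four inequalities by a single application of one of the linear distributivity rules $\grisha{I}{1}, \grisha{I}{2}, \grishc{I}{1}, \grishc{I}{2}$ from F.\ref{lgalg_struc}. Each such rule concludes with a shape $X\lgtimes Y\leq Z\lgplus W$ which, under a suitable substitution for the rule's schematic letters, matches one of the four targets verbatim. The premise is in each case an inequality between a coimplication and an implication; I would discharge it by $T$, interposing a small intermediate formula so that one side is handled by coresiduation $(cr)$ and the other by residuation $(r)$, with both halves collapsing to the identity axiom $I$.

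Concretely, for $\lgalg{A\lgtimes(B\lgplus C)}{(A\lgtimes B)\lgplus C}$ I would use $\grisha{I}{1}$, instantiating its rule variables so that the premise becomes $\lgalg{(B\lgplus C)\os C}{A\bs(A\lgtimes B)}$; factoring through the intermediate formula $B$, the left half is $(cr)$-equivalent to $\lgalg{B\lgplus C}{B\lgplus C}$ and the right half is $(r)$-equivalent to $\lgalg{A\lgtimes B}{A\lgtimes B}$, so both reduce to $I$. The remaining three inequalities follow the same recipe, with $\grisha{I}{2}$ handling $\lgalg{(A\lgplus B)\lgtimes C}{A\lgplus(B\lgtimes C)}$ (factoring the premise $\lgalg{A\obs(A\lgplus B)}{(B\lgtimes C)/C}$ through $B$), $\grishc{I}{1}$ handling $\lgalg{A\lgtimes(B\lgplus C)}{B\lgplus(A\lgtimes C)}$ (factoring $\lgalg{B\obs(B\lgplus C)}{A\bs(A\lgtimes C)}$ through $C$), and $\grishc{I}{2}$ handling $\lgalg{(A\lgplus B)\lgtimes C}{(A\lgtimes C)\lgplus B}$ (factoring $\lgalg{(A\lgplus B)\os B}{(A\lgtimes C)/C}$ through $A$). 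In each case the factored sub-inequalities collapse to instances of $I$ under a single step of $(r)$ resp.\ $(cr)$.

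The only point needing care is matching polarities: the four rules divide into the two ``straight'' cases $\grisha{I}{1}, \grisha{I}{2}$, in which the outer operand stays on the same side when passing from premise to conclusion, and the two ``crossed'' cases $\grishc{I}{1}, \grishc{I}{2}$, in which it swaps; this choice dictates which of $\os,\obs$ and $/,\bs$ appear in the premise, and hence which intermediate formula makes the factored halves degenerate to reflexivity. No deeper obstacle is anticipated; the derivations are purely schematic bookkeeping once the pairing of rule with goal has been fixed.
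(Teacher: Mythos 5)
Your proposal is correct and matches the paper's proof exactly: the paper derives the typical case $\lgalg{A\lgtimes(B\lgplus C)}{(A\lgtimes B)\lgplus C}$ by a single application of $\grisha{I}{1}$ whose premise $\lgalg{(B\lgplus C)\os C}{A\bs(A\lgtimes B)}$ is discharged by $T$ through the intermediate formula $B$, with the two halves reducing to $I$ via $(cr)$ and $(r)$ respectively, precisely as you describe. Your instantiations for the remaining three cases are the correct ones and all check out.
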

\begin{proof}
As a typical case, we check $\lgalg{A\lgtimes(B\lgplus C)}{(A\lgtimes B)\lgplus C}$.
\begin{center}
$\infer[\grisha{I}{1}]{\lgalg{A\lgtimes(B\lgplus C)}{(A\lgtimes B)\lgplus C}}{
\infer[T]{\lgalg{(B\lgplus C)\os C}{A\bs(A\lgtimes B)}}{
\infer[cr]{\lgalg{(B\lgplus C)\os C}{B}}{\infer[I]{\lgalg{B\lgplus C}{B\lgplus C}}{}} &
\infer[r]{\lgalg{B}{A\bs(A\lgtimes B)}}{\infer[I]{\lgalg{A\lgtimes B}{A\lgtimes B}}{}}}}$
\end{center}
\end{proof}
\noindent As stressed by Moortgat, one may argue there to be different conceptions of \calclg, depending on which of Grishin's groups are adopted. Therefore, we will henceforth refer by (the algebraic presentation of) \calclg$_{\emptyset}$ to the calculus defined by the inference rules of Figure \ref{lg_algebraic} only, while \calclg$_I$ denotes the extension by linear distributivity (following Moortgat's notation). On those occassions where the difference is inessential, we keep using \calclg. For a more thorough exploration of the wider landscape of Lambek-Grishin calculi, the reader is referred to \cite{moortgat09}. We note that the methods used in this article are general enough so as to be applicable to said alternatives. We conclude with the realization of $\cdot^{\bot}$ at the level of derivations, again demonstrable through a straightforward induction:
\begin{lem}
For any $A,B$, $\lgalg{A}{B}$ iff $\lgalg{B^{\bot}}{A^{\bot}}$.
\end{lem}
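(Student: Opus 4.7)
The plan is a straightforward induction on the derivation of $\lgalg{A}{B}$ in the algebraic calculus of Figure \ref{lg_algebraic} (extended, if applicable, by Figure \ref{lgalg_struc}), exploiting involutivity of $\cdot^{\bot}$. Since $A^{\bot\bot}=A$, it suffices to establish just one direction: if $\lgalg{A}{B}$ then $\lgalg{B^{\bot}}{A^{\bot}}$. The converse then follows by instantiating this implication at $\lgalg{B^{\bot}}{A^{\bot}}$ and rewriting $A^{\bot\bot}$, $B^{\bot\bot}$.

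For the base cases, identity $\lgalg{A}{A}$ is immediate. Transitivity is handled by applying the induction hypothesis to both premises (which reverses their order) and then reapplying $T$. The interesting cases are the (co)residuation and meet/join rules, and they are all handled by reading off the definition of $\cdot^{\bot}$ and noticing that it sends each primitive connective to the dual connective appearing on the opposite side of the involution table. For instance, a use of the right residuation rule passing from $\lgalg{A\lgtimes B}{C}$ to $\lgalg{A}{C/B}$ dualizes, using $(A\lgtimes B)^{\bot}=B^{\bot}\lgplus A^{\bot}$ and $(C/B)^{\bot}=B^{\bot}\obs C^{\bot}$, to a step from $\lgalg{C^{\bot}}{B^{\bot}\lgplus A^{\bot}}$ to $\lgalg{B^{\bot}\obs C^{\bot}}{A^{\bot}}$, which is exactly an instance of the coresiduation rule $(cr)$. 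The other residuation rule pairs dually with the other coresiduation rule in the same manner, and the rules for meets and joins swap with one another in view of $(A\land B)^{\bot}=B^{\bot}\lor A^{\bot}$ and dually.

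If one is working in $\calclg_I$, one further has to check Grishin's linear distributivity rules of Figure \ref{lgalg_struc}. Here one verifies that $\grisha{I}{1}$ and $\grisha{I}{2}$ are each other's duals, as are $\grishc{I}{1}$ and $\grishc{I}{2}$: computing, e.g., the dual of the conclusion $\lgalg{C\lgtimes A}{D\lgplus B}$ of $\grisha{I}{1}$ gives $\lgalg{B^{\bot}\lgtimes D^{\bot}}{A^{\bot}\lgplus C^{\bot}}$, which matches the conclusion of $\grisha{I}{2}$ under the substitution $A\mapsto B^{\bot}$, $B\mapsto D^{\bot}$, $C\mapsto A^{\bot}$, $D\mapsto C^{\bot}$, and similarly for the premises.

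There is no real obstacle here beyond bookkeeping; the only thing to watch is that the definition of $\cdot^{\bot}$ systematically swaps the order of the two arguments of each binary connective, so dualizing a rule produces another rule of the calculus exactly because the same swap has been built into the residuation and coresiduation schemes (and into Grishin's rules). Once the case analysis is laid out, the induction closes immediately.
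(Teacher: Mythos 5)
Your overall strategy coincides with the paper's, which offers no details beyond ``demonstrable through a straightforward induction'': induct on the derivation, use involutivity to get one direction for free, and check that each rule dualizes to a rule of the calculus. The base cases, transitivity, (co)residuation and meet/join cases are all handled correctly.

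The Grishin case, however, contains a concrete error. You claim $\grisha{I}{1}$ and $\grisha{I}{2}$ are each other's duals, but they are not: each is \emph{self}-dual. Dualizing the premise $\lgalg{A\os B}{C\bs D}$ of $\grisha{I}{1}$ gives $\lgalg{D^{\bot}\os C^{\bot}}{B^{\bot}\bs A^{\bot}}$, which still pairs $\os$ with $\bs$ and hence is an instance of $\grisha{I}{1}$'s own premise (under $A\mapsto D^{\bot}$, $B\mapsto C^{\bot}$, $C\mapsto B^{\bot}$, $D\mapsto A^{\bot}$); applying $\grisha{I}{1}$ then yields exactly $\lgalg{B^{\bot}\lgtimes D^{\bot}}{A^{\bot}\lgplus C^{\bot}}$, the dual of its conclusion. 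By contrast, $\grisha{I}{2}$'s premise pairs $\obs$ with $/$, so it cannot absorb the dual of $\grisha{I}{1}$'s premise. (Your explicit substitution also does not do what you say: applying $A\mapsto B^{\bot}$, $B\mapsto D^{\bot}$, $C\mapsto A^{\bot}$, $D\mapsto C^{\bot}$ to $\lgalg{B\lgtimes D}{A\lgplus C}$ produces $\lgalg{D^{\bot}\lgtimes C^{\bot}}{B^{\bot}\lgplus A^{\bot}}$, not $\lgalg{B^{\bot}\lgtimes D^{\bot}}{A^{\bot}\lgplus C^{\bot}}$.) The pairing you assert for $\grishc{I}{1}$ and $\grishc{I}{2}$ is correct --- those two do swap under dualization, since $\obs/\bs$ goes to $\os$ together with $/$. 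The induction still closes once the pairing is corrected, so this is a repairable slip rather than a failure of the method, but as written the verification of the $\grisha{I}{1/2}$ cases does not go through.
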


\subsubsection{One-sided sequents}
As shown by Moortgat (\cite{moortgat09}), \textbf{LG} has a Cut-free display calculus. Like in ordinary two-sided calculi, connectives are introduced as hypotheses or conclusions. To guarantee Cut-admissibility, however, structural commas no longer associate exclusively to (multiplicative) conjunctions and disjunctions, but may also appear as counterparts for the (co)implications. The current section presents a one-sided retelling of Moortgat's display calculus, in the sense that, for any $A,B$, the inequalities $\lgalg{A}{B}$ and $\lgalg{B^{\bot}}{A^{\bot}}$ will have the same sequent counterpart.

\begin{defs}\label{structures_displ}
Proofs establish
\textit{presentations}, being pairs of \textit{structures} $\Gamma,\Delta$:
\begin{center}
\begin{tabular}{rclr}
$\Gamma,\Delta,\Theta$ & $::=$ & $A$ $|$ $(\stimes{\Gamma}{\Delta})$ $|$ $(\sos{\Gamma}{\Delta})$ $|$ $(\sobs{\Delta}{\Gamma})$ & \ \ Structures \\
$\presa$ & $::=$ & $\pres{\Gamma}{\Delta}$ & \ \ Presentations
\end{tabular}
\end{center}
\end{defs}
\noindent Terminology is adapted from Andreoli (\cite{andreoli04}), who distinguished between \textit{presentations} and \textit{varieties}. The intuition, further pursued below, is that presentations are closed under reversible structural rules allowing any of its formulas to be displayed as the whole of one of its components. The equivalence classes of presentations generated by said rules are the (freely generated) varieties, presentations thus `presenting' a variety from the point of view of one of its substructures (particularly formulas). We refrain from explicating the latter concept, however, as the focused derivations defined in $\S$3.3 already compile away the reversible structural rules. We note Lamarche (\cite{lamarche03}) uses `terms' and `reversible terms' for denoting syntactic objects similar to our structures and Andreoli's varieties.

\begin{defs} Structures $\Gamma$ interpret by pairs of dual formulas $\Gamma^{\inp}$, $\Gamma^{\outp}$:
\begin{center}
\begin{tabular}{rclcrcl}
$A^{\inp}$ & $\defeq$ & $A$ & & $A^{\outp}$ & $\defeq$ & $A^{\bot}$ \\
$(\stimes{\Gamma}{\Delta})^{\inp}$ & $\defeq$ & $\Gamma^{\inp}\lgtimes\Delta^{\inp}$ & & $(\stimes{\Gamma}{\Delta})^{\outp}$ & $\defeq$ & $\Delta^{\outp}\lgplus\Gamma^{\outp}$ \\
$(\sobs{\Delta}{\Gamma})^{\inp}$ & $\defeq$ & $\Delta^{\outp}\obs\Gamma^{\inp}$ & & $(\sobs{\Delta}{\Gamma})^{\outp}$ & $\defeq$ & $\Gamma^{\outp}/\Delta^{\inp}$ \\
$(\sos{\Gamma}{\Delta})^{\inp}$ & $\defeq$ & $\Gamma^{\inp}\os\Delta^{\outp}$ & & $(\sos{\Gamma}{\Delta})^{\inp}$ & $\defeq$ & $\Delta^{\inp}\bs\Gamma^{\outp}$
\end{tabular}
\end{center}
\end{defs}
\noindent One easily shows $\Gamma^{\inp}=\Gamma^{\outp\bot}$ and $\Gamma^{\outp}=\Gamma^{\inp\bot}$. The ambiguity extends to the level of derivability, where we will find a witness for the provability of $\pres{\Gamma}{\Delta}$ to be realizable into algebraic derivations of both $\lgalg{\Gamma^{\inp}}{\Delta^{\outp}}$ and $\lgalg{\Delta^{\inp}}{\Gamma^{\outp}}$. Conversely, inequalities $\lgalg{A}{B}$ and $\lgalg{B^{\bot}}{A^{\bot}}$ are both presented by $\pres{A}{B^{\bot}}$.

\begin{defs}
Figure \ref{neutraldispl} defines derivability judgements $\presa\vdash$ for \textbf{LG}, written as left-sided sequents.\footnote{Our (non-conventional) preference for left-sided sequents over right-sided ones is motivated by the former's transparent correspondence with intuitionistic sequents, constituting the target of double negation translations (\cite{moortgat09}).} Next to the familiar axioms and Cut, we have the \textit{display postulates} $(dp)$, ensuring, for any presentation $\presa$ and an occurrence therein of a formula $A$, the (unique) existence of $\svarb$ s.t. $\presa$ may be rewritten into $\presa'=\pres{\svarb}{A}$. We say $A$ is \textit{displayed} in $\presa'$. Presentations $\presa$ and $\presa'$ interderivable through display postulates are said to be \textit{display equivalent}, a situation often abbreviated
\begin{center}
$\infer=[Dp]{\presa\vdash}{\presa'\vdash}$
\end{center}
\end{defs}

\begin{figure}
\begin{center}
\begin{tabular}{ccc}
 & $\infer[Ax]{\cnlgseq{A}{A^{\bot}}}{}$ &
$\infer[Cut]{\cnlgseq{\Gamma}{\Delta}}{\cnlgseq{\Delta}{A} & \cnlgseq{\Gamma}{A^{\bot}}}$ \\ \\
$\infer[dp]{\cnlgseq{\Gamma}{\Delta}}{\cnlgseq{\Delta}{\Gamma}}$ &
$\infer=[dp]{\cnlgseq{\Gamma}{\sobs{\Delta}{\Theta}}}{\cnlgseq{\stimes{\Gamma}{\Delta}}{\Theta}}$ &
$\infer=[dp]{\cnlgseq{\sos{\Gamma}{\Delta}}{\Theta}}{\cnlgseq{\Gamma}{\stimes{\Delta}{\Theta}}}$ \\ \\
$\infer[\lgtimes]{\cnlgseq{\Gamma}{A\lgtimes B}}{\cnlgseq{\Gamma}{\stimes{A}{B}}}$ &
$\infer[\obs]{\cnlgseq{\Gamma}{B\obs A}}{\cnlgseq{\Gamma}{\sobs{B^{\bot}}{A}}}$ &
$\infer[\os]{\cnlgseq{\Gamma}{A\os B}}{\cnlgseq{\Gamma}{\sos{A}{B^{\bot}}}}$ \\ \\
$\infer[\lgplus]{\cnlgseq{\stimes{\Delta}{\Gamma}}{A\lgplus B}}{\cnlgseq{\Gamma}{A} & \cnlgseq{\Delta}{B}}$ &
$\infer[/]{\cnlgseq{\sobs{\Delta}{\Gamma}}{A/B}}{\cnlgseq{\Delta}{B^{\bot}} & \cnlgseq{\Gamma}{A}}$ &
$\infer[\bs]{\cnlgseq{\sos{\Gamma}{\Delta}}{B\bs A}}{\cnlgseq{\Delta}{B^{\bot}} & \cnlgseq{\Gamma}{A}}$ \\ \\
$\infer[\land^l]{\cnlgseq{\Gamma}{A\land B}}{\cnlgseq{\Gamma}{A}}$ &
$\infer[\land^r]{\cnlgseq{\Gamma}{A\land B}}{\cnlgseq{\Gamma}{B}}$ &
$\infer[\lor]{\cnlgseq{\Gamma}{A\lor B}}{\cnlgseq{\Gamma}{A} & \cnlgseq{\Gamma}{B}}$
\end{tabular}
\end{center}
\caption{A left-sided sequent calculus for \textbf{LG}: base logic}
\label{neutraldispl}
\end{figure}

\begin{figure}
\begin{center}
\begin{tabular}{ccc}
$\infer[\grisha{I}{1}]{\cnlgseq{\stimes{\Gamma_1}{\Gamma_2}}{\stimes{\Delta_2}{\Delta_1}}}{
\cnlgseq{\sos{\Gamma_2}{\Delta_2}}{\sos{\Delta_1}{\Gamma_1}}}$ &
$\infer[\grisha{I}{2}]{\cnlgseq{\stimes{\Gamma_1}{\Gamma_2}}{\stimes{\Delta_2}{\Delta_1}}}{
\cnlgseq{\sobs{\Delta_1}{\Gamma_1}}{\sobs{\Gamma_2}{\Delta_2}}}$ &
$\infer[\grishc{I}{}]{\cnlgseq{\stimes{\Gamma_1}{\Gamma_2}}{\stimes{\Delta_2}{\Delta_1}}}{
\cnlgseq{\sos{\Delta_2}{\Gamma_1}}{\sobs{\Delta_1}{\Gamma_2}}}$

\end{tabular}
\end{center}
\caption{Structural rules: Linear Distributivity}
\label{neutraldispl_grishin}
\end{figure}

\begin{defs}
We fix terminology for referring to occurrences of formulas in (instances of) logical rules. Given one of the form
\begin{center}
$\infer[R]{\cnlgseq{\svara}{A}}{\cnlgseq{\svara_1}{A_1} & \dots & \cnlgseq{\svara_n}{A_n}}$
\end{center}
we call $A$ the \textit{main} or \textit{principal} formula of $R$, and the subformulas $A_1,\dots,A_n$ of $A$ occurring in the premises the \textit{active} formulas of $R$. 
We also say $A_i$ ($1\leq i\leq n$) is principal if main in the rule deriving the corresponding premise.
\end{defs}

\begin{vb}\label{example_deriv_seqlg}
Figure \ref{example_deriv} witnesses $\cnlgseq{\stimes{(\stimes{p/q}{q})}{p\bs r}}{\bar{r}}$ and $\cnlgseq{\stimes{p/(q\bs p)}{(p/(q\bs p))\bs p}}{\bar{p}}$ by two derivations each, only one of which will be preserved by focused proof search in the former case. In contrast, the other two derivations employ different axiom matchings, in the precise sense that they generalize to distinct presentations $\pres{\stimes{a/(d\bs b)}{(c/(d\bs c))\bs b}}{\bar{a}}$ and $\pres{\stimes{a/(c\bs b)}{(a/(c\bs b))\bs d}}{\bar{d}}$, and therefore remain distinct under focalization.\footnote{In general, any two derivations of the same sequent employing different axiom matchings will likewise remain distinct under focusing, although the converse need not always hold. In particular, focalization still identifies less derivations than proof nets do.}
\end{vb}

\noindent We proceed to show soundness and completeness w.r.t. algebraic derivability.

\begin{figure}
\begin{center}
\begin{tabular}{c}
%

\begin{tabular}{c}$\infer=[Dp]{\cnlgseq{\stimes{p/(q\bs p)}{(p/(q\bs p))\bs p}}{\bar{p}}}{
\infer[/]{\cnlgseq{\sobs{(p/(q\bs p))\bs p}{\bar{p}}}{p/(q\bs p)}}{
\infer[\os]{\cnlgseq{(p/(q\bs p))\bs p}{\bar{p}\os\bar{q}}}{
\infer=[Dp]{\cnlgseq{(p/(q\bs p))\bs p}{\sos{\bar{p}}{q}}}{
\infer[\bs]{\cnlgseq{\sos{\bar{p}}{q}}{(p/(q\bs p))\bs p}}{
\infer=[Dp]{\cnlgseq{(\bar{p}\os\bar{q})\obs\bar{p}}{q}}{
\infer[\obs]{\cnlgseq{q}{(\bar{p}\os\bar{q})\obs\bar{p}}}{
\infer=[Dp]{\cnlgseq{q}{\sobs{q\bs p}{\bar{p}}}}{
\infer[\bs]{\cnlgseq{\sos{\bar{p}}{q}}{q\bs p}}{
\infer[I]{\cnlgseq{q}{\bar{q}}}{} &
\infer[I]{\cnlgseq{\bar{p}}{p}}{}}}}} &
\infer[I]{\cnlgseq{\bar{p}}{p}}{}}}} &
\infer[I]{\cnlgseq{\bar{p}}{p}}{}}}$ \\ \\

$\infer=[Dp]{\cnlgseq{\stimes{p/(q\bs p)}{(p/(q\bs p))\bs p}}{\bar{p}}}{
\infer[\bs]{\cnlgseq{\sos{\bar{p}}{p/(q\bs p)}}{(p/(q\bs p))\bs p}}{
\infer[I]{\cnlgseq{p/(q\bs p)}{(\bar{p}\os\bar{q})\obs\bar{p}}}{} &
\infer[I]{\cnlgseq{\bar{p}}{p}}{}}}$
\end{tabular}

\begin{tabular}{ccc}
$\infer=[Dp]{\cnlgseq{\stimes{(\stimes{p/q}{q})}{p\bs r}}{\bar{r}}}{
\infer[/]{\cnlgseq{\sobs{q}{(\sobs{p\bs r}{\bar{r}})}}{p/q}}{
\infer[I]{\cnlgseq{q}{\bar{q}}}{} &
\infer=[Dp]{\cnlgseq{\sobs{p\bs r}{\bar{r}}}{p}}{
\infer[\bs]{\cnlgseq{\sos{\bar{r}}{p}}{p\bs r}}{
\infer[I]{\cnlgseq{p}{\bar{p}}}{} &
\infer[I]{\cnlgseq{\bar{r}}{r}}{}}}}}$ \\ \\ \\ \\

$\infer=[Dp]{\cnlgseq{\stimes{(\stimes{p/q}{q})}{p\bs r}}{\bar{r}}}{
\infer[\bs]{\cnlgseq{\sos{\bar{r}}{(\stimes{p/q}{q})}}{p\bs r}}{
\infer=[Dp]{\cnlgseq{\stimes{p/q}{q}}{\bar{p}}}{
\infer[/]{\cnlgseq{\sobs{q}{\bar{p}}}{p/q}}{
\infer[I]{\cnlgseq{q}{\bar{q}}}{} &
\infer[I]{\cnlgseq{\bar{p}}{p}}{}}} &
\infer[I]{\cnlgseq{\bar{r}}{r}}{}}}$
\end{tabular}
\end{tabular}
\end{center}

\caption{Example derivations witnessing $\cnlgseq{\stimes{p/(q\bs p)}{(p/(q\bs p))\bs p}}{\bar{p}}$ and $\cnlgseq{\stimes{(\stimes{p/q}{q})}{p\bs r}}{\bar{r}}$.}
\label{example_deriv}
\end{figure}

\begin{lem}\label{completeness_alg_proof} If $\lgalg{A}{B}$, then $\cnlgseq{A}{B^{\bot}}$.
\end{lem}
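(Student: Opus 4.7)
The plan is to proceed by induction on the derivation of $\lgalg{A}{B}$ in the algebraic presentation (Figure \ref{lg_algebraic}), building a sequent derivation of $\cnlgseq{A}{B^{\bot}}$ in each case. Each algebraic inference is mirrored by a corresponding sequent-calculus construction, with Cut playing the role of transitivity and simulating those algebraic inferences for which the sequent calculus offers no direct counterpart.

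The base case $\lgalg{A}{A}$ is discharged by the Ax rule. For transitivity, given IH derivations of $\cnlgseq{A}{B^{\bot}}$ and $\cnlgseq{B}{C^{\bot}}$, the swap display postulate converts the second into $\cnlgseq{C^{\bot}}{B}$, and Cut with cut-formula $B$ yields $\cnlgseq{A}{C^{\bot}}$.

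For the forward direction of residuation $r$ (deriving $\lgalg{A}{C/B}$ from $\lgalg{A\lgtimes B}{C}$), the strategy is to first derive a canonical sequent $\cnlgseq{\stimes{A}{B}}{(A\lgtimes B)^{\bot}}$ by applying $\lgplus$ to the identity axioms, then Cut it against the IH (after a display swap) to obtain $\cnlgseq{\stimes{A}{B}}{C^{\bot}}$, and finally apply a display postulate followed by $\obs$ to conclude $\cnlgseq{A}{(C/B)^{\bot}}$. The reverse direction of $r$ is symmetric: derive the canonical $\cnlgseq{\sobs{B}{C^{\bot}}}{C/B}$ from axioms via the $/$ rule, Cut this with the IH on cut-formula $C/B$ to obtain $\cnlgseq{A}{\sobs{B}{C^{\bot}}}$, apply a display postulate to reach $\cnlgseq{\stimes{A}{B}}{C^{\bot}}$, then swap and apply $\lgtimes$. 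The three remaining (co)residuation rules follow the same two-step pattern, using the appropriate combination of $\lgtimes$ / $\lgplus$ / $/$ / $\bs$ rules on axioms to build the canonical sequents. The meet/join cases are analogous: the forward direction of $\land$ and both directions of $\lor$ apply the introduction rules $\land^l$, $\land^r$, $\lor$ directly, while the reverse direction of $\land$ (extracting $\lgalg{A}{B}$ from $\lgalg{A}{B\land C}$) uses Cut against a projection sequent such as $\cnlgseq{B^{\bot}}{B\land C}$, obtained from the axiom $\cnlgseq{B^{\bot}}{B}$ by $\land^l$.

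The principal obstacle lies in the reverse directions of the bi-directional algebraic rules, since the one-sided sequent presentation provides no explicit left-elimination rules for the logical connectives. This is resolved uniformly by invoking Cut against canonical derivations built from axioms and the available right-introduction rules — these canonical sequents play the role of the missing elimination steps, and together with the display postulates suffice to recover the algebraic rules in both directions. Should Grishin's extensions (Figure \ref{lgalg_struc}) be admitted, the same strategy applies verbatim, since the structural rules of Figure \ref{neutraldispl_grishin} are direct one-sided transcriptions of the algebraic ones.
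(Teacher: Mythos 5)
Your proposal is correct and follows essentially the same route as the paper: induction on the algebraic derivation, with the preorder laws handled by Axiom and Cut, the residuation/coresiduation (and Grishin) cases handled by Cutting the induction hypothesis against canonical sequents built from axioms via the right-introduction rules, and display postulates mediating throughout — this is precisely the derivation the paper exhibits for the $\lgalg{A}{C/B}$ case. The only nitpick is that the reverse direction of the join rule (extracting $\lgalg{A}{C}$ from $\lgalg{A\lor B}{C}$) also requires a Cut against an injection sequent such as $\cnlgseq{A}{B^{\bot}\land A^{\bot}}$ rather than a direct introduction, but this is covered by your uniform remark about simulating eliminations via Cut.
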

\begin{proof}
By an induction on the derivation witnessing $\lgalg{A}{B}$. The preorder laws trivially translate to Axioms and Cut, while the rules for meets and joins are immediate by $(\lor)$ and $(\land^{l/r})$. This leaves us with the (co)residuation laws and the Grishin interactions as the only nontrivial cases. \\

\noindent\textbf{(Co)residuation}. We explicitly check $\lgalg{A}{C/B}$ if $\lgalg{A\lgtimes B}{C}$, the other cases being similar. By induction hypothesis, we know $\cnlgseq{A\lgtimes B}{C^{\bot}}$. Hence,
\begin{center}
$\infer[\obs]{\cnlgseq{A}{B^{\bot}\obs C^{\bot}}}{
\infer=[Dp]{\cnlgseq{A}{\sobs{B}{C^{\bot}}}}{
\infer[T]{\cnlgseq{\stimes{A}{B}}{C^{\bot}}}{
\infer[\lgplus]{\cnlgseq{\stimes{A}{B}}{B^{\bot}\lgplus A^{\bot}}}{
\infer[I]{\cnlgseq{B}{B^{\bot}}}{} &
\infer[I]{\cnlgseq{A}{A^{\bot}}}{}} &
\infer=[Dp]{\cnlgseq{C^{\bot}}{A\lgtimes B}}{
\cnlgseq{A\lgtimes B}{C^{\bot}}}}}}$
\end{center}

\noindent\textbf{Grishin interactions}. Again, we consider only one case. 
Take $\lgalg{C\lgtimes A}{D\lgtimes B}$ if $\lgalg{A\os B}{C\bs D}$. By induction hypothesis, $\cnlgseq{A\os B}{D^{\bot}\os C^{\bot}}$. Hence,
\begin{center}
$\infer[\lgtimes]{\cnlgseq{C\lgtimes A}{B^{\bot}\lgtimes D^{\bot}}}{
\infer=[Dp]{\cnlgseq{C\lgtimes A}{\stimes{B^{\bot}}{D^{\bot}}}}{
\infer[\lgtimes]{\cnlgseq{\stimes{B^{\bot}}{D^{\bot}}}{C\lgtimes A}}{
\infer[\grisha{I}{1}]{\cnlgseq{\stimes{B^{\bot}}{D^{\bot}}}{\stimes{C}{A}}}{
\infer[\os]{\cnlgseq{\sos{D^{\bot}}{C}}{\sos{A}{B^{\bot}}}}{
\infer[T]{\cnlgseq{\sos{D^{\bot}}{C}}{A\os B}}{
\cnlgseq{A\os B}{D^{\bot}\os C^{\bot}} &
\infer[\bs]{\cnlgseq{\sos{D^{\bot}}{C}}{C\bs D}}{
\infer[I]{\cnlgseq{C}{C^{\bot}}}{} &
\infer[I]{\cnlgseq{D^{\bot}}{D}}{}}}}}}}}$
\end{center}
\end{proof}

\begin{lem}\label{soundness_alg_proof}
If $\cnlgseq{\Gamma}{\Delta}$, then $\lgalg{\Gamma^{\inp}}{\Delta^{\outp}}$ and $\lgalg{\Delta^{\inp}}{\Gamma^{\outp}}$.
\end{lem}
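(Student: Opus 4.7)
My plan is to proceed by induction on the height of the derivation of $\cnlgseq{\Gamma}{\Delta}$, establishing both conjuncts simultaneously. Before starting the induction I would observe that the two conjuncts are in fact equivalent: since $\Gamma^{\inp}=\Gamma^{\outp\bot}$ and $\Delta^{\outp}=\Delta^{\inp\bot}$, the inequality $\lgalg{\Gamma^{\inp}}{\Delta^{\outp}}$ is just $\lgalg{\Delta^{\outp\bot}}{\Gamma^{\inp\bot}}$, which by the previous lemma (contraposition via $\cdot^{\bot}$) coincides with $\lgalg{\Delta^{\inp}}{\Gamma^{\outp}}$. So in every case I only need to exhibit one of the two algebraic derivations.

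The base case ($Ax$) gives $\Gamma=A$, $\Delta=A^{\bot}$, hence $\Gamma^{\inp}=A=\Delta^{\outp}$, and the claim follows by $(I)$. For $(Cut)$, a suitable combination of $(T)$ with the induction hypotheses applied to both premises does the job (one must pass through the negation equivalence once to align the middle formula). The display postulates are the heart of the matter, but each translates verbatim into an application of $(r)$ or $(cr)$: unfolding the definition of $(\cdot)^{\inp}, (\cdot)^{\outp}$ shows that the four possible $dp$ rewrites are exactly the four biconditionals contained in the double horizontal lines of Figure \ref{lg_algebraic}. The first display postulate, which swaps $\Gamma$ and $\Delta$, is handled by the negation equivalence noted above.

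For the logical rules, each instance unpacks to a short algebraic derivation that combines the induction hypothesis with an axiom and possibly a (co)residuation step. The rule $(\lgtimes)$, for example, is immediate from the definition of $(\cdot)^{\outp}$; the rules $(\obs),(\os)$ amount to unfolding $(\sobs{\Delta}{\Theta})^{\outp}$ and $(\sos{\Gamma}{\Delta})^{\outp}$. The binary rules $(\lgplus),(/),(\bs)$ require one $(m)$ step (Lemma \ref{derived_mon}) together with $(T)$ on the two induction hypotheses, essentially the mirror image of the derivation displayed in the proof of Lemma \ref{completeness_alg_proof}. The additive rules $(\land^{l}),(\land^{r}),(\lor)$ go through by the meet/join axioms applied to the induction hypotheses. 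Finally, the Grishin interactions $\grisha{I}{1},\grisha{I}{2},\grishc{I}{}$ translate directly into their algebraic counterparts of Figure \ref{lgalg_struc}, since the $(\cdot)^{\inp},(\cdot)^{\outp}$ translations of the structural premises and conclusions are precisely the formulas appearing in those axioms (up to a Cut with identity axioms that rearranges negations).

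The only genuine obstacle I foresee is purely bureaucratic: one must track faithfully which of $(\cdot)^{\inp}$ or $(\cdot)^{\outp}$ applies on each side after a $dp$ rewrite, so that the algebraic derivation one writes down matches the required shape. Because every logical rule of Figure \ref{neutraldispl} is ``right-introducing'' on the $\Delta$-side and the structural connectives on the $\Gamma$-side have been set up so that $(\cdot)^{\inp}$ and $(\cdot)^{\outp}$ agree with the intended algebraic reading, the bookkeeping is uniform and no case requires any insight beyond matching the schema.
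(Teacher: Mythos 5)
Your proof follows the paper's own argument: induction on the derivation, with the display postulates discharged by the (co)residuation laws, the binary logical rules $(\lgplus),(/),(\bs)$ by the monotonicity laws of Lemma \ref{derived_mon}, the unary rules $(\lgtimes),(\os),(\obs)$ falling out of the definition of $\cdot^{\inp},\cdot^{\outp}$, the additives by the meet/join laws, and the Grishin rules by their algebraic counterparts. The only difference is your up-front observation that the two conjuncts are interchangeable via $\lgalg{A}{B}$ iff $\lgalg{B^{\bot}}{A^{\bot}}$ together with $\Gamma^{\inp}=\Gamma^{\outp\bot}$, which lets you exhibit one algebraic derivation per case where the paper writes out both; this is a harmless streamlining, not a different route.
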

\begin{proof}
We proceed by induction on $\cnlgseq{\Gamma}{\Delta}$. The cases \textit{(Ax)} and \textit{(Cut)} are trivial, while $(\lgtimes)$, $(\os)$ and $(\obs)$ are immediate from the induction hypotheses. \\ 

\noindent\textbf{Case} $(dp)$. In general, the display postulates are justified by (co)residuation. As a typical case, we take 
$\cnlgseq{\stimes{\Gamma}{\Delta}}{\Theta}$ implies $\cnlgseq{\Gamma}{\sobs{\Delta}{\Theta}}$. By the induction hypothesis, $\lgalg{\Gamma^{\inp}\lgtimes\Delta^{\inp}}{\Theta^{\outp}}$ and $\lgalg{\Theta^{\inp}}{\Delta^{\outp}\lgplus\Gamma^{\outp}}$. Hence,
\begin{center}
\begin{tabular}{ccc}
$\infer[r]{\lgalg{\Gamma^{\inp}}{\Theta^{\outp}/\Delta^{\inp}}}{\infer[IH]{\lgalg{\Gamma^{\inp}\lgtimes\Delta^{\inp}}{\Theta^{\outp}}}{}}$ & \ \ & 
$\infer[r]{\lgalg{\Theta^{\inp}\os\Delta^{\outp}}{\Gamma^{\outp}}}{\infer[IH]{\lgalg{\Theta^{\inp}}{\Delta^{\outp}\lgplus\Gamma^{\outp}}}{}}$
\end{tabular}
\end{center}

\noindent\textbf{Cases} $(\lgplus)$, $(/)$ \textbf{and} $(\bs)$. In general, said cases all depend on Lemma \ref{derived_mon}. Consider $(/)$. Assuming $\cnlgseq{\Delta}{B^{\bot}}$ and $\cnlgseq{\Gamma}{A}$, we have as induction hypotheses $\lgalg{B^{\bot}}{\Delta^{\outp}}$, $\lgalg{\Delta^{\inp}}{B}$, $\lgalg{A}{\Gamma^{\outp}}$ and $\lgalg{\Gamma^{\inp}}{A^{\bot}}$. Hence,
\begin{center}
\begin{tabular}{ccc}
$\infer[m]{\lgalg{\Delta^{\outp}\obs\Gamma^{\inp}}{B^{\bot}\obs A^{\bot}}}{\infer[IH]{\lgalg{\Gamma^{\inp}}{A^{\bot}}}{} & \infer[IH]{\lgalg{B^{\bot}}{\Delta^{\outp}}}{}}$ & \ \ &
$\infer[m]{\lgalg{A/B}{\Gamma^{\outp}/\Delta^{\inp}}}{\infer[IH]{\lgalg{A}{\Gamma^{\outp}}}{} & \infer[IH]{\lgalg{\Delta^{\inp}}{B}}{}}$
\end{tabular}
\end{center}

\noindent\textbf{Cases} $(\land^l)$ \textbf{and} $(\land^r)$. We consider $(\land^l)$, $(\land^r)$ being similar. Assuming $\cnlgseq{\Gamma}{A}$, we have induction hypotheses $\lgalg{\Gamma^{\inp}}{A^{\bot}}$ and $\lgalg{A}{\Gamma^{\outp}}$. Hence,
\begin{center}
\begin{tabular}{ccc}
$\infer[\lor]{\lgalg{\Gamma^{\inp}}{B^{\bot}\lor A^{\bot}}}{\infer[IH]{\lgalg{\Gamma^{\inp}}{A^{\bot}}}{}}$ & \ \ &
$\infer[\lor]{\lgalg{A\land B}{\Gamma^{\outp}}}{\infer[IH]{\lgalg{A}{\Gamma^{\outp}}}{}}$
\end{tabular}
\end{center}

\noindent\textbf{Case} $(\lor)$. Assuming $\cnlgseq{\Gamma}{A}$ and $\cnlgseq{\Gamma}{B}$, we have as induction hypotheses $\lgalg{\Gamma^{\inp}}{A^{\bot}}$, $\lgalg{A}{\Gamma^{\outp}}$, $\lgalg{\Gamma^{\inp}}{B^{\bot}}$ and $\lgalg{B}{\Gamma^{\outp}}$. Hence,
\begin{center}
\begin{tabular}{ccc}
$\infer[\land]{\lgalg{\Gamma^{\inp}}{B^{\bot}\land A^{\bot}}}{\infer[IH]{\lgalg{\Gamma^{\inp}}{A^{\bot}}}{} & \infer[IH]{\lgalg{\Gamma^{\inp}}{B^{\bot}}}{}}$ & \ \ &
$\infer[\land]{\lgalg{A\lor B}{\Gamma^{\outp}}}{\infer[IH]{\lgalg{A}{\Gamma^{\outp}}}{} & \infer[IH]{\lgalg{B}{\Gamma^{\outp}}}{}}$
\end{tabular}
\end{center}

\noindent\textbf{Case} 
$(\grisha{I}{1/2}/\grishc{I}{})$. We consider $\cnlgseq{\sos{\Delta_2}{\Gamma_1}}{\sobs{\Delta_1}{\Gamma_2}}$ if $\cnlgseq{\stimes{\Gamma_1}{\Gamma_2}}{\stimes{\Delta_2}{\Delta_1}}$ as a typical case. By induction hypothesis, $\lgalg{\Delta_1^{\outp}\obs\Gamma_2^{\inp}}{\Gamma_1^{\inp}\bs\Delta_2^{\outp}}$ and $\lgalg{\Delta_2^{\inp}\os\Gamma_1^{\outp}}{\Gamma_2^{\outp}/\Delta_1^{\inp}}$:
\begin{center}
\begin{tabular}{ccc}
\begin{tabular}{ccc}
$\infer[\grishc{I}{1}]{\lgalg{\Gamma_1^{\inp}\lgtimes\Gamma_2^{\inp}}{\Delta_1^{\outp}\lgplus\Delta_2^{\outp}}}{\infer[IH]{\lgalg{\Delta_1^{\outp}\obs\Gamma_2^{\inp}}{\Gamma_1^{\inp}\bs\Delta_2^{\outp}}}{}}$ & \ \ &
$\infer[\grishc{I}{2}]{\lgalg{\Delta_2^{\inp}\lgtimes\Delta_1^{\inp}}{\Gamma_2^{\outp}\lgplus\Gamma_1^{\outp}}}{\infer[IH]{\lgalg{\Delta_2^{\inp}\os\Gamma_1^{\outp}}{\Gamma_2^{\outp}/\Delta_1^{\inp}}}{}}$
\end{tabular}
\end{tabular}
\end{center}
\end{proof}

\subsection{CNL sequentialized}
\textbf{CNL} derives from $\textbf{LG}_{\emptyset}$ by identifying $\stimes{}{}$, $\sos{}{}$ and $\sobs{}{}$, i.e., by adding the structural rules of Figure \ref{neutraldispl_cnl}.
\begin{figure}
\begin{center}
\begin{tabular}{ccc}
$\infer=[\cnlos]{\cnlgseq{\Gamma}{\stimes{\Delta}{\Theta}}}{\cnlgseq{\Gamma}{\sos{\Delta}{\Theta}}}$ & &
$\infer=[\cnlobs]{\cnlgseq{\Gamma}{\stimes{\Delta}{\Theta}}}{\cnlgseq{\Gamma}{\sobs{\Delta}{\Theta}}}$
\end{tabular}
\end{center}
\caption{Structural rules: Classical non-associative Lambek calculus}
\label{neutraldispl_cnl}
\end{figure}
\begin{lem} In \calccnl, we have $\cnlgseq{B\bs A}{A^{\bot}\lgtimes B}$, $\cnlgseq{B\lgplus A^{\bot}}{A\os B}$, $\cnlgseq{A/B}{B\lgtimes A^{\bot}}$ and $\cnlgseq{A^{\bot}\lgplus B}{B\obs A}$.
\end{lem}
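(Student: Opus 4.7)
The plan is to derive each of the four sequents by a short bottom-up proof of the same overall shape, so I sketch the strategy once and note that the other three cases follow by symmetric substitutions. Take $\cnlgseq{B\bs A}{A^\bot \lgtimes B}$ as the running example. First, introduce the tensor on the right by $(\lgtimes)$, reducing the goal to $\cnlgseq{B\bs A}{\stimes{A^\bot}{B}}$. The \calccnl-specific structural rule $\cnlos$ (Figure \ref{neutraldispl_cnl}) then rewrites the succedent as $\cnlgseq{B\bs A}{\sos{A^\bot}{B}}$, and a symmetry display postulate $(dp)$ flips the sequent into $\cnlgseq{\sos{A^\bot}{B}}{B\bs A}$. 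The latter matches the conclusion of $(\bs)$, whose two premises $\cnlgseq{B}{B^\bot}$ and $\cnlgseq{A^\bot}{A}$ are both instances of the axiom (the second since $A^{\bot\bot}=A$).

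The remaining three sequents follow the same four-step recipe. For $\cnlgseq{A/B}{B\lgtimes A^\bot}$ I use $\cnlobs$ and $(/)$ in place of $\cnlos$ and $(\bs)$. For the two coimplication targets $\cnlgseq{B\lgplus A^\bot}{A\os B}$ and $\cnlgseq{A^\bot\lgplus B}{B\obs A}$, I first unfold the coimplication on the right via $(\os)$ respectively $(\obs)$, then use the appropriate CNL structural rule ($\cnlos$ or $\cnlobs$) to convert the $\sos{}{}$ or $\sobs{}{}$ in the succedent into a $\stimes{}{}$, swap sides by $(dp)$, and close by $(\lgplus)$ applied to two axioms.

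The argument is essentially routine; there is no real obstacle to anticipate. The lemma is simply recording the observation that once the three structural constructors of \calclg\ are identified (which is precisely what passing from $\textbf{LG}_{\emptyset}$ to \calccnl\ does), the distinctions between the various (co)implications and the corresponding par/tensor combinations collapse at the level of provability.
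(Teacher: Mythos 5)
Your derivation of $\cnlgseq{B\bs A}{A^{\bot}\lgtimes B}$ is step-for-step the one in the paper ($\lgtimes$, then $\cnlos$, then $dp$, then $\bs$ closing on two axioms), and your recipe for the remaining three sequents matches the paper's "the latter two being similar" cases, with the paper likewise spelling out $\cnlgseq{B\lgplus A^{\bot}}{A\os B}$ via $\os$, $\cnlos$, $dp$ and $\lgplus$. The proposal is correct and essentially identical to the paper's proof.
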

\begin{proof} We demonstrate the first two claims, the latter two being similar.
\begin{center}
\begin{tabular}{ccc}
$\infer[\lgtimes]{\cnlgseq{B\bs A}{A^{\bot}\lgtimes B}}{
\infer={\cnlgseq{B\bs A}{\stimes{A^{\bot}}{B}}}{
\infer=[Dp]{\cnlgseq{B\bs A}{\sos{A^{\bot}}{B}}}{
\infer[\bs]{\cnlgseq{\sos{A^{\bot}}{B}}{B\bs A}}{
\infer[Ax]{\cnlgseq{B}{B^{\bot}}}{} &
\infer[Ax]{\cnlgseq{A^{\bot}}{A}}{}}}}}$ & &

$\infer[\os]{\cnlgseq{B\lgplus A^{\bot}}{A\os B}}{
\infer={\cnlgseq{B\lgplus A^{\bot}}{\sos{A}{B^{\bot}}}}{
\infer=[Dp]{\cnlgseq{B\lgplus A^{\bot}}{\stimes{A}{B^{\bot}}}}{
\infer[\lgplus]{\cnlgseq{\stimes{A}{B^{\bot}}}{B\lgplus A^{\bot}}}{
\infer[Ax]{\cnlgseq{A}{A^{\bot}}}{} &
\infer[Ax]{\cnlgseq{B^{\bot}}{B}}{}}}}}$

\end{tabular}
\end{center}
\end{proof}
\noindent When mapped into inequalities $\lgalg{A}{B}$ via $\cdot^{\inp}$ and $\cdot^{\outp}$, the previous lemma suggests the following identifications between formulas:
\begin{center}
\begin{tabular}{rclcrcl}
$A/B$ & $=$ & $A\lgplus B^{\bot}$ & & $B\obs A$ & $=$ & $B^{\bot}\lgtimes A$ \\
$B\bs A$ & $=$ & $B^{\bot}\lgplus A$ & & $A\os B$ & $=$ & $A\lgtimes B^{\bot}$
\end{tabular}
\end{center}
In particular, we have the more economic axiomatization for \calccnl \ of Figure \ref{cnl_degrlamstyle}, as originally employed in \cite{degrootelamarche}, save for a few notational differences. We here stick to the presentation of \calccnl \ as derived from \calclg$_{\emptyset}$ with structural postulates, as it allows for a uniform proof of the focalization property. 
\begin{figure}
\begin{center}
\begin{tabular}{c}
\begin{tabular}{rcl}
$A,B$ & $::=$ & $p$ $|$ $\bar{p}$ $|$ $(A\lgtimes B)$ $|$ $(A\lgplus B)$ $|$ $(A\land B)$ $|$ $(A\lor B)$ \\
$\Gamma,\Delta$ & $::=$ & $A$ $|$ $(\stimes{\Gamma}{\Delta})$
\end{tabular} \\ \\
\begin{tabular}{ccc}
$\infer[Ax]{\cnlgseq{A}{A^{\bot}}}{}$ &
$\infer[dp]{\cnlgseq{\Delta}{\Gamma}}{\cnlgseq{\Gamma}{\Delta}}$ &
$\infer=[dp]{\cnlgseq{\Gamma}{\stimes{\Delta}{\Theta}}}{\cnlgseq{\stimes{\Gamma}{\Delta}}{\Theta}}$ \\ \\
$\infer[\lgtimes]{\cnlgseq{\Gamma}{A\lgtimes B}}{\cnlgseq{\Gamma}{\stimes{A}{B}}}$ &
$\infer[\lgplus]{\cnlgseq{\stimes{\Delta}{\Gamma}}{A\lgplus B}}{\cnlgseq{\Gamma}{A} & \cnlgseq{\Delta}{B}}$ &
$\infer[Cut]{\cnlgseq{\Gamma}{\Delta}}{\cnlgseq{\Delta}{A} & \cnlgseq{\Gamma}{A^{\bot}}}$ \\ \\
$\infer[\land^l]{\cnlgseq{\Gamma}{A\land B}}{\cnlgseq{\Gamma}{A}}$ &
$\infer[\land^r]{\cnlgseq{\Gamma}{A\land B}}{\cnlgseq{\Gamma}{B}}$ &
$\infer[\lor]{\cnlgseq{\Gamma}{A\lor B}}{\cnlgseq{\Gamma}{A} & \cnlgseq{\Gamma}{B}}$
\end{tabular}
\end{tabular}
\end{center}
\caption{A left-sided retelling of De Groote and Lamarche's original right-sided sequent calculus for \calccnl, augmented by rules for the additives.}
\label{cnl_degrlamstyle}
\end{figure}

\section{Focusing proofs}
Sequent calculi have seen widespread application in backward-chaining proof search: an attempt at proving a  \textit{goal} sequent proceeds by matching it against the conclusion of an inference rule, and replacing it by the latter's premises. The process terminates successfully once all goals have been replaced by axioms, and fails when there remain goals to prove while the applicable inferences have been exhausted. Cut elimination guarantees a reasonable bound on the search space: the only formulas found in the premises of the remaining inference rules already occur as subformulas of the conclusion.\footnote{The widespread appearance of $\cdot^{\bot}$ in the rules of Figure \ref{neutraldispl} necessitates a slightly nonconventional definition of the notion of subformula, as explicated in $\S$3.3.}

While satisfying the subformula property, Cut-free proof search still suffers from inessential non-determinism: neighboring logical inference steps involving different main and active formulas freely permute, making the choice of their relative ordering meaningless for settling the question of provability. The problem seeming inherent to the sequentialization of rule applications, Girard proposed a parallel representation of proofs, called \textit{proof nets}. In contrast, Andreoli stuck with sequent calculus, seeking instead a method for obtaining canonical representatives of derivations differing only by trivial rule permutations. Thus was born \textit{focused proof search}: greedily apply invertible inferences (preserving provability of the conclusion in the premises), while the active formulas appearing in the premises of non-invertible inferences always are to be principal. In other words, once chosen as main, a formula is `focused upon' in the sense of fixing the choice for subsequent rule applications to those targeting its subformulas.

The current section treats a succession of formalisms, each further realizing Andreoli's focusing strategy for \textbf{LG} and \textbf{CNL}. A brief review in $\S$3.1 of the causes for inessential nondeterminism in proof search reveals a partitioning of formulas into those of \textit{positive} or \textit{negative polarity}, depending on whether their inferences are always invertible. This leads in $\S$3.2 to an extension of the logical vocabulary by connectives for explicitly recording polarity shifts, with sequent derivations being adapted accordingly. The latter's normal (i.e., Cut-free) forms turn out to already satisfy \textit{weak} focalization, tackling permutations between non-invertible inferences. Full, or \textit{strong} focalization is obtained in $\S$3.3 through a sequent calculus of synthetic inferences (\cite{andreoli01}), collapsing multiple inference steps that freely permute.\footnote{The notions of \textit{strong} and \textit{weak} focalization as used here were, to the author's knowledge, first used in \cite{laurent04}.} The latter are furthermore compiled from the formulas appearing in the goal sequent(s), thus explicating the subformula property. Soundness and completeness w.r.t. unfocused provability, as discussed in $\S$2, are dealt with to the extent that all will be left to check is the normalization of polarized derivations into those considered strongly focalized. 

\subsection{Polarities and rule permutations}
In $\S$2.1, algebraic considerations led us to group the multiplicatives into the families $\{\lgtimes,/,\bs\}$ and $\{\lgplus,\os,\obs\}$, finding support in $\cdot^{\bot}$. 
Inspection of the logical rules in Figure \ref{neutraldispl}, 
however, 
reveals another natural classification:
\begin{center}
\begin{tabular}{rcll}
$P,Q$ & $::=$ & $p$ $|$ $(A\lgtimes B)$ $|$ $(A\os B)$ $|$ $(B\obs A)$ $|$ $(A\lor B)$ & Positive formulas \\
$N,M$ & $::=$ & $\bar{p}$ $|$ $(A\lgplus B)$ $|$ $(B\bs A)$ $|$ $(A/B)$ $|$ $(A\land B)$ & Negative formulas
\end{tabular}
\end{center}
Again, the dual of a positive under $\cdot^{\bot}$ is negative and vice versa, motivating the choice of terminology. Call a logical inference with a positive (negative) main formula positive (negative). We observe that positive inferences are always \textit{invertible}, meaning premises and conclusion may be interchanged. For example, invertibility of $(\obs)$ and (one half of) $(\land)$ is witnessed by the following Cuts:
\begin{center}
\begin{tabular}{ccc}
$\infer[T]{\cnlgseq{\Gamma}{\sobs{B^{\bot}}{A}}}{
\infer[/]{\cnlgseq{\sobs{B^{\bot}}{A}}{A^{\bot}/B^{\bot}}}{
\infer[I]{\cnlgseq{B^{\bot}}{B}}{} &
\infer[I]{\cnlgseq{A}{A^{\bot}}}{}} &
\cnlgseq{\Gamma}{B\obs A}}$ & \ \
$\infer[T]{\cnlgseq{\Gamma}{A}}{
\infer[\land^r]{\cnlgseq{A}{B^{\bot}\land A^{\bot}}}{
\infer[I]{\cnlgseq{A}{A^{\bot}}}{}} &
\cnlgseq{\Gamma}{A\lor B}}$
\end{tabular}
\end{center}
The positive/negative distinction provides a neat classification of rule permutations, always involving inference steps with disjoint active and main formulas:
\begin{enumerate}
\item Positive/negative. E.g., noting $\cnlgseq{B^{\bot}}{\stimes{C}{D}}$ iff $\cnlgseq{\stimes{C}{D}}{B^{\bot}}$ by $(Dp)$:
\begin{center}
\begin{tabular}{ccc}
$\infer=[Dp]{\cnlgseq{\stimes{A/B}{C\lgtimes D}}{\Gamma}}{
\infer[/]{\cnlgseq{\sobs{C\lgtimes D}{\Gamma}}{A/B}}{
\infer=[Dp]{\cnlgseq{C\lgtimes D}{B^{\bot}}}{
\infer[\lgtimes]{\cnlgseq{B^{\bot}}{C\lgtimes D}}{
\cnlgseq{B^{\bot}}{\stimes{C}{D}}}} &
\cnlgseq{\Gamma}{A}}}$ & &

$\infer=[Dp]{\cnlgseq{\stimes{A/B}{C\lgtimes D}}{\Gamma}}{
\infer[\lgtimes]{\cnlgseq{\sos{\Gamma}{A/B}}{C\lgtimes D}}{
\infer=[Dp]{\cnlgseq{\sos{\Gamma}{A/B}}{\stimes{C}{D}}}{
\infer[/]{\cnlgseq{\sobs{(\stimes{C}{D})}{\Gamma}}{A/B}}{
\cnlgseq{B^{\bot}}{\stimes{C}{D}} &
\cnlgseq{\Gamma}{A}}}}}$

\end{tabular}
\end{center}
\item Negative/negative. E.g., noting $\cnlgseq{C^{\bot}}{A}$ iff $\cnlgseq{A}{C^{\bot}}$ by $(Dp)$:
\begin{center}
\begin{tabular}{ccc}
$\infer=[Dp]{\cnlgseq{\stimes{(\stimes{A/B}{\Delta})}{C\bs D}}{\Gamma}}{
\infer[\bs]{\cnlgseq{\sos{\Gamma}{(\stimes{A/B}{\Delta})}}{C\bs D}}{
\infer=[Dp]{\cnlgseq{\stimes{A/B}{\Delta}}{C^{\bot}}}{
\infer[/]{\cnlgseq{\sobs{\Delta}{C^{\bot}}}{A/B}}{
\cnlgseq{\Delta}{B^{\bot}} &
\cnlgseq{C^{\bot}}{A}}} &
\cnlgseq{\Gamma}{D}}}$ & &

$\infer=[Dp]{\cnlgseq{\stimes{(\stimes{A/B}{\Delta})}{C\bs D}}{\Gamma}}{
\infer[/]{\cnlgseq{\sobs{\Delta}{(\sobs{C\bs D}{\Gamma})}}{A/B}}{
\infer=[Dp]{\cnlgseq{\sobs{C\bs D}{\Gamma}}{A}}{
\infer[\bs]{\cnlgseq{\sos{\Gamma}{A}}{C\bs D}}{
\cnlgseq{A}{C^{\bot}} &
\cnlgseq{\Gamma}{D}}} &
\cnlgseq{\Delta}{B^{\bot}}}}$
\end{tabular}
\end{center}

\item Positive/positive. E.g., noting $\cnlgseq{\sos{C}{D}}{\stimes{A}{B}}$ iff $\cnlgseq{\stimes{A}{B}}{\sos{C}{D}}$ by $(Dp)$:
\begin{center}
\begin{tabular}{ccc}
$\infer[\os]{\cnlgseq{A\lgtimes B}{C\os D}}{
\infer=[Dp]{\cnlgseq{A\lgtimes B}{\sos{C}{D}}}{
\infer[\lgtimes]{\cnlgseq{\sos{C}{D}}{A\lgtimes B}}{
\cnlgseq{\sos{C}{D}}{\stimes{A}{B}}}}}$ & &

$\infer=[Dp]{\cnlgseq{A\lgtimes B}{C\os D}}{
\infer[\lgtimes]{\cnlgseq{C\os D}{A\lgtimes B}}{
\infer=[Dp]{\cnlgseq{C\os D}{\stimes{A}{B}}}{
\infer[\os]{\cnlgseq{\stimes{A}{B}}{C\os D}}{
\cnlgseq{\stimes{A}{B}}{\sos{C}{D}}}}}}$
\end{tabular}
\end{center}

\end{enumerate}
Besides the reorderings caused by the logical rules, the current sequent calculus is home to an additional form of redundancy, courtesy of the display postulates. Recall the latter's purpose is to isolate the main formula of a logical inference from within a presentation. While always possible in a canonical fashion, nothing prevents us from taking detours, e.g., displaying formulas without applying the corresponding logical inference, or even introducing cycles (revisiting the same sequent multiple times throughout a derivation). Thus, we wish to constrain their applicability, ideally doing away with them altogether. 

In the sequel, we consider variations of \calccnl \ and \calclg \ that internalizes polarity shifts within the syntax of formulas. We define the corresponding sequent derivations in $\S$3.2, these being \textit{two-sided} in the sense that all invertible inferences apply on the left-hand side, while all non-invertible, formerly negative inferences apply on the right-hand. Their Cut-free forms address the negative/negative permutations by requiring the active formulas of negative inferences to be principal: once a main formula is chosen, at least stick with it. Positive/positive permutations benefit from a sequent presentation involving synthetic inferences ($\S$3.3), collapsing into a single rule the stepwise decomposition of a positive formula into its negative subparts, while finally positive/negative permutations are eliminated by enforcing the greedy application of positive inferences: when deciding provability, a positive step is always the safer choice. 

\subsection{Polarized sequent derivations}
We define \textit{polarized} \textbf{CNL} and \textbf{LG}, referred to by \cnlpol \ and \lgpol \ respectively, their logical vocabulary containing additional connectives for recording polarity shifts.

\begin{defs} Formulas are made inherently positive or negative, with \textit{shifts} $\pton,\ntop$ (\cite{girard00}, $\S$5.3) establishing communication:
\begin{center}
\begin{tabular}{rcl}
$P,Q$ & $::=$ & $p$ $|$ $(P\lgtimes Q)$ $|$ $(P\os N)$ $|$ $(N\obs P)$ $|$ $(P\lor Q)$ $|$ $(\ntop N)$ \\
$N,M$ & $::=$ & $\bar{p}$ $|$ $(M\lgplus N)$ $|$ $(Q\bs M)$ $|$ $(M/Q)$ $|$ $(M\land N)$ $|$ $(\pton P)$
\end{tabular}
\end{center}
Linear negation $\cdot^{\bot}$ is revised accordingly, satisfying $P^{\bot\bot}=P$ and $N^{\bot\bot}=N$:
\begin{center}
\begin{tabular}{rclcrcl}
$p^{\bot}$ & $\defeq$ & $\bar{p}$ & \ \ & $\bar{p}^{\bot}$ & $\defeq$ & $p$ \\
$(P\lgtimes Q)^{\bot}$ & $\defeq$ & $Q^{\bot}\lgplus P^{\bot}$ & & $(M\lgplus N)^{\bot}$ & $\defeq$ & $N^{\bot}\lgtimes M^{\bot}$ \\
$(N\obs P)^{\bot}$ & $\defeq$ & $P^{\bot}/N^{\bot}$ & & $(M/Q)^{\bot}$ & $\defeq$ & $Q^{\bot}\obs M^{\bot}$ \\
$(P\os N)^{\bot}$ & $\defeq$ & $N^{\bot}\bs P^{\bot}$ & & $(Q\bs M)^{\bot}$ & $\defeq$ & $M^{\bot}\os Q^{\bot}$ \\
$(P\lor Q)^{\bot}$ & $\defeq$ & $Q^{\bot}\land P^{\bot}$ & & $(M\land N)^{\bot}$ & $\defeq$ & $N^{\bot}\lor M^{\bot}$ \\
$(\ntop N)^{\bot}$ & $\defeq$ & $\pton N^{\bot}$ & & $(\pton P)^{\bot}$ & $\defeq$ & $\ntop P^{\bot}$
\end{tabular}
\end{center}
\end{defs}

\begin{rem} In practice, we assume $\pton,\ntop$ to bind more strongly than $\lgtimes,\lgplus$, and drop brackets accordingly. Thus, $P\lgtimes\pton Q^{\bot}$ abbreviates $(P\lgtimes(\pton Q^{\bot}))$.
\end{rem}

\begin{defs}\label{polar_struc}
With our logical vocabulary extended by shifts, we revise structures and presentations so as to contain only positive formulas:
\begin{center}
\begin{tabular}{rcl}
$\svara,\svarb$ & $::=$ & $P$ $|$ $(\stimes{\svara}{\svarb})$ $|$ $(\sos{\svara}{\svarb})$ $|$ $(\sobs{\svarb}{\svara})$ \\
$\omega$ & $::=$ & $\pres{\svara}{\svarb}$
\end{tabular}
\end{center}
In particular, antecedent negative formulas appear only as $\ntop N$.
\end{defs}

\begin{defs}\label{polar_struc_form}
As before, structures $\svara$ interpret by dual formulas $\svara^{\inp}$ and $\svara^{\outp}$:
\begin{center}
\begin{tabular}{rclcrcl}
$P^{\inp}$ & $\defeq$ & $P$ & \ \ & $P^{\outp}$ & $\defeq$ & $P^{\bot}$ \\
$(\stimes{\svara}{\svarb})^{\inp}$ & $\defeq$ & $\svara^{\inp}\lgtimes\svarb^{\inp}$ & & $(\stimes{\svara}{\svarb})^{\outp}$ & $\defeq$ & $\svarb^{\outp}\lgplus\svara^{\outp}$ \\
$(\sobs{\svarb}{\svara})^{\inp}$ & $\defeq$ & $\svarb^{\outp}\obs\svara^{\inp}$ & & $(\sobs{\svarb}{\svara})^{\outp}$ & $\defeq$ & $\svara^{\outp}/\svarb^{\inp}$ \\
$(\sos{\svara}{\svarb})^{\inp}$ & $\defeq$ & $\svara^{\inp}\os\svarb^{\outp}$ & & $(\sos{\svara}{\svarb})^{\outp}$ & $\defeq$ & $\svarb^{\inp}\bs\svara^{\outp}$
\end{tabular}
\end{center}
\end{defs}

\begin{defs}
The sequent calculus for \textbf{LG}$_{\emptyset}^{\textit{pol}}$ is provided in F.\ref{cnlg_weakfoc}, involving derivability judgements $\presa\vdash$ and $\cnlgseqp{\svara}{P}$ defined by mutual induction. We refer by the \textit{stoup} to the righthand side of the turnstile, adapting terminology of (\cite{girard91}). The extension by structural rules remains unchanged from Figures \ref{neutraldispl_grishin} and \ref{neutraldispl_cnl}.
\end{defs}

\begin{figure}
\begin{center}
\begin{tabular}{ccc}
 & $\infer[T]{\cnlgseq{\svara}{\svarb}}{\cnlgseqp{\svarb}{P} & \cnlgseq{\svara}{P}}$ \\ \\
$\infer[I]{\cnlgseqp{P}{P}}{}$ & $\infer[\ntop L]{\cnlgseq{\svara}{\ntop N}}{\cnlgseqp{\svara}{N^{\bot}}}$ & $\infer[\ntop R]{\cnlgseqp{\svara}{\ntop N}}{\cnlgseq{\svara}{N^{\bot}}}$ \\ \\
$\infer[\lgtimes L]{\cnlgseq{\svara}{P\lgtimes Q}}{\cnlgseq{\svara}{\stimes{P}{Q}}}$ &
$\infer[\obs L]{\cnlgseq{\svara}{N\obs P}}{\cnlgseq{\svara}{\sobs{N^{\bot}}{P}}}$ &
$\infer[\os L]{\cnlgseq{\svara}{P\os N}}{\cnlgseq{\svara}{\sos{P}{N^{\bot}}}}$ \\ \\
$\infer[\lgtimes R]{\cnlgseqp{\stimes{\svara}{\svarb}}{P\lgtimes Q}}{\cnlgseqp{\svara}{P} & \cnlgseqp{\svarb}{Q}}$ &
$\infer[\obs R]{\cnlgseqp{\sobs{\svarb}{\svara}}{N\obs P}}{\cnlgseqp{\svarb}{N^{\bot}} & \cnlgseqp{\svara}{P}}$ &
$\infer[\os R]{\cnlgseqp{\sos{\svara}{\svarb}}{P\os N}}{\cnlgseqp{\svarb}{N^{\bot}} & \cnlgseqp{\svara}{P}}$ \\ \\
$\infer[\lor L]{\cnlgseq{\svara}{P\lor Q}}{\cnlgseq{\svara}{P} & \cnlgseq{\svara}{Q}}$ &
$\infer[\lor R^l]{\cnlgseqp{\Gamma}{P\lor Q}}{\cnlgseqp{\Gamma}{P}}$ &
$\infer[\lor R^r]{\cnlgseqp{\Gamma}{P\lor Q}}{\cnlgseqp{\Gamma}{Q}}$ \\ \\
$\infer[dp]{\cnlgseq{\svara}{\svarb}}{}$ & 
$\infer=[dp]{\cnlgseq{\svara}{\sobs{\svarb}{\svarc}}}{\cnlgseq{\stimes{\svara}{\svarb}}{\svarc}}$ &
$\infer=[dp]{\cnlgseq{\sos{\svara}{\svarb}}{\svarc}}{\cnlgseq{\svara}{\stimes{\svarb}{\svarc}}}$
\end{tabular}
\end{center}
\caption{Polarized sequent calculus: base logic.}
\label{cnlg_weakfoc}
\end{figure}

\noindent Note that Figure \ref{cnlg_weakfoc} makes no explicit mention of rules for deriving negative formulas. Instead, these are hidden inside the right introductions, as is clear when the latter precede an application of $(\ntop L)$:
\begin{center}
$\infer[\ntop L]{\cnlgseq{\svarb}{\ntop N}}{
\cnlgseqp{\svarb}{N^{\bot}}}$
\end{center}
Thus, a right introduction of a positive formula may be understood as a left introduction of its negative dual. This intuition is further pursued through a completeness proof w.r.t. the sequent derivations of $\S$2, demonstrated using the following decoration of unpolarized formulae with shifts.

\begin{defs}\label{shift_decoration} For $A$ a(n unpolarized) formula, let $\epsilon(A)=+$ if $A$ is of positive polarity, i.e., of the form $p$, $B\lgtimes C$, $B\os C$ or $C\obs B$, and $\epsilon(A)=-$ otherwise. We translate $A$ into a formula $\semt{A}$ of \cnlpol/\lgpol \ avoiding `vacuous' polarity shifts by excluding
subformulas of the form $\ntop\pton P$ or $\pton\ntop N$. For the base cases, we define $\semt{p}=p$ and $\semt{\bar{p}}=\bar{p}$. For complex $A$ with $\epsilon(A)=+$, we set
\begin{center}
\begin{tabular}{|@{}c@{}c@{}|@{}l@{}|@{}l@{}|@{}l@{}|@{}l@{}|}\hline
 \ $\epsilon(A)$ \ & $\epsilon(B)$ \ & \ $\semt{A\lgtimes B}$ & \ $\semt{A\os B}$ & \ $\semt{B\obs A}$ & \ $A\lor B$ \\ \hline
$+$ & $+$ & \ $\semt{A}\lgtimes\semt{B}$ & \ $\semt{A}\os\pton\semt{B}$ & \ $\pton\semt{B}\obs\semt{A}$ & \ $\semt{A}\lor\semt{B}$ \\ 
$+$ & $-$ & \ $\semt{A}\lgtimes\ntop\semt{B}$ & \ $\semt{A}\os\semt{B}$ & \ $\semt{B}\obs\semt{A}$ & \ $\semt{A}\lor\ntop\semt{B}$ \\
$-$ & $+$ & \ $\ntop\semt{A}\lgtimes\semt{B}$ & \ $\ntop\semt{A}\os\pton\semt{B}$ & \ $\pton\semt{B}\obs\ntop\semt{A}$ & \ $\ntop\semt{A}\lor\semt{B}$ \\
$-$ & $-$ & \ $\ntop\semt{A}\lgtimes\ntop\semt{B}$ & \ $\ntop\semt{A}\os\semt{B}$ & \ $\semt{B}\obs\ntop\semt{A}$ & \ $\ntop\semt{A}\lor\ntop\semt{B}$ \\ \hline
\end{tabular}
\end{center}
Finally, for complex $A$ with $\epsilon(A)=-$,
\begin{center}
\begin{tabular}{|@{}c@{}c@{}|@{}l@{}|@{}l@{}|@{}l@{}|@{}l@{}|}\hline
 \ $\epsilon(A)$ \ & $\epsilon(B)$ \ & \ $\semt{A\lgplus B}$ & \ $\semt{B\bs A}$ & \ $\semt{A/B}$ & \ $A\land B$ \\ \hline
$+$ & $+$ & \ $\pton\semt{A}\lgplus\pton\semt{B}$ & \ $\semt{B}\bs\pton\semt{A}$ & \ $\pton\semt{A}/\semt{B}$ & \ $\pton\semt{A}\land\pton\semt{B}$ \\ 
$+$ & $-$ & \ $\pton\semt{A}\lgplus\semt{B}$ & \ $\ntop\semt{N}\bs\pton\semt{A}$ & \ $\pton\semt{A}/\ntop\semt{A}$ & \ $\pton\semt{A}\land\semt{B}$ \\
$-$ & $+$ & \ $\semt{A}\lgplus\pton\semt{B}$ & \ $\semt{B}\bs\semt{A}$ & \ $\semt{A}/\semt{B}$ & \ $\semt{A}\land\pton\semt{B}$ \\
$-$ & $-$ & \ $\semt{A}\lgplus\semt{B}$ & \ $\ntop\semt{B}\bs\semt{A}$ & \ $\semt{A}/\pton\semt{B}$ & \ $\semt{A}\land\semt{B}$ \\ \hline
\end{tabular}
\end{center}
\end{defs}
\noindent An easy induction establishes
\begin{lem}\label{semt_lneg}
The map $\semt{\cdot}$ commutes with linear negation. I.e., for any $A$, $\semt{A}^{\bot}=\semt{A^{\bot}}$.
\end{lem}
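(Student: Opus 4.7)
The plan is a straightforward induction on the complexity of the unpolarized formula $A$, exploiting the fact that the definition of $\semt{\cdot}$ is \emph{dual} on positive and negative formulas in exactly the sense needed. The main preparatory observation is that the polarity map $\epsilon$ flips under linear negation: since $(A\lgtimes B)^{\bot} = B^{\bot}\lgplus A^{\bot}$, $(A\os B)^{\bot} = B^{\bot}\bs A^{\bot}$, $(B\obs A)^{\bot} = A^{\bot}/B^{\bot}$, $(A\lor B)^{\bot} = B^{\bot}\land A^{\bot}$, and dually for the negative connectives, we have $\epsilon(A^{\bot}) = -\epsilon(A)$ for every $A$. This fact, combined with involutivity of $\cdot^{\bot}$ on unpolarized formulas, lets us line up a case in the upper (positive) table of Definition~\ref{shift_decoration} with the corresponding case in the lower (negative) table.

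For the base cases, $\semt{p}^{\bot} = p^{\bot} = \bar{p} = \semt{\bar{p}} = \semt{p^{\bot}}$, and symmetrically for $\bar{p}$. For the inductive step, I would dispatch the top-level connective of $A$ together with the polarities $\epsilon(A_1),\epsilon(A_2)$ of its immediate subformulas, using the revised duality table for polarized formulas (including $(\pton P)^{\bot} = \ntop P^{\bot}$ and $(\ntop N)^{\bot} = \pton N^{\bot}$) together with the induction hypotheses $\semt{A_i}^{\bot} = \semt{A_i^{\bot}}$. As a representative case, take $A = B\os C$ with $\epsilon(B) = +$, $\epsilon(C) = -$: then $\semt{A} = \semt{B}\os \semt{C}$, so
\[
\semt{A}^{\bot} \;=\; \semt{C}^{\bot}\bs \semt{B}^{\bot} \;\stackrel{\text{IH}}{=}\; \semt{C^{\bot}}\bs \semt{B^{\bot}}.
\]
On the other side, $A^{\bot} = C^{\bot}\bs B^{\bot}$ with $\epsilon(C^{\bot}) = +$ (left argument of $\bs$) and $\epsilon(B^{\bot}) = -$; consulting the negative table for $\semt{Y\bs X}$ in the row $\epsilon(X)=-$, $\epsilon(Y)=+$ gives $\semt{C^{\bot}\bs B^{\bot}} = \semt{C^{\bot}}\bs\semt{B^{\bot}}$, matching. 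Every other case proceeds by exactly this pattern: the IH turns $\semt{\cdot}^{\bot}$ into $\semt{(\cdot)^{\bot}}$ on subformulas, and the shifts on the outside of $\semt{A}^{\bot}$ are produced precisely by the $\epsilon(A_i) = -\epsilon(A_i^{\bot})$ switch between the two tables.

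The only real obstacle is organizational: the case analysis has sixteen sub-cases per binary multiplicative (four connectives, four polarity assignments each) plus the additives and the shifts, so rather than write each one out I would observe once and for all that the two tables in Definition~\ref{shift_decoration} are obtained from each other by swapping the arguments and negating, which is exactly what $\cdot^{\bot}$ does on the outside; after that, a single generic case (such as the one above) suffices to illustrate, with the remainder noted as fully analogous. No step uses anything beyond the definitions of $\semt{\cdot}$, $\cdot^{\bot}$, $\epsilon$, and the induction hypothesis.
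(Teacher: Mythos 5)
Your proof is correct and matches the paper's intent exactly: the paper offers no written argument beyond the remark that ``an easy induction establishes'' the lemma, and your induction on $A$ --- using the polarity flip $\epsilon(A^{\bot})=-\epsilon(A)$ to pair each row of the positive table in Definition~\ref{shift_decoration} with the corresponding row of the negative table --- is precisely that induction, with the representative $\os/\bs$ case checked correctly. Nothing further is needed.
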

\begin{defs}
We extend $\semt{\cdot}$ to the level of structures $\Gamma$ as follows:
\begin{center}
\begin{tabular}{ccc}
\begin{tabular}{c}$A\mapsto\left\{\begin{array}{rl}\semt{A} & \textrm{ if }\epsilon(A)=+ \\ \ntop\semt{A} & \textrm{ if }\epsilon(A)=-\end{array}\right.$\end{tabular} & &
$\begin{array}{rcl}\semt{\stimes{\Gamma}{\Delta}} & \defeq & \stimes{\semt{\Gamma}}{\semt{\Delta}} \\
\semt{\sos{\Gamma}{\Delta}} & \defeq & \sos{\semt{\Gamma}}{\semt{\Delta}} \\
\semt{\sobs{\Delta}{\Gamma}} & \defeq & \sobs{\semt{\Delta}}{\semt{\Gamma}}\end{array}$
\end{tabular}
\end{center}
\end{defs}

\noindent Below, we show $\cnlgseq{\svara}{\svarb}$ in \calclg \ (\calccnl) implies $\cnlgseq{\semt{\svara}}{\semt{\svarb}}$ in \lgpol \ (\cnlpol), making repeated use of Cut. As a consequence, we claim that a proof of the latter's admissibility suffices for showing a weak form of focalization, restricting negative/negative permutations, though leaving the others unaddressed. Indeed, all non-invertible inferences take place entirely within the stoup, housing at most a single formula. Since, in the absence of Cut, formulas are allowed to cross the turnstile only upon the encounter of polarity shifts, and `vacuous' such shifts through subformulas $\ntop\pton P$ and $\pton\ntop N$ have been avoided, maximal chains of non-invertible inferences are enforced, traversing the formula tree of the particular $N$ designated main through $(\ntop L)$.

\begin{thm}\label{unpol_weak} If $\cnlgseq{\Gamma}{\Delta}$ in \calclg/\calccnl, then $\cnlgseq{\semt{\Gamma}}{\semt{\Delta}}$ in \lgpol/\cnlpol.
\end{thm}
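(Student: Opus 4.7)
The plan is induction on the height of the unpolarized derivation of $\cnlgseq{\Gamma}{\Delta}$, simulating each inference inside the polarized calculus with liberal use of the polarized cut rule $(T)$. Since the Grishin rules of Figure \ref{neutraldispl_grishin}, the \calccnl{} structural rules of Figure \ref{neutraldispl_cnl}, and the display postulates act purely on structures, and $\semt{\cdot}$ commutes with $\stimes{}{}$, $\sos{}{}$ and $\sobs{}{}$, they translate verbatim. Hence the substantive cases are (Ax), (Cut) and the logical rules, and it suffices to treat \calclg{}; the \calccnl{} case follows because the additional structural rules act identically in both calculi.

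For (Ax), we case-split on $\epsilon(A)$. Since structures may only contain positive formulas, exactly one side of the structure translation acquires a $\ntop$ wrapper; using Lemma \ref{semt_lneg}, the goal reduces via $(\ntop L)$ (and, for $\epsilon(A)=-$, a preliminary $(dp)$) to the polarized identity $\cnlgseqp{P}{P}$ with $P=\semt{A}$ or $P=\semt{A}^{\bot}$, both discharged by the axiom schema $(I)$ since it accepts compound positive $P$. For (Cut), the two IHs furnish unstouped sequents on dual polarities of $A$; applying $(\ntop R)$ to the one whose RHS is a bare positive formula lifts it into the stoup, whereupon $(T)$, possibly followed by a $(dp)$ to restore structural order, closes the case.

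For the logical rules we case-split on the polarities of the two immediate subformulas, following Definition \ref{shift_decoration}. Positive right-introductions ($\lgtimes$, $\os$, $\obs$, $\lor$) match the corresponding invertible polarized $L$-rule directly: the IH already delivers the premise in structural form, since the $\ntop$ prepended at the structure level to a negative subformula coincides with the shift $\semt{\cdot}$ inserts around that subformula inside the principal formula. Negative right-introductions ($\lgplus$, $\bs$, $/$, $\land$) are more delicate: after $(\ntop L)$ the stouped goal has as its stoup the positive de Morgan dual of the principal formula, which we unfold via the polarized $R$-rule for that dual. The resulting stouped subgoals match the IHs up to shifts: when the relevant subformula's polarity permits, $(\ntop R)$ applied to the IH delivers the subgoal directly; otherwise we bridge the gap by a cut against the (Ax)-style identity derivation for that subformula constructed by a nested appeal to the axiom case.

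The principal obstacle is the four-way polarity bookkeeping: each binary connective spawns four sub-cases indexed by $(\epsilon(A),\epsilon(B))$, each dictating its own pattern of $\pton,\ntop$ insertions per Definition \ref{shift_decoration}. Verifying closure in every sub-case is combinatorially tedious but mechanical, driven by the invariant that when $\epsilon(A)=+$ the structure-level translation of $A$ equals its formula-level $\semt{A}$, while when $\epsilon(A)=-$ the extra $\ntop$ at the structure level is exactly compensated by the shift inserted around $A$ inside the principal formula. Under this invariant, the pattern of cut applications and $(\ntop R)$-lifts needed to glue the IHs to the polarized rules follows uniformly across cases.
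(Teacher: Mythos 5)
Your overall strategy---induction on the derivation, structural rules and display postulates translating verbatim, positive logical rules becoming left introductions, axioms handled by $(\ntop L)$ plus $(I)$ (with a preliminary $(dp)$ when $\epsilon(A)=-$), and Cut simulated by $(\ntop R)$ feeding $(T)$---coincides with the paper's proof, and those cases are fine. The gap is in your treatment of the negative rules ($\lgplus$, $/$, $\bs$, and likewise the additive conjunction). You propose to apply $(\ntop L)$ to the goal, unfold the resulting stoup formula with the matching $R$-rule, and then discharge the stouped subgoals against the induction hypotheses, ``bridging the gap by a cut'' when a subformula is negative. That last step cannot be carried out. Take $(\lgplus)$ with $\epsilon(A)=\epsilon(B)=-$: after $(\ntop L)$ and $(\lgtimes R)$ the subgoals are $\cnlgseqp{\semt{\Gamma}}{\semt{A}^{\bot}}$ and $\cnlgseqp{\semt{\Delta}}{\semt{B}^{\bot}}$, whose stoups are positive formulas \emph{not} of the form $\ntop N$, while the antecedents are arbitrary structures. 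The only rules of F.\ref{cnlg_weakfoc} that conclude a stouped sequent are $(I)$, $(\ntop R)$ and the right introductions, all of which force the antecedent to mirror the shape of the stoup; and the cut rule $(T)$ concludes only \emph{unstouped} sequents, so it cannot supply these subgoals either. Already for $A=\bar{p}$ and $\Gamma=\stimes{p/q}{q}$ the required $\cnlgseqp{\semt{\Gamma}}{p}$ is underivable, so the plan fails exactly in the branch you label ``otherwise''.

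The paper's proof nests the derivation the other way around. It first constructs an identity-like core $\cnlgseq{\stimes{\semt{B}^{\bot}}{\semt{A}^{\bot}}}{\ntop(\semt{A}\lgplus\semt{B})}$ whose antecedent consists of the formulas $\semt{A}^{\bot}$ and $\semt{B}^{\bot}$ themselves, so that \emph{there} the combination $(\ntop L)$, $(\lgtimes R)$ and two instances of $(I)$ does close the branch. The induction hypotheses $\cnlgseq{\semt{\Gamma}}{\ntop\semt{A}}$ and $\cnlgseq{\semt{\Delta}}{\ntop\semt{B}}$ are then spliced in one at a time at the unstouped level: display the relevant occurrence, apply $(\ntop R)$ to form the stouped premise of $(T)$, and cut. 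Your subcase with both subformulas positive does go through as you describe (there $(\ntop R)$ applied to the induction hypotheses feeds $(\lgtimes R)$ directly, with no cut at all), but the negative and mixed subcases require this inversion; without it the stoup discipline blocks your derivation.
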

\begin{proof}
By induction on $\cnlgseq{\Gamma}{\Delta}$, making free use of L.\ref{semt_lneg}. The structural rules, including the display postulates, are immediate. In each remaining case, we must consider all possible values for $\epsilon(A_1),\dots,\epsilon(A_n)$ of the active and main formulas $A_1,\dots,A_n$ involved. For axioms and Cut, we have
\begin{center}
\begin{tabular}{ccc}
$\begin{array}{c}\infer[\ntop L]{\cnlgseq{\semt{A}}{\ntop\semt{A^{\bot}}}}{
\infer[I]{\cnlgseqp{\semt{A}}{\semt{A}}}{}}\end{array}$ &
$\begin{array}{c}\infer[T]{\cnlgseq{\semt{\svara}}{\semt{\svarb}}}{
\infer[\ntop I]{\cnlgseqp{\svarb}{\ntop\semt{A}^{\bot}}}{
\cnlgseq{\svarb}{\semt{A}}} &
\cnlgseq{\semt{\svara}}{\ntop\semt{A^{\bot}}}}\end{array}$ &
(if $\epsilon(A)=+$) \\ \\
$\begin{array}{c}\infer=[Dp]{\cnlgseq{\ntop\semt{A}}{\semt{A^{\bot}}}}{
\infer[\ntop L]{\cnlgseq{\semt{A^{\bot}}}{\ntop\semt{A}}}{
\infer[I]{\cnlgseqp{\semt{A^{\bot}}}{\semt{A^{\bot}}}}{}}}\end{array}$ &
$\begin{array}{c}\infer=[Dp]{\cnlgseq{\semt{\svara}}{\semt{\svarb}}}{
\infer[T]{\cnlgseq{\semt{\svarb}}{\semt{\svara}}}{
\infer[\ntop I]{\cnlgseqp{\svara}{\ntop\semt{A}}}{
\cnlgseq{\svara}{\semt{A^{\bot}}}} &
\cnlgseq{\semt{\svarb}}{\ntop\semt{A}}}}\end{array}$ &
(if $\epsilon(A)=-$)
\end{tabular}
\end{center}
Each of the cases $(\lgplus)$, $(/)$ and $(\bs)$ are handled similarly, so we suffice by checking $(\lgplus)$. We consider the situation $\epsilon(A)=\epsilon(B)=-$:  
\begin{center}
\begin{tabular}{l}
$\infer=[Dp]{\cnlgseq{\stimes{\semt{\svarb}}{\semt{\svara}}}{\ntop(\semt{A}\lgplus\semt{B})}}{
\infer[T]{\cnlgseq{\sos{\ntop(\semt{A}\lgplus\semt{B})}{\semt{\svarb}}}{\semt{\svara}}}{
\infer[\ntop R]{\cnlgseqp{\sos{\ntop(\semt{A}\lgplus\semt{B})}{\semt{\svarb}}}{\ntop\semt{A}}}{
\infer=[Dp]{\cnlgseq{\sos{\ntop(\semt{A}\lgplus\semt{B})}{\semt{\svarb}}}{\semt{A}^{\bot}}}{
\infer[T]{\cnlgseq{\sobs{\semt{A}^{\bot}}{\ntop(\semt{A}\lgplus\semt{B})}}{\semt{\svarb}}}{
\infer[\ntop R]{\cnlgseqp{\sobs{\semt{A}^{\bot}}{\ntop(\semt{A}\lgplus\semt{B})}}{\ntop\semt{B}}}{
\infer=[Dp]{\cnlgseq{\sobs{\semt{A}^{\bot}}{\ntop(\semt{A}\lgplus\semt{B})}}{\semt{B}^{\bot}}}{
\infer[\ntop L]{\cnlgseq{\stimes{\semt{B}^{\bot}}{\semt{A}^{\bot}}}{\ntop(\semt{A}\lgplus\semt{B})}}{
\infer[\lgtimes R]{\cnlgseqp{\stimes{\semt{B}^{\bot}}{\semt{A}^{\bot}}}{\semt{B}^{\bot}\lgtimes\semt{A}^{\bot}}}{
\infer[I]{\cnlgseqp{\semt{B}^{\bot}}{\semt{B}^{\bot}}}{} &
\infer[I]{\cnlgseqp{\semt{A}^{\bot}}{\semt{A}^{\bot}}}{}}}}} &
\cnlgseq{\semt{\svarb}}{\ntop\semt{B}}}}} &
\cnlgseq{\semt{\svara}}{\ntop\semt{A}}}}$
\end{tabular}
\end{center}
The remaining cases $(\lgtimes)$, $(\os)$ and $(\obs)$ trivially translate to left introductions.
\end{proof}

\subsection{Strong focalization}
Our efforts so far have left all but negative/negative permutations unaddressed. Our interest, however, is in full, or strong focalization, providing further normalization of Cut-free polarized derivations. In particular, we structure topdown proof-search into alternating invertible and non-invertible phases. The latter proceeds as in the previous section, taking place entirely within the stoup, and ends in each branch with an application of $(\ntop R)$. All applicable invertible inferences are subsequently to be exhausted before $(\ntop L)$ is made available, ushering in a new non-invertible phase. By furthermore collapsing both phases into single inference steps, the problem posed by permutations between the inferences within a single invertible phase is remedied. 

Despite offering increased control over rule ordering, focused derivations remain sequent derivations at heart, satisfying in particular a local subformula property. From the above discussion, however, it should be clear that subformulas are identified only up to polarity shifts, marking the boundaries between the invertible and non-invertible phases.


\begin{defs}\label{useful_subform}
For any $P$ or $N$, the maps $\sigma$ and $\tau$, defined by mutual induction, pick out the subformulas relevant for focused proof search:
\begin{center}
\begin{tabular}{rclcrcl}
$\suba{\pton P}$ & $\defeq$ & $\{ P^{\bot}\}\bigcup\subb{P}$ & &
$\subb{\ntop N}$ & $\defeq$ & $\{ N\}\bigcup\suba{N}$ \\
$\suba{\bar{p}}$ & $\defeq$ & $\{\bar{p}\}$ & &
$\subb{p}$ & $\defeq$ & $\{ p\}$ \\
$\suba{M\land N}$ & $\defeq$ & $\suba{M}\bigcup\suba{N}$ & &
$\subb{P\lor Q}$ & $\defeq$ & $\subb{P}\bigcup\subb{Q}$ \\
$\suba{M\lgplus N}$ & $\defeq$ & $\suba{M}\bigcup\suba{N}$ & &
$\subb{P\lgtimes Q}$ & $\defeq$ & $\subb{P}\bigcup\subb{Q}$ \\
$\suba{M/Q}$ & $\defeq$ & $\suba{M}\bigcup\subb{Q}$ & &
$\subb{N\obs P}$ & $\defeq$ & $\subb{P}\bigcup\suba{N}$ \\
$\suba{Q\bs M}$ & $\defeq$ & $\suba{M}\bigcup\subb{Q}$ & &
$\subb{P\os N}$ & $\defeq$ & $\subb{P}\bigcup\suba{N}$
\end{tabular}
\end{center}
\end{defs}
\noindent Note carefully the definition of $\suba{\pton P}$ as $\{ P^{\bot}\}\bigcup\subb{P}$ instead of $\{ P\}\bigcup\subb{P}$. Explained in terms of the sequent derivations of $\S$3.2, this is due to the formulation of the rules for $\ntop$: since in the premise of $(\ntop L)$ the stoup contains the negation of the main formula $\ntop N$, the latter's subformulas of the form $\pton P$ also appear as $\ntop P^{\bot}$ when main in an instance of $(\ntop R)$. Thus, it is $P^{\bot}$, and not $P$, that we wish to remember at this particular polarity shift. As a further indication of the harmony of this definition, a straightforward induction proves
\begin{lem}
For any $N,P$, $\suba{N}=\subb{N^{\bot}}$ and (dually) $\subb{P}=\suba{P^{\bot}}$.
\end{lem}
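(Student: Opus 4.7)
The plan is to prove both equalities simultaneously by a mutual induction on the (combined) size of the formula, noting that $\sigma$ is defined only on negative formulas while $\tau$ is defined only on positive ones, and that $(\cdot)^{\bot}$ swaps polarity. Thus each clause in D.\ref{useful_subform} on a negative head connective will be paired, under the induction step, with the clause for its dual positive connective appearing in the negation table of $\S$3.2.

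Concretely, I would first dispatch the atomic base case $\suba{\bar{p}}$ versus $\subb{\bar{p}^{\bot}} = \subb{p}$, noting these agree as singletons of dual atoms (and the positive base case is symmetric). Next, for each binary connective on the negative side, the proof is a three-line calculation: for example, for $\suba{M/Q} \defeq \suba{M} \cup \subb{Q}$, compute $(M/Q)^{\bot} = Q^{\bot} \obs M^{\bot}$ using the negation table, then unfold $\subb{Q^{\bot}\obs M^{\bot}} \defeq \subb{M^{\bot}} \cup \suba{Q^{\bot}}$, and finally invoke the induction hypothesis $\subb{M^{\bot}} = \suba{M}$ and $\suba{Q^{\bot}} = \subb{Q}$. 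The other five binary connectives ($\land$/$\lor$, $\lgplus$/$\lgtimes$, $\bs$/$\os$) all unfold by the same pattern, matching each clause of $\sigma$ with the corresponding clause of $\tau$ on the dual. The dual direction, $\subb{P} = \suba{P^{\bot}}$, is handled symmetrically (in fact it is the same induction read the other way, by involutivity of $\cdot^{\bot}$).

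The only genuinely interesting step is the shift case, which the author already foreshadowed in the paragraph preceding the statement. We have $\suba{\pton P} = \{P^{\bot}\} \cup \subb{P}$, whereas $(\pton P)^{\bot} = \ntop P^{\bot}$ and $\subb{\ntop P^{\bot}} = \{P^{\bot}\} \cup \suba{P^{\bot}}$. These agree precisely because of the `$P^{\bot}$ versus $P$' asymmetry built into $\suba{\pton P}$: without that asymmetry the induction would fail, since one side would carry $P^{\bot}$ and the other $P$. So the proof also serves as an a posteriori justification of that design choice in D.\ref{useful_subform}.

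I do not expect any serious obstacle beyond bookkeeping: one must simply verify that the negation table and the clauses of $\sigma,\tau$ are aligned connective-by-connective, and handle the atomic base case according to whatever identification of $\{p\}$ with $\{\bar{p}\}$ is tacit in the set-builder notation (the natural reading being through $\cdot^{\bot}$ itself). The only place where care is needed is the shift clause, already discussed above.
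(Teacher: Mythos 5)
Your proof is exactly the mutual induction the paper alludes to (its entire proof is the phrase ``a straightforward induction proves''), and your connective-by-connective verification checks out, including the one genuinely interesting case: $\suba{\pton P}=\{P^{\bot}\}\cup\subb{P}$ against $\subb{(\pton P)^{\bot}}=\subb{\ntop P^{\bot}}=\{P^{\bot}\}\cup\suba{P^{\bot}}$, which closes precisely because of the $P^{\bot}$ built into D.\ref{useful_subform}. Your caveat about the atoms is also warranted: as literally written, $\suba{\bar{p}}=\{\bar{p}\}$ while $\subb{\bar{p}^{\bot}}=\subb{p}=\{p\}$, so the base case fails without emending the definition (presumably to $\suba{\bar{p}}\defeq\{p\}$, which is also what the rest of $\S$3.3 requires, since $X^{\tau}$ must contain the positive atoms occurring in structures); this is a defect of the paper's definition rather than of your argument.
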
 
\noindent In what is to follow, fix a set $X$ of negative formulas. 
Concepts are defined relative to $X$, its intended instantiation being as follows. The aim of this section being the normalization of polarized derivations, we fix $X$ by $\{\Gamma^-,\Delta^-\}$ for an initial goal sequent $\cnlgseq{\Gamma}{\Delta}$ according to D.\ref{polar_struc}. This determines a \textit{set} of goal sequents, as further elaborated upon below, s.t. the latter's strongly focused provability guarantees the provability of the initial $\cnlgseq{\Gamma}{\Delta}$ (cf. C.\ref{final_main_theorem}). While not a necessary ingredient for the definitions to follow, being easily ignorable, we find the explicit parameterization over such sets better emphasizes the goal-driven nature of the current take on focalization, while furthermore serving as an explicit check on the satisfaction of the subformula property. In particular, inference rules are defined only for the members of the closure 
$X^{\tau}\defeq\{\subb{\ntop N} \ | N\in X\}$ of $X$ under $\subb{\cdot}$.\footnote{Bottom-up variations on focused proof search have also been considered by Chaudhuri and Pfenning (\cite{chaudhuripfenning}) in the context of linear logic.} The following is a simple case analysis.
\begin{lem}
If $N\in X^{\tau}$, then $\subb{\ntop N}\subseteq X^{\tau}$.
\end{lem}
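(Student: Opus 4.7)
The plan is to reduce the claim to a closure property of the operators $\suba{\cdot}$ and $\subb{\cdot}$, which is then discharged by a straightforward mutual induction. I read the definition $X^{\tau}\defeq\{\subb{\ntop N}\ |\ N\in X\}$ as the union $\bigcup_{M\in X}\subb{\ntop M}$ (otherwise the statement would be ill-typed). So assuming $N\in X^{\tau}$, one picks $M\in X$ with $N\in\subb{\ntop M}=\{M\}\cup\suba{M}$, and it suffices to show $\subb{\ntop N}\subseteq\subb{\ntop M}$.

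The case $N=M$ is immediate. For the case $N\in\suba{M}$, since $\subb{\ntop N}=\{N\}\cup\suba{N}$ and $N\in\suba{M}$ already, I only need $\suba{N}\subseteq\suba{M}$. This in turn follows from the following sub-lemma, established by simultaneous induction on negative $N$ and positive $P$:
\begin{center}
\begin{tabular}{l}
(a) If $M$ negative and $K\in\suba{M}$: $\suba{K}\subseteq\suba{M}$ when $K$ negative, and $\subb{K}\subseteq\suba{M}$ when $K$ positive. \\
(b) Dually, if $Q$ positive and $K\in\subb{Q}$: $\suba{K}\subseteq\subb{Q}$ when $K$ negative, and $\subb{K}\subseteq\subb{Q}$ when $K$ positive.
\end{tabular}
\end{center}

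The base and binary cases (e.g.\ $\suba{M\land N}$, $\suba{M\lgplus N}$, $\suba{M/Q}$, $\suba{Q\bs M}$ and their duals) are routine: each $\suba{\cdot}$ or $\subb{\cdot}$ decomposes into a union of the operators applied to smaller formulas, and the induction hypothesis kicks in. The only genuinely interesting cases are the shift clauses $\suba{\pton P}=\{P^{\bot}\}\cup\subb{P}$ and $\subb{\ntop N}=\{N\}\cup\suba{N}$. I expect the shift case for $\suba{\pton P}$ to be the main (very minor) obstacle: when $K=P^{\bot}$, one needs $\suba{P^{\bot}}\subseteq\suba{\pton P}$, which is exactly where the previously stated identity $\suba{N}=\subb{N^{\bot}}$ enters, yielding $\suba{P^{\bot}}=\subb{P}\subseteq\{P^{\bot}\}\cup\subb{P}=\suba{\pton P}$. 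When instead $K\in\subb{P}$, the induction hypothesis on $P$ supplies either $\suba{K}\subseteq\subb{P}$ or $\subb{K}\subseteq\subb{P}$, both contained in $\suba{\pton P}$. The dual shift case for $\subb{\ntop N}$ is symmetric.

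Combining: for $N\in\suba{M}$, clause (a) of the sub-lemma gives $\suba{N}\subseteq\suba{M}$, hence $\subb{\ntop N}=\{N\}\cup\suba{N}\subseteq\{M\}\cup\suba{M}=\subb{\ntop M}\subseteq X^{\tau}$, as required. The whole argument is essentially bookkeeping, matching the author's description of the statement as a simple case analysis.
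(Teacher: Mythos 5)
Your proof is correct, and it takes the route the paper itself merely gestures at: the paper's entire justification is the phrase ``a simple case analysis,'' whereas you have filled in the actual content, namely reading $X^{\tau}$ as the union $\bigcup_{M\in X}\subb{\ntop M}$, reducing the claim to transitivity of the subformula operators, and discharging the one non-routine step (the clause $\suba{\pton P}=\{P^{\bot}\}\cup\subb{P}$ with $K=P^{\bot}$) via the previously established identity $\suba{P^{\bot}}=\subb{P}$. Nothing is missing; the mutual induction is well-founded and every case checks out.
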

\begin{defs}
Structures are revised so as to prohibit positive formulas other than of the form $\ntop N$ or $p$. In the former case, we can leave the shift $\ntop$ implicit, arriving at the following definition, where $p,N\in X^{\tau}$ in the base cases:
\begin{center}
\begin{tabular}{rcl}
$\svard,\svare,\svarf$ & $::=$ & $p$ $|$ $N$ $|$ $(\stimes{\svard}{\svare})$ $|$ $(\sos{\svard}{\svare})$ $|$ $(\sobs{\svare}{\svard})$ \\
$\omega$ & $::=$ & $\svard,\svare$
\end{tabular}
\end{center}
The interpretation of structures $\svard$ by dual formulas $\svard^{\inp}$ and $\svard^{\outp}$ is a straightforward adaption of D.\ref{polar_struc_form}, where in particular $N^{\inp}=N$ and $N^{\outp}=N^{\bot}$. Note we do not require $\svard^{\inp},\svard^{\outp}\in X^{\tau}$.
\end{defs}
\noindent Since the logical vocabulary remains unchanged from $\S$3.2, we have chosen not to overload notation any further and use different metavariables for denoting structures to prevent confusion.

\begin{defs}
To absorb the display postulates, we resort to the use of \textit{contexts} $\presa[\svare]$, representing presentations with a distinguished occurrence of $\svarb$.
\begin{center}
\begin{tabular}{rclrcl}
$\svard[],\svare[]$ & $::=$ & $[]$ $|$ $(\stimes{\svard[]}{\svare})$ $|$ $(\stimes{\svard}{\svare[]})$ & 
$\presa[]$ & $::=$ & $\pres{\svard[]}{\svare}$ $|$ $\pres{\svard}{\svare[]}$ \\
 & $|$ & $(\sos{\svard[]}{\svare})$ $|$ $(\sos{\svard}{\svare[]})$ \\
 & $|$ & $(\sobs{\svare}{\svard[]})$ $|$ $(\sobs{\svare[]}{\svard})$
\end{tabular}
\end{center}
Let $\svard[\svare]$ ($\presa[\svare]$) denote the result of substituting $\svare$ for the unique occurrence of $[]$ in $\svard[]$ ($\presa[]$). For $\svard[]$ and $\svare[]$ contexts, we denote by $\svard[\svare[]]$ their composition, with insertion of $\svarf$ understood as the insertion of $\svare[\svarf]$ in $\svard[]$.
\end{defs}
\begin{defs}
The map $\div$ takes pairs of contexts and structures into structures, being defined by induction over its first argument:
\begin{center}
\begin{tabular}{r@{}c@{}l@{}c@{}r@{}c@{}l}
$\displ{[]}{\svarf}$ & \ \ $\defeq$ \ \ & $\svarf$ \\
$\displ{(\stimes{\svard[]}{\svare})}{\svarf}$ & \ $\defeq$ \ & $\displ{\svard[]}{(\sobs{\svare}{\svarf})}$ & \ \ \ \ &
$\displ{(\stimes{\svard}{\svare[]})}{\svarf}$ & \ \ $\defeq$ \ \ & $\displ{\svare[]}{(\sos{\svarf}{\svard})}$ \\
$\displ{(\sos{\svard[]}{\svare})}{\svarf}$ & \ $\defeq$ \ & $\displ{\svard[]}{(\stimes{\svare}{\svarf})}$ & &
$\displ{(\sobs{\svare}{\svard[]})}{\svarf}$ & \ $\defeq$ \ & $\displ{\svard[]}{(\stimes{\svarf}{\svare})}$ \\
$\displ{(\sos{\svard}{\svare[]})}{\svarf}$ & \ $\defeq$ \ & $\displ{\svard[]}{(\sobs{\svarf}{\svard})}$ & &
$\displ{(\sobs{\svare[]}{\svard})}{\svarf}$ & \ $\defeq$ \ & $\displ{\svard[]}{(\sos{\svard}{\svarf})}$
\end{tabular}
\end{center}
The intuition we pursue is that $\cnlgseq{\svard[\svare]}{\svarf}$ iff $\cnlgseq{\displ{\svard[]}{\svarf}}{\svare}$ through the display postulates. In particular, defining $\presa^*[]$ by $\displ{\svard[]}{\svare}$ for any $\presa[]=\pres{\svard[]}{\svare}$ or $\presa[]=\pres{\svare}{\svard[]}$, $\presa[\svard]\vdash$ iff $\cnlgseq{\presa^*[]}{\svard}$.
\end{defs}

\begin{defs}\label{def_invp}
For each positive $P$, the set $\invp{P}$ decomposes $P$ into its structural counterparts:
\begin{center}
\begin{tabular}{rclrcl}
$\invp{P\lgtimes Q}$ & $\defeq$ & $\{\stimes{\svard}{\svare} \ | \ \svard\in\invp{P}, \ \svare\in\invp{Q}\}$ &
$\invp{P\lor Q}$ & $\defeq$ & $\invp{P}\mathbin{\bigcup}\invp{Q}$ \\
$\invp{P\os N}$ & $\defeq$ & $\{\sos{\svard}{\svare} \ | \ \svard\in\invp{P}, \ \svare\in\invp{N^{\bot}}\}$ &
$\invp{p}$ & $\defeq$ & $\{ p\}$ \\
$\invp{N\obs P}$ & $\defeq$ & $\{\sobs{\svare}{\svard} \ | \ \svard\in\invp{P}, \ \svare\in\invp{N^{\bot}}\}$ &
$\invp{\ntop N}$ & $\defeq$ & $\{ N\}$
\end{tabular}
\end{center}
One easily shows that if $N\in X^{\tau}$, then $\invp{N^{\bot}}$ is well-defined relative to $X^{\tau}$.
\end{defs}

\noindent Figure \ref{cnlg_strongfoc_approx} provides a first approximation of strong focalization. Compared to Figure \ref{cnlg_weakfoc}, invertible inferences are compiled away into the right introduction of $\ntop$, ensuring their greedy application. Roughly, the inference of $\ntop N$ requires a premise for each element of $\invp{N^{\bot}}$, calling to attention the fact that the only branching left introductions of Figure \ref{cnlg_weakfoc} are those introducing additives, and similarly $\invp{N^{\bot}}$ is a singleton iff no additives are encountered in $N^{\bot}$ up to the first immediate polarity switches. Save for the above revision of $(\ntop R)$ and the renaming of $(\ntop L)$ into decisions $(D)$, right introductions remain unaltered from F.\ref{cnlg_weakfoc}, violating our faithfulness to $X^{\tau}$. What is needed is a reformulation of $(D)$ so as to take the sets $\invp{P}$ into account, just like we did for the invertible phase.

\begin{figure}
\begin{center}
\begin{tabular}{ccc}
\multicolumn{3}{c}{
$\infer[D]{\presa[N]\vdash}{\cnlgseqp{\presa^*[]}{N^{\bot}}}$ \ \ \ \ $\infer[\ntop]{\cnlgseqp{\svard}{\ntop N}}{\{\cnlgseq{\svard}{\svare} \ | \ \svare\in\invp{N^{\bot}}\}}$} \\ \\
$\infer[I]{\cnlgseqp{p}{p}}{}$ &
$\infer[\lor^l]{\cnlgseqp{\Gamma}{P\lor Q}}{\cnlgseqp{\Gamma}{P}}$ &
$\infer[\lor^r]{\cnlgseqp{\Gamma}{P\lor Q}}{\cnlgseqp{\Gamma}{Q}}$ \\ \\
$\infer[\lgtimes]{\cnlgseqp{\stimes{\svard}{\svare}}{P\lgtimes Q}}{\cnlgseqp{\svard}{P} & \cnlgseqp{\svare}{Q}}$ &
$\infer[\obs]{\cnlgseqp{\sobs{\svare}{\svard}}{N\obs P}}{\cnlgseqp{\svare}{N^{\bot}} & \cnlgseqp{\svard}{P}}$ &
$\infer[\os]{\cnlgseqp{\sos{\svard}{\svare}}{P\os N}}{\cnlgseqp{\svare}{N^{\bot}} & \cnlgseqp{\svard}{P}}$ 
\end{tabular}
\end{center}
\caption{A first approximation of strongly focalized derivations.}
\label{cnlg_strongfoc_approx}
\end{figure}

\begin{defs} Figure \ref{cnlg_strongfoc} defines strong normalization for \calclg$_{\emptyset}$, involving judgements $\cnlgseq{\svard}{\svare}$ and $\cnlgseqp{\svard}{\svare}$. The latter addresses context splitting during the non-invertible phase (i.e., the distribution of a structure appearing in the conclusion over the premises), and replaces the previous judgement form $\cnlgseqp{\svard}{P}$. In particular, right introductions of $\ntop$ have been renamed Reactions $(R)$, while $(\stimes{}{})$, $(\sos{}{})$ and $(\sobs{}{})$ resemble $(\lgtimes R)$, $(\os R)$ and $(\obs R)$ respectively. The remaining $(\lor R^l)$ and $(\lor R^r)$ are compiled away into the revised Decisions $(D)$. Optional structural extensions are listed in Figure \ref{cnlg_strongfoc_struc}.
\end{defs}

\begin{figure}
\begin{center}
\begin{tabular}{ccc}
\multicolumn{3}{c}{$\begin{array}{cl}\begin{array}{c}\infer[D]{\presa[N]\vdash}{\cnlgseqp{\presa^*[]}{\svare}}\end{array} & \textrm{for any }N\in X^{\tau}\textrm{, }\svare\in\invp{N^{\bot}} \\ \\
\infer[I]{\cnlgseqp{p}{p}}{} &
\infer[R]{\cnlgseqp{\svard}{N}}{\{\cnlgseq{\svard}{\svare} \ | \ \svare\in\invp{N^{\bot}}\}}
\end{array}$} \\ \\
$\infer[\stimes{}{}]{\cnlgseqp{\stimes{\svard}{\svare}}{\stimes{\svard'}{\svare'}}}{\cnlgseqp{\svard}{\svard'} & \cnlgseqp{\svare}{\svare'}}$ &
$\infer[\sobs{}{}]{\cnlgseqp{\sobs{\svare}{\svard}}{\sobs{\svare'}{\svard'}}}{\cnlgseqp{\svare}{\svare'} & \cnlgseqp{\svard}{\svard'}}$ &
$\infer[\sos{}{}]{\cnlgseqp{\sos{\svard}{\svare}}{\sos{\svard'}{\svare'}}}{\cnlgseqp{\svare}{\svare'} & \cnlgseqp{\svard}{\svard'}}$ 
\end{tabular}
\end{center}
\caption{Strongly focalized derivations: base logic.}
\label{cnlg_strongfoc}
\end{figure}

\begin{figure}[h]
\begin{center}
\begin{tabular}{ccc}
\multicolumn{3}{c}{CNL} \\ \\
$\infer=[\cnlos]{\cnlgseqp{\svard[\sos{\svare_1}{\svare_2}]}{\svarf}}{\cnlgseqp{\svard[\stimes{\svare_1}{\svare_2}]}{\svarf}}$ & &
$\infer=[\cnlobs]{\cnlgseqp{\svard[\sobs{\svare_1}{\svare_2}]}{\svarf}}{\cnlgseqp{\svard[\stimes{\svare_1}{\svare_2}]}{\svarf}}$ \\ \\
\multicolumn{3}{c}{Linear Distributivity} \\ \\
$\infer[\grisha{I}{1a}]{\cnlgseqp{\svard[\sos{(\stimes{\svarb_1}{\svarb_2})}{\svarb_3}]}{\svarf}}{\cnlgseqp{\svard[\stimes{\svarb_1}{(\sos{\svarb_2}{\svarb_3})}]}{\svarf}}$ & &
$\infer[\grisha{I}{1b}]{\cnlgseqp{\svard[\sobs{\svarb_1}{(\stimes{\svarb_2}{\svarb_3})}]}{\svarf}}{
\cnlgseqp{\svard[\sobs{(\sos{\svarb_1}{\svarb_2})}{\svarb_3}]}{\svarf}}$ \\ \\
$\infer[\grisha{I}{2a}]{\cnlgseqp{\svard[\sobs{\svarb_1}{(\stimes{\svarb_2}{\svarb_3})}]}{\svarf}}{\cnlgseqp{\svard[\stimes{(\sobs{\svarb_1}{\svarb_2})}{\svarb_3}]}{\svarf}}$ & &
$\infer[\grisha{I}{2b}]{\cnlgseqp{\svard[\sos{(\stimes{\svarb_1}{\svarb_2})}{\svarb_3}]}{\svarf}}{\cnlgseqp{\svard[\sos{\svarb_1}{(\sobs{\svarb_2}{\svarb_3})}]}{\svarf}}$ \\ \\
$\infer[\grishc{I}{a}]{\cnlgseqp{\svard[\sobs{\svarb_2}{(\stimes{\svarb_1}{\svarb_3})}]}{\svarf}}{\cnlgseqp{\svard[\stimes{\svarb_1}{(\sobs{\svarb_2}{\svarb_3})}]}{\svarf}}$ & &
$\infer[\grishc{I}{b}]{\cnlgseqp{\svard[\sobs{\svarb_2}{(\stimes{\svarb_3}{\svarb_1})}]}{\svarf}}{\cnlgseqp{\svard[\sobs{(\sobs{\svarb_1}{\svarb_2})}{\svarb_3}]}{\svarf}}$ \\ \\
$\infer[\grishc{I}{c}]{\cnlgseqp{\svard[\sos{(\stimes{\svarb_1}{\svarb_3})}{\svarb_2}]}{\svarf}}{\cnlgseqp{\svard[\stimes{(\sos{\svarb_1}{\svarb_2})}{\svarb_3}]}{\svarf}}$ & &
$\infer[\grishc{I}{d}]{\cnlgseqp{\svard[\sos{(\stimes{\svarb_3}{\svarb_1})}{\svarb_2}]}{\svarf}}{\cnlgseqp{\svard[\sos{\svarb_1}{(\sos{\svarb_2}{\svarb_3})}]}{\svarf}}$ 
\end{tabular}
\end{center}
\caption{Structural rules for strong focalization, applied during context-splitting. Compared to Figure \ref{neutraldispl_grishin}: each mixed associativity principle  is split into two rules, while each mixed commutativity principle splits into four. (REVISE)}
\label{cnlg_strongfoc_struc}
\end{figure}

\noindent The current treatment of structural rules emphasizes their contribution to context splitting. In particular, whereas for the base logic the latter process is easily seen to be deterministic, the same cannot be said of the structural extensions. Compare this situation to those of logics less resource sensitive, where the non-determinism of context splitting is left implicit in the representation of sequents using lists or (multi)sets of formulas.

\begin{rem}
By restricting to structures containing no non-atomic positive formulas other than $\ntop N$, focused proof search may proceed from a non-singleton set of initial goal presentations. In particular, we will prove in $\S$4 (cf. C.\ref{final_main_theorem}) that if $\cnlgseq{\svara}{\svarb}$ according to F.\ref{cnlg_weakfoc}, then also $\cnlgseq{\svard}{\svare}$ for all $\svard\in\invp{\svara^{\inp}}$ and $\svare\in\invp{\svarb^{\inp}}$, and vice versa.
\end{rem}

\begin{vb}
Figures \ref{cnlg_strongfoc_exa} and \ref{cnlg_strongfoc_exb} revisit the derivations of F.\ref{example_deriv} from the point of view of focalization, applying the decorations of D.\ref{shift_decoration}. Note that only a single focused counterpart remains for $\cnlgseq{\stimes{(\stimes{p/q}{q})}{p\bs r}}{\bar{r}}$.
\end{vb}

\begin{figure}[!]
\begin{center}
\begin{tabular}{ccc}
$\infer[D]{\cnlgseq{\stimes{(\stimes{\pton p/q}{q})}{p\bs\pton r}}{\bar{r}}}{
\infer[\sobs{}{}]{\cnlgseqp{\sobs{q}{(\sobs{p\bs\pton r}{\bar{r}})}}{\sobs{q}{\bar{p}}}}{
\infer[I]{\cnlgseqp{q}{q}}{} & 
\infer[R]{\cnlgseqp{\sobs{p\bs\pton r}{\bar{r}}}{\bar{p}}}{
\infer[D]{\cnlgseq{\sobs{p\bs\pton r}{\bar{r}}}{p}}{
\infer[\sos{}{}]{\cnlgseqp{\sos{\bar{r}}{p}}{\sos{\bar{r}}{p}}}{
\infer[I]{\cnlgseqp{p}{p}}{} &
\infer[R]{\cnlgseqp{\bar{r}}{\bar{r}}}{
\infer[D]{\cnlgseq{\bar{r}}{r}}{
\infer[I]{\cnlgseqp{r}{r}}{}}}}}}}}$ & &

$\infer[D]{\cnlgseq{\stimes{(\stimes{\pton p/q}{q})}{p\bs\pton r}}{\bar{r}}}{
\infer[\sobs{}{}]{\cnlgseqp{\sos{\bar{r}}{(\stimes{\pton p/q}{q})}}{\sos{\bar{r}}{p}}}{
\infer[\times]{\cnlgseqp{\stimes{\pton p/q}{q}}{p}}{} &
\infer[R]{\cnlgseqp{\bar{r}}{\bar{r}}}{
\infer[D]{\cnlgseq{\bar{r}}{r}}{
\infer[I]{\cnlgseqp{r}{r}}{}}}}}$
\end{tabular}
\end{center}
\caption{Deriving $\cnlgseq{\stimes{(\stimes{\pton p/q}{q})}{p\bs\pton r}}{\bar{r}}$. Of the two derivations for the unpolarized $\cnlgseq{\stimes{(\stimes{p/q}{q})}{p\bs r}}{\bar{r}}$ in Figure \ref{example_deriv}, only one is preserved.}
\label{cnlg_strongfoc_exa}
\end{figure}

\begin{figure}[h]
\begin{center}
\begin{tabular}{c}\small
$\infer[D]{\cnlgseq{\stimes{\pton p/\ntop(q\bs\pton p)}{\ntop(\pton p/\ntop(q\bs\pton p))\bs\pton p}}{\bar{p}}}{
\infer[\sobs{}{}]{\cnlgseqp{\sobs{\ntop(\pton p/\ntop(q\bs\pton p))\bs\pton p}{\bar{p}}}{\sobs{q\bs\pton p}{\bar{p}}}}{
\infer[R]{\cnlgseqp{\ntop(\pton p/\ntop(q\bs\pton p))\bs\pton p}{q\bs\pton p}}{
\infer[D]{\cnlgseq{\ntop(\pton p/\ntop(q\bs\pton p))\bs\pton p}{\sos{\bar{p}}{q}}}{
\infer[\sos{}{}]{\cnlgseqp{\sos{\bar{p}}{q}}{\sos{\bar{p}}{\pton p/\ntop(q\bs\pton p)}}}{
\infer[R]{\cnlgseqp{q}{\pton p/\ntop(q\bs\pton p)}}{
\infer[D]{\cnlgseq{q}{\sobs{q\bs\pton p}{\bar{p}}}}{
\infer[\sos{}{}]{\cnlgseqp{\sos{\bar{p}}{q}}{\sos{\bar{p}}{q}}}{
\infer[I]{\cnlgseqp{q}{q}}{} &
\infer[R]{\cnlgseqp{\bar{p}}{\bar{p}}}{
\infer[D]{\cnlgseq{\bar{p}}{p}}{
\infer[I]{\cnlgseqp{p}{p}}{}}
}}}} &
\infer[R]{\cnlgseqp{\bar{p}}{\bar{p}}}{
\infer[D]{\cnlgseq{\bar{p}}{p}}{
\infer[I]{\cnlgseqp{p}{p}}{}}
}}}} &
\infer[R]{\cnlgseqp{\bar{p}}{\bar{p}}}{
\infer[D]{\cnlgseq{\bar{p}}{p}}{
\infer[I]{\cnlgseqp{p}{p}}{}}}}}$ \ \ \ 

$\infer[D]{\cnlgseq{\stimes{\pton p/\ntop(q\bs\pton p)}{\ntop(\pton p/\ntop(q\bs\pton p))\bs\pton p}}{\bar{p}}}{
\infer[\sos{}{}]{\cnlgseqp{\sos{\bar{p}}{\pton p/\ntop(q\bs\pton p)}}{\sos{\bar{p}}{\pton p/\ntop(q\bs\pton p)}}}{
\infer[R]{\cnlgseqp{\pton p/\ntop(q\bs\pton p))}{\pton p/\ntop(q\bs\pton p))}}{
\infer[D]{\cnlgseq{\pton p/\ntop(q\bs\pton p))}{\sobs{q\bs\pton p}{\bar{p}}}}{
\infer[\sobs{}{}]{\cnlgseqp{\sobs{q\bs\pton p}{\bar{p}}}{\sobs{q\bs\pton p}{\bar{p}}}}{
\infer[R]{\cnlgseqp{q\bs\pton p}{q\bs\pton p}}{
\infer[D]{\cnlgseq{q\bs\pton p}{\sos{\bar{p}}{q}}}{
\infer[\sos{}{}]{\cnlgseqp{\sos{\bar{p}}{q}}{\sos{\bar{p}}{q}}}{
\infer[I]{\cnlgseqp{q}{q}}{} &
\infer[I]{\cnlgseqp{\bar{p}}{\bar{p}}}{
\infer[D]{\cnlgseq{\bar{p}}{p}}{
\infer[I]{\cnlgseqp{p}{p}}{}}}}}} & 
\infer[I]{\cnlgseqp{\bar{p}}{\bar{p}}}{
\infer[D]{\cnlgseq{\bar{p}}{p}}{
\infer[I]{\cnlgseqp{p}{p}}{}}}}}} &
\infer[I]{\cnlgseqp{\bar{p}}{\bar{p}}}{
\infer[D]{\cnlgseq{\bar{p}}{p}}{
\infer[I]{\cnlgseqp{p}{p}}{}}}}}$
\end{tabular}
\end{center}
\caption{Deriving $\cnlgseq{\stimes{\pton p/\ntop(q\bs\pton p)}{\ntop(\pton p/\ntop(q\bs\pton p))\bs\pton p}}{\bar{p}}$. Both derivations of the unpolarized $\cnlgseq{\sobs{(p/(q\bs p))\bs p}{\bar{p}}}{p/(q\bs p)}$ from Figure \ref{example_deriv} are preserved.}
\label{cnlg_strongfoc_exb}
\end{figure}

\noindent We ensure closure under the display postulates and Linear Distributivity (F.\ref{neutraldispl_grishin}). 

\begin{lem}\label{closure_displ} For any $\svard,\svare,\svarf$, we have the following implications:
\begin{center}
\begin{tabular}{ll}
(a) & $\cnlgseq{\svare}{\svard}$ implies $\cnlgseq{\svard}{\svare}$ \\
(b) & $\cnlgseq{\stimes{\svard}{\svare}}{\svarf}$ implies $\cnlgseq{\svard}{\sobs{\svare}{\svarf}}$ \\
(c) & $\cnlgseq{\svard}{\stimes{\svare}{\svarf}}$ implies $\cnlgseq{\sos{\svard}{\svare}}{\svarf}$
\end{tabular}
\end{center}
\end{lem}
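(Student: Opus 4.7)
The key fact is that in Figure \ref{cnlg_strongfoc} the decision rule (D) is the only inference whose conclusion has the form $\presa\vdash$; all other rules derive stoup judgements $\cnlgseqp{\cdot}{\cdot}$. Hence any derivation of one of the hypotheses of (a), (b), (c) must end in (D), exposing a distinguished $N\in X^{\tau}$ at some position within the given presentation, a choice of $\Sigma\in\invp{N^{\bot}}$, and a premise derivation of $\cnlgseqp{\presa^*[]}{\Sigma}$. My plan is to show, in each case, that the presentation on the right-hand side admits an application of (D) with the \emph{same} $N$ and $\Sigma$ and, crucially, the \emph{same} value of $\presa^*[]$; the identical premise derivation then transports across, and no recursion into the subderivation is required. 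Thus the proof is essentially an inversion followed by bookkeeping.

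The bookkeeping reduces to a collection of invariances of $\presa^*[]$ under the display reshufflings, each immediate from the defining clauses of $\div$. For (a), the contexts in question are $\pres{\svard[]}{\svare}$ and $\pres{\svare}{\svard[]}$; by definition both yield $\presa^*[]=\displ{\svard[]}{\svare}$, so (a) is an immediate consequence of the symmetric treatment of the two components in $\presa^*[]$. For (b), one case-splits on whether the distinguished $N$ lies in $\svard$, $\svare$, or $\svarf$, and writes the distinguished substructure as $\svard'[N]$, $\svare'[N]$, or $\svarf'[N]$ accordingly. One then checks the equalities $\displ{\stimes{\svard'[]}{\svare}}{\svarf}=\displ{\svard'[]}{\sobs{\svare}{\svarf}}$ for the first subcase, $\displ{\stimes{\svard}{\svare'[]}}{\svarf}=\displ{\svare'[]}{\sos{\svarf}{\svard}}=\displ{\sobs{\svare'[]}{\svarf}}{\svard}$ for the second, and $\displ{\svarf'[]}{\stimes{\svard}{\svare}}=\displ{\sobs{\svare}{\svarf'[]}}{\svard}$ for the third. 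Each is a one-step unfolding of the clause of $\div$ matching the outermost constructor of the context.

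Case (c) is symmetric and uses the dual identities $\displ{\sos{\svard'[]}{\svare}}{\svarf}=\displ{\svard'[]}{\stimes{\svare}{\svarf}}$, $\displ{\sos{\svard}{\svare'[]}}{\svarf}=\displ{\svare'[]}{\sobs{\svarf}{\svard}}$, together with the clauses for $\stimes{}{}$ when $N$ lives inside the $\stimes{\svare}{\svarf}$ factor. The whole argument is thus a short inversion followed by routine $\div$-unfoldings; there is no genuine obstacle, only the minor annoyance of enumerating subcases for the location of the distinguished $N$ and matching each to the appropriate clause of $\div$.
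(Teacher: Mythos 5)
Your proposal is correct and is essentially the paper's own proof: invert the final (and only possible) rule $(D)$, observe that the corresponding $\presa^*[]$ computed via $\div$ is unchanged by the display reshuffling, and reapply $(D)$ with the identical premise. The paper only writes out case (b) with $N$ in $\svard$ and leaves the rest as "similar"; your explicit enumeration of the remaining $\div$-identities fills in exactly what the paper omits.
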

\begin{proof}
As a typical case, we check (b). Evidently, any derivation of $\cnlgseq{\stimes{\svard}{\svare}}{\svarf}$ must end with an application of $(D)$. The main formula $N$ must occur in either $\svard$, $\svare$ or $\svarf$. Without loss of generality, assume $N$ is in $\svard$, i.e., $\svard=\svard'[N]$:
\begin{center}
$\infer[D]{\cnlgseq{\stimes{\svard'[N]}{\svare}}{\svarf}}{
\cnlgseqp{\displ{(\stimes{\svard'[]}{\svare})}{\svarf}}{\svarf'}}$
\end{center}
for some $\svarf'\in\invp{N^{\bot}}$. Since, by definition, $\displ{(\stimes{\svard'[]}{\svare})}{\svarf}=\displ{\svard'[]}{(\sobs{\svare}{\svarf})}$, we can also derive $\cnlgseq{\svard[N]}{\sobs{\svare}{\svarf}}$:
\begin{center}
$\infer[D]{\cnlgseq{\svard[N]}{\sobs{\svare}{\svarf}}}{
\cnlgseqp{\displ{\svard'[]}{(\sobs{\svare}{\svarf})}}{\svarf'}}$
\end{center}
\end{proof}

\begin{lem}\label{closure_grish}
We have the following admissible rules (compare with Figure \ref{neutraldispl_grishin}):
\begin{center}
\begin{tabular}{ll}
(a) & In the presence of $\grisha{I}{1a,b}$, $\cnlgseq{\stimes{\svara_1}{\svara_2}}{\stimes{\svarb_2}{\svarb_1}}$ if $\cnlgseq{\sos{\svara_2}{\svarb_2}}{\sos{\svarb_1}{\svara_1}}$ \\
(b) & In the presence of $\grisha{I}{2a,b}$, $\cnlgseq{\stimes{\svara_1}{\svara_2}}{\stimes{\svarb_2}{\svarb_1}}$ if $\cnlgseq{\sobs{\svarb_1}{\svara_1}}{\sobs{\svara_2}{\svarb_2}}$ \\
(c) & In the presence of $\grishc{I}{a-d}$, $\cnlgseq{\stimes{\svara_1}{\svara_2}}{\stimes{\svarb_2}{\svarb_1}}$ if $\cnlgseq{\sos{\svarb_2}{\svara_1}}{\sobs{\svarb_1}{\svara_2}}$ 
\end{tabular}
\end{center}
\end{lem}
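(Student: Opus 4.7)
The approach is uniform across (a), (b) and (c): given any derivation of the hypothesis we transform it into one of the conclusion by modifying only its last step. Indeed, every derivation of a judgement $\cnlgseq{\cdot}{\cdot}$ must end with an application of $(D)$, focusing some $N\in X^{\tau}$ that occurs in exactly one of the four component structures $\svara_1,\svara_2,\svarb_1,\svarb_2$. We case-split on this position and construct the conclusion derivation by applying $(D)$ on the very same occurrence of $N$, bridging the resulting premise to that of the hypothesis's $(D)$ via one application of a split structural rule from F.\ref{cnlg_strongfoc_struc}.

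As a representative case of (a), suppose $N\in\svara_2$, so that $\svara_2=\svara_2'[N]$. Unfolding the definition of $\div$, the premise of the hypothesis's $(D)$ takes the form $\cnlgseqp{\displ{\svara_2'[]}{(\stimes{\svarb_2}{(\sos{\svarb_1}{\svara_1})})}}{\svarf'}$ for some $\svarf'\in\invp{N^{\bot}}$, whereas applying $(D)$ on the same $N$ in the conclusion calls for a premise $\cnlgseqp{\displ{\svara_2'[]}{(\sos{(\stimes{\svarb_2}{\svarb_1})}{\svara_1})}}{\svarf'}$. The two differ only at the insertion point, and the rewrite from $\stimes{\svarb_2}{(\sos{\svarb_1}{\svara_1})}$ to $\sos{(\stimes{\svarb_2}{\svarb_1})}{\svara_1}$ is exactly $\grisha{I}{1a}$ (with the metavariables $\svarb_1,\svarb_2,\svarb_3$ of F.\ref{cnlg_strongfoc_struc} instantiated as $\svarb_2,\svarb_1,\svara_1$), applied inside the enclosing context $\displ{\svara_2'[]}{\cdot}$. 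An analogous computation shows that the other positions of $N$ in (a) reduce to $\grisha{I}{1b}$ (for $N\in\svara_1$ or $\svarb_2$) or again $\grisha{I}{1a}$ (for $N\in\svarb_1$).

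Parts (b) and (c) follow the same recipe, with $\grisha{I}{2a,b}$ replacing $\grisha{I}{1a,b}$ in (b), and with the four rules $\grishc{I}{a-d}$ covering the four cases of (c). It is precisely here that the fourfold split of the mixed-commutativity principle pays off: each placement of the focused $N$ in (c) demands a different local rewrite, whereas in (a) and (b) the positions pair up naturally ($\svara_1$ with $\svarb_2$, and $\svara_2$ with $\svarb_1$), so a twofold split suffices.

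The only real obstacle is the bookkeeping: mechanically unfolding $\div$ and $\presa^*[]$ in each of the twelve subcases and matching the resulting local rewrite against the intended split rule. Since the outer context $\displ{\svara_i'[]}{\cdot}$ or $\displ{\svarb_i'[]}{\cdot}$ is shared between hypothesis and conclusion, all the work happens at a single spot near the leaf, and once the pattern is identified in one representative subcase the remaining eleven follow by the same template.
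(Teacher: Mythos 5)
Your proposal is correct and follows essentially the same route as the paper's own proof: observe that the hypothesis derivation must end with $(D)$, case-split on which of the four component structures contains the focused $N$, reapply $(D)$ on that same occurrence in the conclusion, and bridge the two premises by unfolding $\div$ and applying the matching split structural rule from F.\ref{cnlg_strongfoc_struc} (your assignment of $\grisha{I}{1a}$ to the $\svara_2,\svarb_1$ positions and $\grisha{I}{1b}$ to $\svara_1,\svarb_2$ agrees with the paper's four displayed subcases). The observation about why associativity splits in two while commutativity splits in four is a nice bonus not made explicit in the paper, but the underlying argument is the same.
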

\begin{proof}
We demonstrate (a), the same technique applying for proving (b)-(f). Again, $\cnlgseq{\sos{\svard_2}{\svare_2}}{\sos{\svare_1}{\svard_1}}$ can only have been witnessed by a derivation ending with an application of $(D)$, so we consider four subcases, depending on whether the main formula $N$ is in $\svard_1$, $\svard_2$, $\svare_1$ or $\svare_3$:
\begin{center}
\begin{tabular}{cc}
$\infer[D]{\cnlgseq{\sos{\svard_2}{\svare_2}}{\sos{\svare_1}{\svard_1'[N]}}}{\cnlgseqp{\displ{\svard_1'[]}{(\sobs{(\sos{\svard_2}{\svare_2})}{\svare_1})}}{\svarf}}$ & 
$\infer[D]{\cnlgseq{\sos{\svard_2'[N]}{\svare_2}}{\sos{\svare_1}{\svard_1}}}{\cnlgseqp{\displ{\svard_2'[]}{(\stimes{\svare_2}{(\sos{\svare_1}{\svard_1})})}}{\svarf}}$ \\ \\
($N$ in $\svard_1$) & ($N$ in $\svard_2$) \\ \\
$\infer[D]{\cnlgseq{\sos{\svard_2}{\svare_2}}{\sos{\svare_1'[N]}{\svard_1}}}{\cnlgseqp{\displ{\svare_1'[]}{(\stimes{\svard_1}{(\sos{\svard_2}{\svare_2})})}}{\svarf}}$ &
$\infer[D]{\cnlgseq{\sos{\svard_2}{\svare_2'[N]}}{\sos{\svare_1}{\svard_1}}}{\cnlgseqp{\displ{\svare_2'[]}{(\sobs{(\sos{\svare_1}{\svard_1})}{\svard_2})}}{\svarf}}$ \\ \\
($N$ in $\svare_1$) & ($N$ in $\svare_2$)
\end{tabular}
\end{center}
for some $\svarf\in\invp{N^{\bot}}$. We receive the desired results by applications of $(\grisha{I}{1a/1b})$:
\begin{center}
\begin{tabular}{cc}
$\infer[D]{\cnlgseq{\stimes{\svard_1'[N]}{\svard_2}}{\stimes{\svare_2}{\svare_1}}}{
\infer[\grisha{I}{1b}]{\cnlgseqp{\displ{\svard_1'[]}{(\sobs{\svard_2}{(\stimes{\svare_2}{\svare_1})})}}{\svarf}}{\cnlgseqp{\displ{\svard_1'[]}{(\sobs{(\sos{\svard_2}{\svare_2})}{\svare_1})}}{\svarf}}}$ &
$\infer[D]{\cnlgseq{\stimes{\svard_1}{\svard_2'[N]}}{\stimes{\svare_2}{\svare_1}}}{
\infer[\grisha{I}{1a}]{\cnlgseqp{\displ{\svard_2'[]}{(\sos{(\stimes{\svare_2}{\svare_1})}{\svard_1})}}{\svarf}}{
\cnlgseqp{\displ{\svard_2'[]}{(\stimes{\svare_2}{(\sos{\svare_1}{\svard_1})})}}{\svarf}}}$ \\ \\
$\infer[D]{\cnlgseq{\stimes{\svard_1}{\svard_2}}{\stimes{\svare_2}{\svare_1'[N]}}}{
\infer[\grisha{I}{1a}]{\cnlgseqp{\displ{\svare_1'[]}{(\sos{(\stimes{\svard_1}{\svard_2})}{\svare_2})}}{\svarf}}{
\cnlgseqp{\displ{\svare_1'[]}{(\stimes{\svard_1}{(\sos{\svard_2}{\svare_2})})}}{\svarf}}}$ & 
$\infer[D]{\cnlgseq{\stimes{\svard_1}{\svard_2}}{\stimes{\svare_2'[N]}{\svare_1}}}{
\infer[\grisha{I}{1b}]{\cnlgseqp{\displ{\svare_2'[]}{(\sobs{\svare_1}{(\stimes{\svard_1}{\svard_2})})}}{\svarf}}{
\cnlgseqp{\displ{\svare_2'[]}{(\sobs{(\sos{\svare_1}{\svard_1})}{\svard_2})}}{\svarf}}}$
\end{tabular}
\end{center}
\end{proof}

\noindent We proceed to demonstrate soundness of strong focalization w.r.t. derivability in \calclg/\calccnl. Combined with T.\ref{unpol_weak}, all that will be left to explicate in order to close the square of F.\ref{summ_results} is the correspondence between weak and strong focalization, to which we will dedicate the entirety of $\S$4.

\begin{defs}
We define the forgetful maps taking a positive $P$ or negative $N$ of \lgpol \ (\cnlpol) into shift-free formulas $\forgetp{P}$ and $\forgetn{N}$ of \calclg \ (\calccnl):
\begin{center}
\begin{tabular}{rclcrcl}
$\forgetp{p}$ & $\defeq$ & $p$ & & $\forgetn{\bar{p}}$ & $\defeq$ & $\bar{p}$ \\
$\forgetp{P\lgtimes Q}$ & $\defeq$ & $\forgetp{P}\lgtimes\forgetp{Q}$ & & $\forgetn{M\lgplus N}$ & $\defeq$ & $\forgetn{M}\lgplus\forgetn{N}$ \\
$\forgetp{N\obs P}$ & $\defeq$ & $\forgetp{N}\obs\forgetp{P}$ & & $\forgetn{M/Q}$ & $\defeq$ & $\forgetn{M}/\forgetn{Q}$ \\
$\forgetp{P\os N}$ & $\defeq$ & $\forgetp{P}\os\forgetp{N}$ & & $\forgetn{Q\bs M}$ & $\defeq$ & $\forgetn{Q}\bs\forgetn{M}$ \\
$\forgetp{P\lor Q}$ & $\defeq$ & $\forgetp{P}\lor\forgetp{Q}$ & & $\forgetn{M\land N}$ & $\defeq$ & $\forgetn{M}\land\forgetn{N}$ \\
$\forgetp{\ntop N}$ & $\defeq$ & $\forgetn{N}$ & & $\forgetn{\pton P}$ & $\defeq$ & $\forgetp{P}$
\end{tabular}
\end{center}
At the level of structures, $\forgets{\svara}$ denotes the result of substituting occurrences of $N$ by $\forgetn{N}$, while leaving positive atoms intact. Finally, for presentations, $\forgets{\svard,\svare}\defeq\pres{\forgets{\svard}}{\forgets{\svare}}$.
\end{defs}

\noindent Our goal is demonstrate $\cnlgseq{\svard}{\svare}$ implies $\cnlgseq{\forgets{\svard}}{\forgets{\svare}}$.

\begin{lem}\label{soundness_displ}
For any $\svard[]$, $\svare$ and $\svarf$, $\cnlgseq{\forgets{\svard[\svarf]}}{\forgets{\svare}}$ iff $\cnlgseq{\forgets{\displ{\svard[]}{\svare}}}{\forgets{\svarf}}$.
\end{lem}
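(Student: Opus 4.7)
The plan is to induct on the structure of the context $\svard[]$. The key observation is that the forgetful map $\forgets{\cdot}$ commutes with all three structural combinators $\stimes{}{},\sos{}{},\sobs{}{}$, and these symbols are interpreted by the same structural constructors in the unpolarized sequent calculus of $\S 2$. Hence the unpolarized display postulates of F.\ref{neutraldispl} will directly bridge the two sides of each inductive step, while the definition of $\div$ was engineered precisely to track a maximal chain of such applications.

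For the base case $\svard[]=[]$, we have $\svard[\svarf]=\svarf$ and $\displ{[]}{\svare}=\svare$, so the claim reduces to the biconditional $\cnlgseq{\forgets{\svarf}}{\forgets{\svare}}$ iff $\cnlgseq{\forgets{\svare}}{\forgets{\svarf}}$, which is immediate from the presentation-swapping variant of $(dp)$.

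For the inductive step I treat the representative case $\svard[]=\stimes{\svard'[]}{\svare'}$; the remaining five ($\stimes{\svard'}{\svare[]}$, $\sos{\svard'[]}{\svare'}$, $\sos{\svard'}{\svare[]}$, $\sobs{\svare'}{\svard'[]}$, $\sobs{\svare[]}{\svard'}$) follow the identical template. By definition of $\div$, $\displ{\svard[]}{\svare}=\displ{\svard'[]}{(\sobs{\svare'}{\svare})}$. The left-hand side unfolds to
\begin{center}
$\cnlgseq{\stimes{\forgets{\svard'[\svarf]}}{\forgets{\svare'}}}{\forgets{\svare}},$
\end{center}
which by the unpolarized $(dp)$ rule for $\stimes{}{}/\sobs{}{}$ is interderivable with $\cnlgseq{\forgets{\svard'[\svarf]}}{\sobs{\forgets{\svare'}}{\forgets{\svare}}}$, equivalently $\cnlgseq{\forgets{\svard'[\svarf]}}{\forgets{\sobs{\svare'}{\svare}}}$. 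Applying the induction hypothesis to the strictly smaller context $\svard'[]$, with $\svare$ instantiated to $\sobs{\svare'}{\svare}$ and $\svarf$ unchanged, converts this to $\cnlgseq{\forgets{\displ{\svard'[]}{(\sobs{\svare'}{\svare})}}}{\forgets{\svarf}}$, which equals the desired right-hand side by the unfolding of $\div$ above.

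The other five inductive cases differ only in which clause of the $\div$ definition one unfolds and which of the two nontrivial unpolarized $(dp)$ rules one invokes: in each case the matching structural constructor is relocated across the turnstile, and one then appeals to the IH on the residual subcontext. I do not foresee a genuine obstacle — the lemma is essentially a bookkeeping verification that $\div$ faithfully records exactly the chain of display steps traversed in the unpolarized calculus.
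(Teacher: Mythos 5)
Your proof is correct and follows essentially the same route as the paper's: induction on the context $\svard[]$, with the base case discharged by the presentation-swapping instance of $(dp)$ and each inductive case handled by unfolding one clause of $\div$, relocating the outermost structural constructor across the turnstile via the corresponding display postulate, and then invoking the induction hypothesis on the residual subcontext (the paper likewise records, as you do via your opening observation, that $\forgets{\cdot}$ commutes with the structural constructors so that the IH applies to the rearranged right-hand structure). No substantive difference.
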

\begin{proof}
By induction on $\svard[]$. The base case is immediate from $(dp)$. For the inductive cases, we check $\svard[]=\stimes{\svard_1[]}{\svard_2}$ and $\svard[]=\stimes{\svard_1}{\svard_2[]}$:
\begin{center}
\begin{tabular}{ccc}
$\infer=[Dp]{\cnlgseq{\stimes{\forgets{\svard_1[\svarf]}}{\forgets{\svard_2}}}{\forgets{\svare}}}{
\infer=[IH]{\cnlgseq{\forgets{\svard_1[\svarf]}}{\sobs{\forgets{\svard_2}}{\forgets{\svare}}}}{
\cnlgseq{\forgets{\displ{\svard_1[]}{(\sobs{\svard_2}{\svare})}}}{\forgets{\svarf}}}}$ & &
$\infer=[Dp]{\cnlgseq{\sobs{\forgets{\svard_1[\svarf]}}{\forgets{\svard_2}}}{\forgets{\svare}}}{
\infer=[IH]{\cnlgseq{\forgets{\svard_1[\svarc]}}{\sos{\forgets{\svard_2}}{\forgets{\svarf}}}}{
\cnlgseq{\forgets{\displ{\svard_1[]}{(\sos{\svard_2}{\svare})}}}{\forgets{\svarf}}}}$
\end{tabular}
\end{center}
the desired result being immediate from $\displ{(\stimes{\svard_1[]}{\svard_2})}{\svare}=\displ{\svard_1[]}{(\sobs{\svard_2}{\svare})}$ and $\displ{(\sobs{\svard_1[]}{\svard_2})}{\svare}=\displ{\svard_1[]}{(\sos{\svard_2}{\svare})}$. 
In applying the induction hypothesis, we implicitly assumed $\forgets{\displ{\svard[]}{\forgets{\svare}}}=\forgets{\displ{\svard[]}{\svare}}$, which is easy to check.
\end{proof}
\begin{cor}\label{soundness_displ_eq} For any $\omega[]$ and $\svard$, $\forgets{\omega[\svard]}\vdash$ iff $\cnlgseq{\forgets{\omega^*[]}}{\forgets{\svard}}$.
\end{cor}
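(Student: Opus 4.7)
The plan is to reduce the corollary directly to L.\ref{soundness_displ} by unfolding the definitions of $\omega[]$ and $\omega^*[]$, and using the display postulate $(dp)$ of F.\ref{neutraldispl} to normalize the presentation to the form required by the lemma.

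Concretely, I split on the shape of the context $\omega[]$ according to D.\ref{polar_struc}, which is either $\pres{\svard'[]}{\svare}$ or $\pres{\svare}{\svard'[]}$; in both cases $\omega^*[]=\displ{\svard'[]}{\svare}$ by definition. Writing $\omega[\svard]$ out, the premise $\forgets{\omega[\svard]}\vdash$ becomes $\cnlgseq{\forgets{\svard'[\svard]}}{\forgets{\svare}}$ in the first case, and $\cnlgseq{\forgets{\svare}}{\forgets{\svard'[\svard]}}$ in the second. In the second case, a single application of $(dp)$ turns the sequent into $\cnlgseq{\forgets{\svard'[\svard]}}{\forgets{\svare}}$, reducing it to the first case. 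Now L.\ref{soundness_displ}, instantiated with the context $\svard'[]$, structure $\svare$, and displayed formula $\svard$, gives precisely $\cnlgseq{\forgets{\svard'[\svard]}}{\forgets{\svare}}$ iff $\cnlgseq{\forgets{\displ{\svard'[]}{\svare}}}{\forgets{\svard}}$, the right-hand side being $\cnlgseq{\forgets{\omega^*[]}}{\forgets{\svard}}$ as required.

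Since both directions of L.\ref{soundness_displ} are available, and $(dp)$ is itself a reversible rule, the chain of equivalences runs in both directions, establishing the biconditional. There is essentially no obstacle: the only small care needed is to verify that the definition of $\omega^*[]$ is independent of which side of $\omega[]$ hosts the hole, and that applying $\forgets{\cdot}$ commutes with the case analysis, both of which are immediate from the definitions.
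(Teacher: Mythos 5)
Your proof is correct and is exactly the intended argument: the paper states this as an immediate corollary of Lemma \ref{soundness_displ}, and the case split on which component of $\omega[]$ hosts the hole, together with one reversible application of $(dp)$ in the second case, is precisely how the gap is meant to be filled.
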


\begin{lem}\label{soundness_pos_inf}
For any $P$, $\svara$, if $\cnlgseq{\svara}{\forgets{\svare}}$ for all $\svare\in\invp{P}$, then $\cnlgseq{\svara}{\forgetp{P}}$. I.e., the following inference is admissible for (unpolarized) \calclg$_{\emptyset}$, and hence \calclg$_I$/\calccnl:
\begin{center}
$\infer{\cnlgseq{\svara}{\forgetp{P}}}{\{\cnlgseq{\svara}{\forgets{\svare}}\}_{\svare\in\invp{P}}}$
\end{center}
\end{lem}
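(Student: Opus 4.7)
I will proceed by induction on the structure of $P$. The base cases $P = p$ and $P = \ntop N$ are trivial: $\invp{P}$ is a singleton and the single hypothesis is literally $\cnlgseq{\svara}{\forgetp{P}}$.

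For the multiplicative cases, the plan is to introduce the outermost connective of $P$ last, using the corresponding \textbf{LG} rule ($\lgtimes$, $\os$, or $\obs$), while the display postulates do the work of reshaping the premises so that the induction hypothesis can be applied to the immediate positive subformulas of $P$. Take $P = P_1 \lgtimes P_2$, where the hypothesis reads $\cnlgseq{\svara}{\stimes{\forgets{\svard}}{\forgets{\svare}}}$ for all $\svard\in\invp{P_1}$ and $\svare\in\invp{P_2}$. Fixing $\svare$, the display postulates equivalently give $\cnlgseq{\sobs{\forgets{\svare}}{\svara}}{\forgets{\svard}}$ for each $\svard\in\invp{P_1}$; the IH on $P_1$ (with $\svara' = \sobs{\forgets{\svare}}{\svara}$) then yields $\cnlgseq{\sobs{\forgets{\svare}}{\svara}}{\forgetp{P_1}}$. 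Redisplaying turns this into $\cnlgseq{\sos{\svara}{\forgetp{P_1}}}{\forgets{\svare}}$, now valid for every $\svare\in\invp{P_2}$, so the IH on $P_2$ improves it to $\cnlgseq{\sos{\svara}{\forgetp{P_1}}}{\forgetp{P_2}}$; one more display step and an application of $(\lgtimes)$ close the case. The cases $P_1 \os N$ and $N \obs P_1$ follow the same two-round pattern, except that the second round is applied to the positive formula $N^{\bot}$ in place of $N$, and the outer rule is $(\os)$ or $(\obs)$; here I would also need the minor auxiliary fact that $\forgetp{N^{\bot}} = \forgetn{N}^{\bot}$, a straightforward induction on $N$.

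The additive case $P = P_1 \lor P_2$ is immediate: since $\invp{P_1 \lor P_2} = \invp{P_1} \cup \invp{P_2}$, the hypothesis already supplies the premises for the IH on $P_1$ and on $P_2$ separately, and the two resulting sequents combine via $(\lor)$ to give $\cnlgseq{\svara}{\forgetp{P_1}\lor\forgetp{P_2}}$.

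The main obstacle is not mathematical but notational: one must track, at each reshaping step, precisely which of the three display postulates is being applied and to which component of the structure, so that the substructure isolated on the right corresponds exactly to the subformula on which the IH is to be invoked. With this bookkeeping in order, no creative step remains, and well-foundedness of the induction is immediate from $|N^{\bot}| = |N|$.
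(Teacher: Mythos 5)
Your proposal matches the paper's own proof essentially step for step: the same induction on $P$, the same trivial treatment of $p$, $\ntop N$ and $P_1\lor P_2$, and for $P_1\lgtimes P_2$ exactly the same two-round scheme of displaying $\svare$ out of the way, applying the IH to $P_1$, redisplaying, applying the IH to $P_2$, and finishing with $(Dp)$ and $(\lgtimes)$. Your added note that the (co)implication cases need $\forgetp{N^{\bot}}=\forgetn{N}^{\bot}$ correctly fills in a case the paper leaves implicit.
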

\begin{proof}
By induction on $P$. If $P=p$ or $P=\ntop N$, $\invp{P}=P$ and the desired result is immediate. For the remaining inductive cases, we consider explicitly $P=P_1\lor P_2$ and $P=P_1\lgtimes P_2$. The former is demonstrated thus:
\begin{center}
$\infer[\lor]{\cnlgseq{\svara}{\forgetp{P_1}\lor\forgetp{P_2}}}{
\infer[IH]{\cnlgseq{\svara}{\forgetp{P_1}}}{\{\cnlgseq{\svara}{\svard}\}_{\svard\in\invp{P_1}}} &
\infer[IH]{\cnlgseq{\svara}{\forgetp{P_2}}}{\{\cnlgseq{\svara}{\svare}\}_{\svare\in\invp{P_2}}}}$
\end{center}
noting $\invp{P_1\lor P_2}=\invp{P_1}\bigcup\invp{P_2}$. In case $P=P_1\lgtimes P_2$, we have
\begin{center}
$\infer[\lgtimes]{\cnlgseq{\svara}{\forgetp{P_1}\lgtimes\forgetp{P_2}}}{
\infer=[Dp]{\cnlgseq{\svara}{\stimes{\forgetp{P_1}}{\forgetp{P_2}}}}{
\infer[IH]{\cnlgseq{\sos{\svara}{\forgetp{P_1}}}{\forgetp{P_2}}}{
\svare\in\invp{P_2} &
\infer=[Dp]{\cnlgseq{\sos{\svara}{\forgetp{P_1}}}{\forgets{\svare}}}{
\infer[IH]{\cnlgseq{\sobs{\svare}{\svara}}{\forgetp{P_1}}}{
\svard\in\invp{P_1} &
\infer=[Dp]{\cnlgseq{\sobs{\svare}{\svara}}{\svard}}{\cnlgseq{\svara}{\stimes{\svard}{\svare}}}}}}}}$
\end{center}
noting $\invp{P_1\lgtimes P_2}=\{\stimes{\svard}{\svare} \ | \ \svard\in\invp{P_1}, \ \svare\in\invp{P_2}\}$.
\end{proof}

\begin{lem}\label{soundness_neg_inf}
For any $N$ and $\svare\in\invp{N^{\bot}}$, $\cnlgseq{\forgetn{N}}{{\forgetp{\svare^+}}}$.
\end{lem}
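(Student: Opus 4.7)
The plan is to induct on the positive formula $N^{\bot}$, or equivalently on the structure of $\svare\in\invp{N^{\bot}}$ (which is defined by recursion on its argument, cf.\ D.\ref{def_invp}). Two preliminary observations will be used throughout. First, a routine induction on formulas establishes that $\forgetp{\cdot}$ and $\forgetn{\cdot}$ commute with linear negation; in particular $\forgetn{N}=\forgetp{N^{\bot}}^{\bot}$. Second, the display postulate $(dp)$ makes $\cnlgseq{A}{B}$ and $\cnlgseq{B}{A}$ interderivable, so the two components of a presentation may be flipped at will.

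In the base cases, $N^{\bot}$ is either an atom $p$ (giving $\svare=p$ and $N=\bar{p}$) or a shifted formula $\ntop N'$ (giving $\svare=N'$ and $N=\pton N'^{\bot}$). A short calculation shows that in both, $\forgetp{\svare^+}$ and $\forgetn{N}$ come out as mutually dual shift-free formulas, so $\cnlgseq{\forgetn{N}}{\forgetp{\svare^+}}$ reduces to an instance of $(Ax)$.

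For the inductive cases, each connective at the root of $N$ is dispatched by applying the corresponding logical rule of \calclg \ to the IH-derivations, matched against the structural shape of $\svare$. Take $N=N_2\lgplus N_1$ as the representative multiplicative case: then $N^{\bot}=N_1^{\bot}\lgtimes N_2^{\bot}$ and necessarily $\svare=\stimes{\svare_1}{\svare_2}$ with $\svare_i\in\invp{N_i^{\bot}}$. Reversing the IHs $\cnlgseq{\forgetn{N_i}}{\forgetp{\svare_i^+}}$ by $(dp)$, combining via $(\lgplus)$, and concluding with $(dp)$ and $(\lgtimes)$, delivers the goal. The cases $N=M/Q$ and $N=Q\bs M$ follow the same recipe, using $(/)$ or $(\bs)$ as the combining step and a final $(\obs)$ or $(\os)$ to reshape the $\sobs{}{}$- or $\sos{}{}$-pattern inherited from $\svare$ into the coimplication appearing in $\forgetp{\svare^+}$. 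In these last cases the IH invoked on the component $\svare_1\in\invp{P}$ uses the negative formula $P^{\bot}$, which is strictly smaller than $N$, so the induction goes through.

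The only case exploiting non-invertibility is $N=M\land N'$: by definition of $\invp{\cdot}$, the witness $\svare$ lives in either $\invp{M^{\bot}}$ or $\invp{N'^{\bot}}$, and the derivation is closed by applying one of $(\land^{l})$ or $(\land^{r})$ (together with $(dp)$) to the appropriate IH, so as to absorb the missing conjunct. No step is genuinely difficult; the chief chore is the polarity bookkeeping and the verification that the structural shape of $\svare$ fits the premise of the rule invoked.
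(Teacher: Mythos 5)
Your proof is correct and follows essentially the same route as the paper's: induction on $N$, with the base cases $\bar{p}$ and $\pton P$ discharged by the axiom, the multiplicative cases by combining the (display-flipped) induction hypotheses via $(\lgplus)$, $(/)$ or $(\bs)$ and reshaping with $(dp)$ and the positive left rules, and the $\land$ case by $(\land^{l/r})$ exploiting that $\invp{N_2^{\bot}\lor N_1^{\bot}}=\invp{N_1^{\bot}}\cup\invp{N_2^{\bot}}$. Your explicit remark that the implication cases force the induction hypothesis onto $Q^{\bot}$ for a positive subformula $Q$ (so the measure must be formula size rather than the literal subformula order) is a point the paper glosses over by omitting those cases, but it is a refinement of, not a departure from, the same argument.
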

\begin{proof}
By induction on $N$. If $N=\pton P$ or $N=\bar{p}$, the desired result is immediate by applying $(I)$. For the remaining inductive cases, we check $N=N_1\land N_2$ and $N=N_1\lgplus N_2$. In the former case, note $\svare\in\invp{N_2^{\bot}\lor N_1^{\bot}}$ iff $\svare\in\invp{N_2^{\bot}}$ or $\svare\in\invp{N_1^{\bot}}$. Thus, applying the induction hypotheses, we have
\begin{center}
\begin{tabular}{ccc}
$\infer=[dp]{\cnlgseq{\forgetn{N_1}\land\forgetn{N_2}}{\forgetp{\svare^+}}}{
\infer[\land^l]{\cnlgseq{\forgetp{\svare^+}}{\forgetn{N_1}\land\forgetn{N_2}}}{
\infer=[dp]{\cnlgseq{\forgetp{\svare^+}}{\forgetn{N_1}}}{
\infer[IH]{\cnlgseq{\forgetn{N_1}}{\forgetp{\svare^+}}}{}}}}$ & &
$\infer=[dp]{\cnlgseq{\forgetp{\svare^+}}{\forgetn{N_1}\land\forgetn{N_2}}}{
\infer[\land^r]{\cnlgseq{\forgetp{\svare^+}}{\forgetn{N_1}\land\forgetn{N_2}}}{
\infer=[dp]{\cnlgseq{\forgetn{N_2}}{\forgetp{\svare^+}}}{
\infer[IH]{\cnlgseq{\forgetp{\svare^+}}{\forgetn{N_2}}}{}}}}$
\end{tabular}
\end{center}
In case $N=N_1\lgplus N_2$, note $\svare\in\invp{N_2^{\bot}\lgtimes N_1^{\bot}}$ iff $\svare=\stimes{\svare_2}{\svare_1}$ for $\svare_1\in\invp{N_1^{\bot}}$ and $\svare_2\in\invp{N_2^{\bot}}$. Hence, by the induction hypotheses,
\begin{center}
$\infer[\lgtimes]{\cnlgseq{\forgetn{N_1}\lgplus\forgetn{N_2}}{\forgetp{\svare_2^+}\lgtimes\forgetp{\svare_1^+}}}{
\infer=[dp]{\cnlgseq{\forgetn{N_1}\lgplus\forgetn{N_2}}{\stimes{\forgetp{\svare_2^+}}{\forgetp{\svare_1^+}}}}{
\infer[\lgplus]{\cnlgseq{\stimes{\forgetp{\svare_2^+}}{\forgetp{\svare_1^+}}}{\forgetn{N_1}\lgplus\forgetn{N_2}}}{
\infer=[dp]{\cnlgseq{\forgetp{\svare_1^+}}{\forgetn{N_1}}}{
\infer[IH]{\cnlgseq{\forgetn{N_1}}{\forgetp{\svare_1^+}}}{}} &
\infer=[dp]{\cnlgseq{\forgetp{\svare_2^+}}{\forgetn{N_2}}}{
\infer[IH]{\cnlgseq{\forgetn{N_2}}{\forgetp{\svare_2^+}}}{}}}}}$
\end{center}
\end{proof}

\begin{lem}\label{strong_unpol_lemma}
We have the following implications:
\begin{center}
\begin{tabular}{rcl}
$\cnlgseq{\svard}{\svare}$ & $\Longrightarrow$ & $\cnlgseq{\forgets{\svard}}{\forgets{\svare}}$ \\
$\cnlgseqp{\svard}{\svare}$ & $\Longrightarrow$ & $\cnlgseq{\forgets{\svard}}{{\forgetn{\svare^-}}}$
\end{tabular}
\end{center}
\end{lem}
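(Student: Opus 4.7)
The plan is to prove both implications simultaneously by mutual induction on the focused derivations of F.\ref{cnlg_strongfoc}, translating each rule into (chains of) unpolarized rules of \calclg/\calccnl. The identity $(I)$ maps directly to an unpolarized axiom. For the three context-splitting rules $(\stimes{}{})$, $(\sobs{}{})$ and $(\sos{}{})$, the inductive hypotheses of the second implication deliver unpolarized derivations of the premises of the form $\cnlgseq{\forgets{\svard}}{\forgetn{\svard'^-}}$, which we recombine under $(\lgplus)$, $(/)$ and $(\bs)$ respectively, exploiting structural identities such as $\forgetn{(\stimes{\svard'}{\svare'})^-} = \forgetn{\svare'^-}\lgplus\forgetn{\svard'^-}$.

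The two substantial cases are the Reaction $(R)$ and the Decision $(D)$. For $(R)$, the first implication's IH gives $\cnlgseq{\forgets{\svard}}{\forgets{\svare'}}$ for each $\svare'\in\invp{N^\bot}$, and L.\ref{soundness_pos_inf} assembles these into $\cnlgseq{\forgets{\svard}}{\forgetp{N^\bot}}$; the target conclusion is recovered through the auxiliary identity $\forgetp{N^\bot}=\forgetn{N}^\bot$, which is a straightforward induction on $N$ using the clause $\forgetp{\ntop N}=\forgetn{N}$ and duality of the connectives. For $(D)$ applied to $\presa[N]\vdash$ via some $\cnlgseqp{\presa^*[]}{\svare}$ with $\svare\in\invp{N^\bot}$, the second implication's IH yields $\cnlgseq{\forgets{\presa^*[]}}{\forgetn{\svare^-}}$; L.\ref{soundness_neg_inf} supplies $\cnlgseq{\forgetn{N}}{\forgetp{\svare^+}}$, and a single Cut combines them into a derivation that C.\ref{soundness_displ_eq} repackages as $\forgets{\presa[N]}\vdash$.

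The structural extensions of F.\ref{cnlg_strongfoc_struc} require no new technology. The CNL postulates translate to those of F.\ref{neutraldispl_cnl} via L.\ref{soundness_displ}, and each pair or quadruple of split Grishin rules factors through the corresponding unpolarized principle of F.\ref{neutraldispl_grishin} after analogous display-based reassembly, since each splitting is precisely what the display postulates would have done had they not been absorbed into the focused rules.

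The chief obstacle is the polarity bookkeeping between the focused calculus---where shifts are explicit---and its unpolarized target. This manifests concretely as the need to translate freely between $\forgetn{\cdot}$ and $\forgetp{(\cdot)^\bot}$, reconciled by the duality lemma noted above. Once that is in hand, the Decision case remains the most delicate step: the witness $\svare\in\invp{N^\bot}$ may sit under arbitrarily deep tensor and (co)implication structure, so the Cut against L.\ref{soundness_neg_inf} and the repeated display-equivalences packaged in C.\ref{soundness_displ_eq} must be orchestrated carefully to produce a derivation of $\forgets{\presa[N]}\vdash$ rather than some display variant.
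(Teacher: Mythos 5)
Your proposal matches the paper's proof essentially step for step: the same mutual induction, with $(R)$ discharged by L.\ref{soundness_pos_inf}, $(D)$ by Cutting the inductive hypothesis against L.\ref{soundness_neg_inf} and repackaging via C.\ref{soundness_displ_eq}, and the context-splitting rules recombined through $(\lgplus)$, $(/)$ and $(\bs)$. The only divergence is cosmetic: for the structural extensions the paper deliberately avoids invoking L.\ref{soundness_displ}/C.\ref{soundness_displ_eq} directly (since $\displ{\svard[]}{\svarf^-}$ need not be well-defined relative to $X^{\tau}$) and instead proves admissibility of each rule-in-context by an explicit induction on $\svard[]$, which is just the rigorous form of the ``display-based reassembly'' you describe.
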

\begin{proof}
By a mutual induction. The case $(I)$ is immediate, so we are left to check \\

\noindent\textbf{Case} $(\stimes{}{})$, $(\sobs{}{})$, $(\sos{}{})$. We check $(\stimes{}{})$, the others being similar. By induction hypothesis, $\cnlgseq{\forgets{\svard}}{\forgetn{\svard'^-}}$ and $\cnlgseq{\forgets{\svare}}{\forgetn{\svare'^-}}$, so we apply $(\lgplus)$:
\begin{center}
$\infer{\cnlgseq{\stimes{\forgets{\svard}}{\forgets{\svare}}}{\forgetn{\svard'^-}\lgplus\forgetn{\svare'^-}}}{
\infer[IH]{\cnlgseq{\forgets{\svard}}{\forgetn{\svard'^-}}}{} &
\infer[IH]{\cnlgseq{\forgets{\svare}}{\forgetn{\svare'^-}}}{}}$
\end{center}

\noindent\textbf{Case} 
$(\grisha{I}{1a/1b})$, $(\grisha{I}{2a/2b})$, $(\grishc{I}{a-d})$, $(\cnlos)$, $(\cnlobs)$. We check $(\grisha{I}{1a})$. By the induction hypothesis, $\cnlgseq{\forgets{\svard[\stimes{\svare_1}{(\sos{\svare_2}{\svare_3})}]}}{\forgetn{\svarf^-}}$. We might try using C.\ref{soundness_displ_eq}, but this necessitates writing $\displ{\svard[]}{\svarf^-}$, which need not be defined w.r.t. $X^{\tau}$, seeing as we cannot assume $\svarf^-\in X^{\tau}$. Thus, we prove as an additional Lemma, for arbitrary $\svara$ and proceeding by induction on $\svard[]$, admissibility of 
\begin{center}
$\infer{\cnlgseq{\forgets{\svard[\sos{(\stimes{\svare_1}{\svare_2})}{\svare_3}]}}{\svara}}{\cnlgseq{\forgets{\svard[\stimes{\svare_1}{(\sos{\svare_2}{\svare_3})}]}}{\svara}}$
\end{center}
instantiating $\svara$ with $\forgetn{\svarf^-}$ for the desired result. In the base case $\svard[]=[]$,
\begin{center}
$\infer=[dp]{\cnlgseq{\sos{(\stimes{\forgets{\svare_1}}{\forgets{\svare_2}})}{\forgets{\svare_3}}}{\svara}}{
\infer[\grisha{I}{1}]{\cnlgseq{\stimes{\forgets{\svare_1}}{\forgets{\svare_2}}}{\stimes{\forgets{\svare_3}}{\svara}}}{
\infer=[Dp]{\cnlgseq{\sos{\forgets{\svare_2}}{\forgets{\svare_3}}}{\sos{\svara}{\forgets{\svare_1}}}}{
\cnlgseq{\stimes{\forgets{\svare_1}}{(\sos{\forgets{\svare_2}}{\forgets{\svare_3}})}}{\svara}}}}$
\end{center}
Next, we check $\svard[]=\stimes{\svard'[]}{\svard''}$, the other inductive cases being handled similarly. 
\begin{center}
$\infer=[dp]{\cnlgseq{\stimes{\forgets{\svard'[\sos{(\stimes{\svare_1}{\svare_2})}{\svare_3}]}}{\forgets{\svard''}}}{\svara}}{
\infer[IH]{\cnlgseq{\forgets{\svard'[\sos{(\stimes{\svare_1}{\svare_2})}{\svare_3}]}}{\sobs{\forgets{\svard''}}{\svara}}}{
\infer=[dp]{\cnlgseq{\forgets{\svard'[\stimes{\svare_1}{(\sos{\svare_2}{\svare_3})}]}}{\sobs{\forgets{\svard''}}{\svara}}}{
\cnlgseq{\stimes{\forgets{\svard'[\stimes{\svare_1}{(\sos{\svare_2}{\svare_3})}]}}{\forgets{\svard''}}}{\svara}}}}$
\end{center}

\noindent\textbf{Case} $(R)$. Immediate by L.\ref{soundness_pos_inf}. \\

\noindent\textbf{Case} $(D)$. By induction hypothesis, $\cnlgseq{\forgets{\omega^*[]}}{\forgetn{\svare^-}}$, while $\cnlgseq{N}{\forgetp{\svare^+}}$ by L.\ref{soundness_neg_inf}. An easy induction will conform $\forgetn{\svare^-}$ and $\forgetp{\svare^+}$ are dual, and hence we can apply $(T)$, invoking C.\ref{soundness_displ_eq} afterwards:
\begin{center}
$\infer[C.\ref{soundness_displ_eq}]{\forgets{\omega[N]}\vdash}{
\infer[T]{\cnlgseq{\forgets{\omega^*[]}}{\forgetn{N}}}{
\infer[L.\ref{soundness_neg_inf}]{\cnlgseq{\forgetn{N}}{\forgetp{\svare^+}}}{} &
\infer[IH]{\cnlgseq{\forgets{\omega^*[]}}{\forgetn{\svare^-}}}{}}}$
\end{center}
\end{proof}

\begin{thm}\label{strong_unpol}
$\cnlgseq{\svara}{\svarb}$ in \calclg/\calccnl \ only if $\cnlgseq{\svard}{\svare}$ for all $\svard\in\invp{\svara^+}$, $\svare\in\invp{\svarb^+}$.

\end{thm}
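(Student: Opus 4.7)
The plan is to establish that derivability of $\cnlgseq{\svara}{\svarb}$ in the unpolarized \calclg/\calccnl \ yields focused derivations of $\cnlgseq{\svard}{\svare}$ for every $\svard \in \invp{\svara^+}$ and $\svare \in \invp{\svarb^+}$. This is the completeness direction of focalization and closes the square of Figure \ref{summ_results} by composing three arrows the long way around: T.\ref{unpol_weak} (unpolarized to polarized), T.\ref{soundness_sem} (polarized to syntactic phase model), and T.\ref{completeness_sem} (model back to focused).

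Concretely, given a derivation of $\cnlgseq{\svara}{\svarb}$ in the unpolarized calculus, I would first apply T.\ref{unpol_weak} to obtain a polarized derivation of $\cnlgseq{\semt{\svara}}{\semt{\svarb}}$ via the canonical shift-decoration of D.\ref{shift_decoration}. I would then interpret this polarized derivation semantically using T.\ref{soundness_sem}, and extract focused derivations from the resulting semantic validity using T.\ref{completeness_sem}. The universal quantification over $\invp{\svara^+}$ and $\invp{\svarb^+}$ is handled naturally by these results, since by D.\ref{def_invp} the sets $\invp{\cdot}$ enumerate precisely the structural presentations of a positive formula, and the Reaction rule $(R)$ of F.\ref{cnlg_strongfoc} branches uniformly over them, so completeness delivers a derivation for every such presentation rather than merely one.

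The main technical obstacle is that the normalization step from polarized to focused derivations is genuinely semantic in the present paper, so the ``only if'' direction must borrow T.\ref{soundness_sem} and T.\ref{completeness_sem} from the forthcoming \S 4; the preceding L.\ref{strong_unpol_lemma} supplies only the converse (soundness of focalization). A subsidiary bookkeeping check, carried out by an easy induction parallel to D.\ref{shift_decoration} and D.\ref{def_invp} and using L.\ref{semt_lneg}, is needed to verify that $\invp{\svara^+}$ aligns correctly with the structural decomposition of the shift-decorated polarized formula produced by T.\ref{unpol_weak}, ensuring each pair $\svard, \svare$ in the quantification is indeed furnished a focused derivation by the semantic roundtrip.
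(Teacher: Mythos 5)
Your proposal proves the converse of what the paper's proof of this theorem actually establishes, and the mismatch traces back to the statement's wording. Read literally, ``$\cnlgseq{\svara}{\svarb}$ only if $\dots$'' would assert that unfocused provability implies focused provability, which is what you set out to prove; but the sentence introducing the theorem (``We proceed to demonstrate soundness of strong focalization w.r.t.\ derivability in \calclg/\calccnl''), the orientation of the corresponding arrow in F.\ref{summ_results} (pointing \emph{from} the focused calculus \emph{into} the sequent calculus of $\S$2), the use made of the theorem in C.\ref{final_main_theorem} (where it is composed with T.\ref{unpol_weak} to supply the right-to-left direction), and the proof the paper gives all show that the intended content is the soundness of strong focalization: focused derivability of $\cnlgseq{\svard}{\svare}$ for all $\svard\in\invp{\svara^+}$, $\svare\in\invp{\svarb^+}$ implies $\cnlgseq{\svara}{\svarb}$ in \calclg/\calccnl. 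The paper proves this by (i) establishing the admissibility, in the unpolarized calculus, of a rule $(*)$ inferring $\cnlgseq{\svarc}{\svarc'}$ from the family $\{\cnlgseq{\forgets{\svard}}{\svarc'}\}_{\svard\in\invp{\svarc^{\inp}}}$, by induction on $\svarc$ with L.\ref{soundness_pos_inf} handling the base case, and (ii) feeding into two applications of $(*)$, interleaved with display postulates, the unfocused derivations of $\cnlgseq{\forgets{\svard}}{\forgets{\svare}}$ obtained from the hypothesized focused ones via L.\ref{strong_unpol_lemma}.

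You correctly observed that L.\ref{strong_unpol_lemma} only yields the focused-to-unfocused direction, but you resolved that tension the wrong way around: the lemma is the engine of the present theorem, not an obstacle to it. The composition you describe---T.\ref{unpol_weak}, then T.\ref{soundness_sem}, then T.\ref{completeness_sem}---is precisely the focalization property itself, i.e.\ essentially the left-to-right half of C.\ref{final_main_theorem} precomposed with T.\ref{unpol_weak}; it cannot serve as the proof of the present theorem without rendering the theorem redundant and, worse, leaving the right-to-left half of C.\ref{final_main_theorem} without a proof, since that half relies on the present theorem being the focused-to-unfocused arrow. It would also introduce a forward dependence on $\S$4 that the paper's architecture deliberately avoids at this point.
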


\begin{proof}
Assume (*) $\cnlgseq{\svarc}{\svarc'}$ if $(\forall\svard\in\invp{\svarc^{\inp}})(\cnlgseq{\forgets{\svard}}{\svarc'})$ in \calclg \ (\calccnl).
Then 
\begin{center}
$\infer[*]{\cnlgseq{\svara}{\svarb}}{
(\forall\svard\in\invp{\svara^{\inp}}) & 
\infer=[Dp]{\cnlgseq{\forgets{\svard}}{\svarb}}{
\infer[*]{\cnlgseq{\svarb}{\forgets{\svard}}}{
(\forall\svare\in\invp{\svarb^{\inp}}) &
\infer=[Dp]{\cnlgseq{\forgets{\svare}}{\forgets{\svard}}}{
\infer[L.\ref{strong_unpol_lemma}]{\cnlgseq{\forgets{\svard}}{\forgets{\svare}}}{}}}}}$
\end{center}
Suffice it to show the admissibility of (*). In the base case, $\svarc=A$ and we apply Lemma \ref{soundness_pos_inf} with $\semt{A}$ if $\semt{A}$ is positive, and with $\ntop\semt{A}$ otherwise. For the inductive cases, consider $\Theta=\stimes{\Theta_1}{\Theta_2}$, handled thus:
\begin{center}
$\infer=[Dp]{\cnlgseq{\stimes{\svarc_1}{\svarc_2}}{\svarc'}}{
\infer[IH]{\cnlgseq{\svarc_1}{\sobs{\svarc_2}{\svarc'}}}{
(\forall\svard\in\invp{\svara^{\inp}}) &
\infer=[Dp]{\cnlgseq{\svard}{\sobs{\svarc_2}{\svarc'}}}{
\infer[IH]{\cnlgseq{\sos{\svarc'}{\svard}}{\svarc_2}}{
(\forall\svare\in\invp{\svarc_2}) &
\infer=[Dp]{\cnlgseq{\sos{\svarc'}{\svard}}{\svare}}{
\cnlgseq{\stimes{\svard}{\svare}}{\svarc'}}}}}}$
\end{center}
\end{proof}

\section{Normalization as Completeness}
The current section demonstrates provability in \lgpol \ (\cnlpol) \ (F.\ref{cnlg_weakfoc}) implies focused provability (F.\ref{cnlg_strongfoc}). Note the converse direction already obtains by composing Theorems \ref{strong_unpol} and \ref{unpol_weak}. The standard approach proceeds via Cut elimination, as explained in \cite{laurent04}. Here, instead, we provide a model-theoretic argument along the lines of \cite{okada02} and \cite{herbelinlee09}. That is, we define  phase models for \textbf{LG} and \textbf{CNL} and construct a syntactic model for which we show `truth' to imply focused provability. Composed with soundness for the derivations of F.\ref{cnlg_weakfoc}, the desired result immediately follows. We define our phase models and establish soundness in $\S$4.1, while $\S$4.2 will be dedicated to showing completeness.

\subsection{Phase models}
\begin{defs}\label{phase_space}
A \textit{phase space} is a 5-tuple $\langle P,\stimes{}{},\sos{}{},\sobs{}{},\bot\rangle$ where:
\begin{enumerate}
\item $P$ is a non-empty set of \textit{phases} with operations $\stimes{}{},\sos{}{},\sobs{}{}:P\times P\rightarrow P$. We use metavariables $x,y,z$ for denoting elements of $P$ and $A,B,C$ for denoting subsets of $P$.
\item $\bot\subseteq P\times P$ s.t.
\begin{center}
\begin{tabular}{rcl}
$\langle x,y\rangle\in\bot$ & $\Rightarrow$ & $\langle y,x\rangle\in\bot$ \\
$\langle\stimes{x}{y},z\rangle\in\bot$ & $\Leftrightarrow$ & $\langle x,\sobs{y}{z}\rangle\in\bot$ \\
$\langle x,\stimes{y}{z}\rangle\in\bot$ & $\Leftrightarrow$ & $\langle\sos{x}{y},z\rangle\in\bot$
\end{tabular}
\end{center}
\item A phase space may be required to satisfy further conditions depending on which structural rules are added to the base logic, as detailed in Table \ref{constr_strucpost}. 
\end{enumerate}
As usual, we often identify a phase space by its carrier set $P$. Given a phase space, the operation $\cdot^{\bot}:\mathscr{P}(P)\rightarrow\mathscr{P}(P)$ is defined by mapping $A\subseteq P$ to $\{ x \ | \ (\forall y\in A)(\langle x,y\rangle\in\bot)\}$.
\end{defs}
\begin{table}
\begin{center}
\begin{tabular}{@{}|l|l|l|@{}}\hline
\multirow{2}{*}{\textbf{Constraints}} & \multicolumn{2}{|l|}{\textbf{Structural Postulates}} \\ 
 & & F.\ref{cnlg_strongfoc_struc} \\ \hline
$\langle\sos{y}{u},\sos{v}{x}\rangle\in\bot\Rightarrow\langle\stimes{x}{y},\stimes{u}{v}\rangle\in\bot$ & F.\ref{neutraldispl_grishin}, $\grisha{I}{1}$ & $\grisha{I}{1a}$, $\grisha{I}{1b}$ \\ 
$\langle\sobs{v}{x},\sobs{y}{u}\rangle\in\bot\Rightarrow\langle\stimes{x}{y},\stimes{u}{v}\rangle\in\bot$ & F.\ref{neutraldispl_grishin}, $\grisha{I}{2}$ & $\grisha{I}{2a}$, $\grisha{I}{2b}$ \\
$\langle\sos{u}{x},\sobs{v}{y}\rangle\in\bot\Rightarrow\langle\stimes{x}{y},\stimes{u}{v}\rangle\in\bot$ & F.\ref{neutraldispl_grishin}, $\grishc{I}{}$ & $\grishc{I}{a}$, $\grishc{I}{b}$, $\grishc{I}{c}$, $\grishc{I}{d}$ \\
$\langle\sos{x}{y},z\rangle\in\bot\Leftrightarrow\langle\stimes{x}{y},z\rangle\in\bot\Leftrightarrow\langle\sobs{x}{y},z\rangle\in\bot$ & F.\ref{neutraldispl_cnl} & $(\cnlos)$, $(\cnlobs)$ \\ \hline
\end{tabular}
\end{center}
\caption{Structural postulates and conditions on phase spaces.}
\label{constr_strucpost}
\end{table}

\begin{rem}
If we were to restrict our attention to \textbf{CNL}, a more parsimonious definition for phase spaces seems naturally available: take any 3-tuple $\langle P,\stimes{}{},\bot\rangle$, where $\stimes{}{}:P\times P\rightarrow P$ and $\bot\subseteq P\times P$ s.t., for all $x,y\in P$,
\begin{center}
\begin{tabular}{rcl}
$\langle x,y\rangle\in\bot$ & $\Rightarrow$ & $\langle y,x\rangle\in\bot$ \\
$\langle\stimes{x}{y},z\rangle\in\bot$ & $\Leftrightarrow$ & $\langle x,\stimes{y}{z}\rangle\in\bot$
\end{tabular}
\end{center}
\end{rem}
\noindent The following are some easy observations on the operation $\cdot^{\bot}$ on phase spaces.

\begin{lem}\label{galois} Given $P$, we have $A\subseteq B^{\bot}$ iff $B\subseteq A^{\bot}$ ($A,B\in\mathscr{P}(P)$). Equivalently, $A\subseteq A^{\bot\bot}$, $A\subseteq B$ implies $B^{\bot}\subseteq A^{\bot}$ and $A^{\bot\bot\bot}\subseteq A^{\bot}$. In other words, $\cdot^{\bot}$ is a \textit{Galois connection}, and hence $\cdot^{\bot\bot}$ a closure operator, meaning (at the cost of some redundancy), 
$A\subseteq A^{\bot\bot}$, $A\subseteq B$ implies $A^{\bot\bot}\subseteq B^{\bot\bot}$, $(A^{\bot\bot})^{\bot\bot}\subseteq A^{\bot\bot}$.
\end{lem}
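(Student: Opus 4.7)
The heart of the lemma is the biconditional $A\subseteq B^{\bot}\Leftrightarrow B\subseteq A^{\bot}$; everything else follows by Galois-connection boilerplate. My plan is therefore to prove this equivalence first, and then derive the remaining statements in a purely formal manner without revisiting the definition of $\cdot^{\bot}$.

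For the biconditional, I would simply unfold the definition. By $(\cdot^{\bot})$, $A\subseteq B^{\bot}$ says that for every $x\in A$ and every $y\in B$, $\langle x,y\rangle\in\bot$. Using the symmetry clause of D.\ref{phase_space}(2), namely $\langle x,y\rangle\in\bot\Rightarrow\langle y,x\rangle\in\bot$ (which is an equivalence, since applying it twice gives the converse), this is equivalent to saying that for every $y\in B$ and $x\in A$, $\langle y,x\rangle\in\bot$, i.e.\ $B\subseteq A^{\bot}$. This is the only place where any property of $\bot$ is used.

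From the biconditional the remaining assertions follow abstractly. Taking $B\defeq A^{\bot}$ and using the trivial inclusion $A^{\bot}\subseteq A^{\bot}$, the biconditional yields $A\subseteq A^{\bot\bot}$. For antitonicity, suppose $A\subseteq B$; since $B\subseteq B^{\bot\bot}$ we obtain $A\subseteq B^{\bot\bot}$, and applying the biconditional gives $B^{\bot}\subseteq A^{\bot}$. Substituting $A^{\bot}$ for $A$ in the inclusion $A\subseteq A^{\bot\bot}$ yields $A^{\bot}\subseteq A^{\bot\bot\bot}$, while antitonicity applied to $A\subseteq A^{\bot\bot}$ yields $A^{\bot\bot\bot}\subseteq A^{\bot}$, so these two are in fact equal; either inclusion suffices for the stated $A^{\bot\bot\bot}\subseteq A^{\bot}$.

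The closure-operator properties are then immediate: $A\subseteq A^{\bot\bot}$ is already established, monotonicity of $\cdot^{\bot\bot}$ is two applications of antitonicity of $\cdot^{\bot}$, and $(A^{\bot\bot})^{\bot\bot}\subseteq A^{\bot\bot}$ follows by applying $\cdot^{\bot}$ to $A^{\bot}\subseteq A^{\bot\bot\bot}$ (using antitonicity once more). No step is substantive beyond the initial unfolding, so I do not anticipate any real obstacle; the only thing worth being careful about is not accidentally invoking the (co)residuation clauses of $\bot$, which play no role here—only symmetry is needed.
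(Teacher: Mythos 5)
Your proof is correct. The paper itself offers no proof of this lemma---it is introduced merely as an ``easy observation''---and your argument (unfolding the definition of $\cdot^{\bot}$, observing that only the symmetry clause of $\bot$ is needed for the central biconditional $A\subseteq B^{\bot}\Leftrightarrow B\subseteq A^{\bot}$, and then deriving the remaining inclusions by purely formal Galois-connection reasoning) is precisely the standard argument the author is implicitly relying on.
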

\noindent Formulas will be interpreted by \textit{facts}: subsets $A\subseteq P$ s.t. $A=A^{\bot\bot}$. The following is a consequence of the well-known property of closure operators being closed under intersection:
\begin{lem}\label{closure_intersect}
Facts are closed under finite intersections.
\end{lem}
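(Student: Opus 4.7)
The plan is to reduce the claim to the two standard properties of the closure operator $(\cdot)^{\bot\bot}$ already collected in Lemma \ref{galois}, namely extensivity ($A\subseteq A^{\bot\bot}$) and monotonicity ($A\subseteq B$ implies $A^{\bot\bot}\subseteq B^{\bot\bot}$). In fact, as the argument will make clear, there is nothing special about finiteness: arbitrary intersections of facts are facts; I shall write the proof for an arbitrary family and note this at the end.

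Let $\{A_i\}_{i\in I}$ be a family of facts and set $B=\bigcap_{i\in I} A_i$. For one inclusion, Lemma \ref{galois} gives $B\subseteq B^{\bot\bot}$ directly. For the converse, I would fix an arbitrary $j\in I$; since $B\subseteq A_j$, monotonicity of $(\cdot)^{\bot\bot}$ yields $B^{\bot\bot}\subseteq A_j^{\bot\bot}$, and because $A_j$ is a fact the right-hand side is just $A_j$. Thus $B^{\bot\bot}\subseteq A_j$ for every $j$, which gives $B^{\bot\bot}\subseteq\bigcap_{j\in I}A_j=B$. Combining the two inclusions, $B=B^{\bot\bot}$, so $B$ is a fact.

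There is no real obstacle here: the argument is the standard one-line verification that the fixed points of any closure operator are closed under intersection. The only thing worth flagging is that the restriction to \emph{finite} intersections in the statement is inessential for this proof; it is stated in that form presumably because only finite meets will be needed in the sequel (e.g.\ for interpreting $\land$). Should a reader prefer a completely explicit derivation avoiding the abstract closure-operator vocabulary, one can rewrite the monotonicity step directly from the definition: if $x\in B^{\bot\bot}$ and $y\in A_j^{\bot}$, then $y\in B^{\bot}$ (since $B\subseteq A_j$ gives $A_j^{\bot}\subseteq B^{\bot}$ by Lemma \ref{galois}), so $\langle x,y\rangle\in\bot$; as $y\in A_j^{\bot}$ was arbitrary, $x\in A_j^{\bot\bot}=A_j$.
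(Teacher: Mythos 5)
Your argument is correct and is exactly the standard closure-operator verification that the paper itself invokes (it gives no explicit proof, merely noting the lemma is "a consequence of the well-known property of closure operators being closed under intersection", with the needed extensivity and monotonicity supplied by Lemma \ref{galois}). Your observation that finiteness is inessential is also accurate; the finite case is all that is used later for interpreting the additives.
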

\begin{defs}
A \textit{model} consists of a phase space $P$ and a valuation $v$ taking positive atoms $p$ into facts. $v$ extends to maps $\valp{\cdot}$ and $\valn{\cdot}$, defined by mutual induction and acting on arbitrary positive and negative formulas respectively:
\begin{center}
\begin{tabular}{rclrcl}
$\valp{p}$ & $\defeq$ & $v(p)$ & $\valn{\bar{p}}$ & $\defeq$ & $v(p)$ \\
$\valp{P\lgtimes Q}$ & $\defeq$ & $\valp{P}\times\valp{Q}$ &
$\valn{M\lgplus N}$ & $\defeq$ & $\valn{N}\times\valn{M}$ \\
$\valp{P\os N}$ & $\defeq$ & $\valp{P}\leftarrow\valn{N}$ &
$\valn{M/Q}$ & $\defeq$ & $\valp{Q}\rightarrow\valn{M}$ \\
$\valp{N\obs P}$ & $\defeq$ & $\valn{N}\rightarrow\valp{P}$ &
$\valn{Q\bs M}$ & $\defeq$ & $\valn{M}\rightarrow\valp{Q}$ \\
$\valp{P\lor Q}$ & $\defeq$ & $\valp{P}\bigcap\valp{Q}$ &
$\valn{M\land N}$ & $\defeq$ & $\valn{M}\bigcap\valn{N}$ \\
$\valp{\ntop N}$ & $\defeq$ & $\valn{N}^{\bot}$ &
$\valn{\pton P}$ & $\defeq$ & $\valp{P}^{\bot}$
\end{tabular}
\end{center}
\noindent Here, we have employed the following operations, evidently facts by L.\ref{galois}:
\begin{center}
\begin{tabular}{rl}
$\times:\mathscr{P}(P)\times \mathscr{P}(P)\rightarrow\mathscr{P}(P)$, & $\langle A,B\rangle\mapsto\{\stimes{x}{y} \ | \ x\in A^{\bot}, \ y\in B^{\bot}\}^{\bot}$ \\ 
$\leftarrow:\mathscr{P}(P)\times \mathscr{P}(P)\rightarrow\mathscr{P}(P)$, & $\langle A,B\rangle\mapsto\{\sos{x}{y} \ | \ x\in A^{\bot}, y\in B^{\bot}\}^{\bot}$ \\
$\rightarrow:\mathscr{P}(P)\times \mathscr{P}(P)\rightarrow\mathscr{P}(P)$, & $\langle A,B\rangle\mapsto\{\sobs{x}{y} \ | \ x\in A^{\bot}, y\in B^{\bot}\}^{\bot}$
\end{tabular}
\end{center}
\end{defs}
\begin{lem}\label{shiftbot}
$\valp{P}=\valn{\pton P}^{\bot}$ and $\valn{N}=\valp{\ntop N}^{\bot}$ for any $N,P$.
\end{lem}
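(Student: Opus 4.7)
The plan is to reduce the lemma to the observation that every formula denotes a fact. Unfolding the definitions gives
\[
\valn{\pton P}^{\bot} = (\valp{P}^{\bot})^{\bot} = \valp{P}^{\bot\bot},
\qquad
\valp{\ntop N}^{\bot} = (\valn{N}^{\bot})^{\bot} = \valn{N}^{\bot\bot},
\]
so it suffices to show that $\valp{P}$ and $\valn{N}$ are facts, i.e.\ equal to their double orthogonal, since by L.\ref{galois} any set of the form $A^{\bot}$ automatically satisfies $A^{\bot\bot\bot} = A^{\bot}$.

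I would therefore state and prove first, as an auxiliary claim, that for every positive $P$ and negative $N$, the sets $\valp{P}$ and $\valn{N}$ are facts. This goes by mutual induction on $P$ and $N$. The atomic cases hold because the valuation $v$ was required to take positive atoms to facts, and $\valn{\bar p} \defeq v(p)$. The multiplicative cases $\lgtimes$, $\os$, $\obs$ (and dually $\lgplus$, $/$, $\bs$) are immediate because each is defined by the operations $\times$, $\leftarrow$, $\rightarrow$, whose values are of the form $\{\dots\}^{\bot}$ and hence facts. The additive cases $\lor$ and $\land$ use L.\ref{closure_intersect}, closure of facts under intersection. Finally, the shift cases $\valp{\ntop N} = \valn{N}^{\bot}$ and $\valn{\pton P} = \valp{P}^{\bot}$ are again facts because any orthogonal is a fact; notably, this last step does not even require the inductive hypothesis.

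With the auxiliary claim in hand, the lemma is immediate: $\valn{\pton P}^{\bot} = \valp{P}^{\bot\bot} = \valp{P}$, and dually $\valp{\ntop N}^{\bot} = \valn{N}^{\bot\bot} = \valn{N}$.

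I do not anticipate any real obstacle: the only subtlety is remembering that the conditions on $\bot$ from D.\ref{phase_space} are not needed here (they are consumed by soundness of the multiplicative rules, not by the shift equalities), and that the valuation's requirement to hit facts on atoms is essential for the induction to start.
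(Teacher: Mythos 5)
Your proof is correct and follows exactly the paper's argument, which reads in its entirety ``Immediate, since the sets involved are facts''; you have merely made explicit the routine induction showing that every $\valp{P}$ and $\valn{N}$ is a fact (atoms by the condition on $v$, multiplicatives because $\times,\leftarrow,\rightarrow$ land in orthogonals, additives by L.\ref{closure_intersect}, shifts because orthogonals are facts). No discrepancy to report.
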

\begin{proof}
Immediate, since the sets involved are facts.
\end{proof}
\begin{lem}\label{valduality}
For any $N,P$, $\valp{P}=\valn{P^{\bot}}$ and (dually) $\valn{N}=\valp{N^{\bot}}$.
\end{lem}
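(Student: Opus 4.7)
The plan is to establish both equalities simultaneously, by mutual induction on $P$ and $N$. The atomic base cases are immediate from the definitions: $\valp{p}=v(p)=\valn{\bar p}=\valn{p^{\bot}}$, and symmetrically for $\bar p$. The shift cases reduce to the inductive hypothesis together with L.\ref{shiftbot}: for $P=\ntop N$ one computes $\valp{\ntop N}=\valn{N}^{\bot}=\valp{N^{\bot}}^{\bot}=\valn{\pton N^{\bot}}=\valn{(\ntop N)^{\bot}}$, where the middle equality is the IH, and the case $N=\pton P$ is symmetric.

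For each of the remaining binary constructors, the plan is to unfold both sides of the desired equality using the defining clauses of $\valp{\cdot}$ and $\valn{\cdot}$, and then apply the inductive hypotheses to the immediate subformulas. The clauses have been arranged so that the duality is already visible at the level of the fact-operations: $\valn{M\lgplus N}=\valn{N}\times\valn{M}$, with arguments swapped, mirrors $(P\lgtimes Q)^{\bot}=Q^{\bot}\lgplus P^{\bot}$, so that $\valn{(P\lgtimes Q)^{\bot}}=\valn{P^{\bot}}\times\valn{Q^{\bot}}=\valp{P}\times\valp{Q}=\valp{P\lgtimes Q}$ by IH. The additive case reduces to commutativity of $\bigcap$, while $\obs$/$/$ goes through cleanly since the same operation $\rightarrow$ appears on both sides of the duality once the IH is invoked.

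The step that calls for actual work is the $\os$/$\bs$ pair, where the left-hand side $\valp{P\os N}=\valp{P}\leftarrow\valn{N}$ invokes the operation $\leftarrow$ (built from $\sos{}{}$), whereas the right-hand side $\valn{N^{\bot}\bs P^{\bot}}=\valn{P^{\bot}}\rightarrow\valp{N^{\bot}}$ invokes $\rightarrow$ (built from $\sobs{}{}$). After aligning parameters via the IH, one is left to identify the two annihilator sets $\{\sos{x}{y}\mid x\in\valp{P}^{\bot},\,y\in\valn{N}^{\bot}\}^{\bot}$ and $\{\sobs{x}{y}\mid x\in\valp{P}^{\bot},\,y\in\valn{N}^{\bot}\}^{\bot}$. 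This is the main technical obstacle; it is handled by exploiting the symmetry of $\bot$ together with both residuation equivalences of D.\ref{phase_space}, which jointly permute variables so as to translate $\sos{}{}$-terms into $\sobs{}{}$-terms (and back) under the $\bot$-annihilator. Once this identification is in place, the remaining bookkeeping is routine.
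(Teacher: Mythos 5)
Your induction scheme and the routine cases (atoms, the shifts via L.\ref{shiftbot}, $\lgtimes$/$\lgplus$, the additives, and $\obs$ versus $/$) are fine, and coincide with what the paper's ``straightforward inductive argument'' must be doing. The gap is in the step you yourself single out as the main obstacle. The identity $\{\sos{x}{y}\mid x\in C,\ y\in D\}^{\bot}=\{\sobs{x}{y}\mid x\in C,\ y\in D\}^{\bot}$ is \emph{not} derivable from the symmetry and residuation conditions of D.\ref{phase_space}: chasing those conditions, $\langle z,\sos{x}{y}\rangle\in\bot$ reduces to $\langle x,\stimes{y}{z}\rangle\in\bot$, while $\langle z,\sobs{x}{y}\rangle\in\bot$ reduces to $\langle\stimes{z}{x},y\rangle\in\bot$; each biconditional trades a structural connective on one coordinate for a \emph{different} one on the other, and no sequence of them converts a $\sos{}{}$ into a $\sobs{}{}$ in place. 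The identification of the $\stimes{}{}$-, $\sos{}{}$- and $\sobs{}{}$-annihilators is exactly the extra frame condition imposed for \calccnl\ in Table \ref{constr_strucpost}, and in the syntactic model for \calclg$_{\emptyset}$ the two membership conditions above amount to the sequents $\cnlgseq{x}{\stimes{y}{z}}$ and $\cnlgseq{\stimes{z}{x}}{y}$, i.e.\ $x^{\inp}\leq z^{\outp}\lgplus y^{\outp}$ versus $z^{\inp}\lgtimes x^{\inp}\leq y^{\outp}$, which are not interderivable. So your bridge would establish the lemma only for \calccnl, whereas it is stated, and used (e.g.\ throughout T.\ref{soundness_sem}), for all models.

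The obstacle you are trying to clear is in fact an artifact of a typo in the valuation clauses: the clause for $Q\bs M$ must read $\valn{Q\bs M}=\valn{M}\leftarrow\valp{Q}$ rather than $\valn{M}\rightarrow\valp{Q}$. This is forced by the rest of the paper: by D.\ref{polar_struc_form}, $\bs$ is the outward reading of the structural connective $\sos{}{}$, just as $\os$ is its inward reading, so both must be interpreted by the $\sos{}{}$-based operation $\leftarrow$, exactly as $\obs$ and $/$ (the two readings of $\sobs{}{}$) are both interpreted by $\rightarrow$. With that correction the case closes by unfolding and the induction hypotheses alone: $\valn{(P\os N)^{\bot}}=\valn{N^{\bot}\bs P^{\bot}}=\valn{P^{\bot}}\leftarrow\valp{N^{\bot}}=\valp{P}\leftarrow\valn{N}=\valp{P\os N}$, and dually for $\valn{Q\bs M}=\valp{(Q\bs M)^{\bot}}$. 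No appeal to the conditions on $\bot$ is needed anywhere in the proof; once the clause is repaired, every case is of the purely definitional kind you describe for the other connectives.
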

\begin{proof}
By a straightforward inductive argument.
\end{proof}
\begin{lem}\label{foureq}
We have the following equivalences:
\begin{center}
\begin{tabular}{clcl}
 & $\valn{\pton\svara^+}\subseteq\valn{\svarb^-}$ & $\Leftrightarrow$ & $\valn{\pton\svarb^+}\subseteq\valn{\svara^-}$ \\
$\Leftrightarrow$ & $\valp{\ntop\svarb^-}\subseteq\valp{\svara^+}$ & $\Leftrightarrow$ & $\valp{\ntop\svara^-}\subseteq\valp{\svarb^+}$
\end{tabular}
\end{center}
\end{lem}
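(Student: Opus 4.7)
The plan is to reduce the four claimed inclusions to expressions purely in the facts associated with $\svara$ and $\svarb$, and then to invoke the Galois connection of L.\ref{galois}. First, by a straightforward induction on structures (entirely analogous to the remark $\Gamma^{\inp}=\Gamma^{\outp\bot}$ already made for unpolarized structures in $\S$2.2.2, and using D.\ref{polar_struc_form}), one establishes $\svara^+=(\svara^-)^{\bot}$ and $\svarb^+=(\svarb^-)^{\bot}$. Combined with L.\ref{valduality}, this gives $\valp{\svara^+}=\valn{\svara^-}$ and $\valp{\svarb^+}=\valn{\svarb^-}$.

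Next, I would abbreviate $A\defeq\valn{\svara^-}=\valp{\svara^+}$ and $B\defeq\valn{\svarb^-}=\valp{\svarb^+}$; both are facts. Applying L.\ref{shiftbot} to each of the four expressions we obtain
\begin{center}
\begin{tabular}{rclcrcl}
$\valn{\pton\svara^+}$ & $=$ & $\valp{\svara^+}^{\bot}=A^{\bot}$, & \ \ & $\valn{\pton\svarb^+}$ & $=$ & $B^{\bot}$, \\
$\valp{\ntop\svarb^-}$ & $=$ & $\valn{\svarb^-}^{\bot}=B^{\bot}$, & & $\valp{\ntop\svara^-}$ & $=$ & $A^{\bot}$.
\end{tabular}
\end{center}
Consequently, inclusions (1) and (4) of the lemma both reduce to $A^{\bot}\subseteq B$, while (2) and (3) both reduce to $B^{\bot}\subseteq A$.

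It therefore suffices to show $A^{\bot}\subseteq B$ iff $B^{\bot}\subseteq A$, whenever $A,B$ are facts. This is an immediate consequence of L.\ref{galois}: since $B=B^{\bot\bot}$, the inclusion $A^{\bot}\subseteq B$ is equivalent to $A^{\bot}\subseteq B^{\bot\bot}$, and by the Galois property this rephrases as $B^{\bot}\subseteq A^{\bot\bot}=A$; the symmetric direction is identical. No step in this argument poses a real obstacle—the only point requiring care is the structural identity $\svara^+=(\svara^-)^{\bot}$, which unlocks the reduction to the two symmetric forms—after which everything is a mechanical application of L.\ref{shiftbot} and L.\ref{galois}.
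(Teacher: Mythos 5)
Your proof is correct and follows essentially the same route as the paper's: both arguments rest on the structural duality $\svara^{+\bot}=\svara^{-}$ together with L.\ref{shiftbot}, L.\ref{valduality} and the Galois property of L.\ref{galois}. The only (cosmetic) difference is that you first normalize all four inclusions to the two forms $A^{\bot}\subseteq B$ and $B^{\bot}\subseteq A$ before linking them with a single application of the Galois connection, whereas the paper chains the equivalences pairwise.
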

\begin{proof}
Recalling $\svarc^{+\bot}=\svarc^-$ and $\svarc^{-\bot}=\svarc^+$ for arbitrary $\svarc$, we have
\begin{center}
\begin{tabular}{lclr}
$\valp{\ntop\svarb^-}\subseteq\valp{\svara^+}$ & iff & $\valp{\svara^+}^{\bot}\subseteq\valp{\ntop\svarb^-}^{\bot}$ & (Lemma \ref{galois}) \\
 & iff & $\valn{\pton\svara^+}\subseteq\valn{\svarb^-}$ & (Lemma \ref{shiftbot})
\end{tabular}
\end{center}
and
\begin{center}
\begin{tabular}{lclr}
$\valp{\ntop\svarb^-}\subseteq\valp{\svara^+}$ \ & \ iff \ & \ $\valp{\svara^+}^{\bot}\subseteq\valp{\ntop\svarb^-}^{\bot}$ \ & \ (Lemma \ref{galois}) \\
 & \ iff \ & \ $\valn{\svara^-}^{\bot}\subseteq\valn{\pton\svarb^+}^{\bot}$ \ & \ (Lemma \ref{valduality}) \\
 & \ iff \ & \ $\valp{\ntop\svara^-}\subseteq\valp{\svarb^+}$ \ & \ (Lemma \ref{shiftbot})
\end{tabular}
\end{center}
and similarly $\valn{\pton\svara^+}\subseteq\valn{\svarb^-}$ iff $\valn{\pton\svarb^+}\subseteq\valn{\svara^-}$.
\end{proof}
\noindent We state soundness for sequent derivability in \lgpol and \cnlpol (F.\ref{cnlg_weakfoc}).
\begin{thm}\label{soundness_sem} All phase models satisfy the following implications:
\begin{center}
\begin{tabular}{rcl}
$\cnlgseq{\svara}{\svarb}$ & \ $\Longrightarrow$ \ & $\valp{\ntop\svara^-}\subseteq\valp{\svarb^+}$ \\
$\cnlgseqp{\svara}{P}$ & \ $\Longrightarrow$ \ & $\valp{P}\subseteq\valp{\svara^+}$
\end{tabular}
\end{center}
\end{thm}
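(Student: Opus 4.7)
The plan is to proceed by mutual induction on the two derivability judgments, verifying at each rule the corresponding inclusion. Throughout I will make free use of L.\ref{foureq} to switch between the four equivalent formulations of the target inclusion, and of L.\ref{valduality} together with L.\ref{shiftbot} to freely identify $\valp{\svara^+}$, $\valn{\svara^-}$ and $\valp{\ntop\svara^-}^{\bot}$ for any structure $\svara$.

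The base case $(I)$ yields the reflexive inclusion $\valp{P}\subseteq\valp{P}$, and $(T)$ composes the two inductive hypotheses by transitivity of $\subseteq$. For $(\ntop L)$ and $(\ntop R)$, one rewrites the stoup premise using L.\ref{valduality} as an inclusion $\valn{N}\subseteq\valp{\svara^+}$, then applies the antitone half of L.\ref{galois} to obtain $\valp{\svara^+}^{\bot}\subseteq\valn{N}^{\bot}$, which is the desired statement after L.\ref{shiftbot}. The additive rules $(\lor L)$ and $(\lor R^{l,r})$ are immediate because $\bigcap$ realises the meet in the poset of facts (L.\ref{closure_intersect}). For the logical right introductions $(\lgtimes R)$, $(\obs R)$ and $(\os R)$, I would unfold $\times,\rightarrow,\leftarrow$ as $\cdot^{\bot}$-closures of generating sets of structured pairs; antitonicity of $\cdot^{\bot}$ then reduces the required inclusion to the pointwise inclusion of the underlying generators, which follows directly from the inductive hypotheses. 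Here I use crucially that for any structure $\svarb$ one has $\valn{\svarb^-}=\valp{\svarb^+}$ (by L.\ref{valduality}), so that the interpretation of a stoup formula $N\obs P$ lines up term-for-term with the interpretation of its structural denominator $\sobs{\svarb}{\svara}$. The corresponding left rules $(\lgtimes L)$, $(\obs L)$, $(\os L)$ are even easier, since the semantic operations commute with the structural interpretation of $\stimes{}{}$, $\sobs{}{}$ and $\sos{}{}$.

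For the display postulates, L.\ref{foureq} together with the three biconditionals on $\bot$ built into item (2) of D.\ref{phase_space} does all the work: these conditions were engineered to encode exactly the display postulates at the semantic level, as can be verified by unfolding $\valp{\ntop(\svarb^{-}\lgplus\svara^{-})}\subseteq\valp{\svarc^{+}}$ via L.\ref{shiftbot} and L.\ref{valduality} into the generator form $\langle\stimes{x}{y},z\rangle\in\bot$, and comparing with the unfolding of $\valp{\ntop\svara^{-}}\subseteq\valp{\svarb^{+}\rightarrow\svarc^{+}}$. The structural rules of F.\ref{neutraldispl_grishin} and F.\ref{neutraldispl_cnl} are likewise discharged directly by the additional constraints on $\bot$ prescribed in Table \ref{constr_strucpost}, which have been chosen precisely to make those rules sound.

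The only real obstacle is the careful bookkeeping of polarities: each structure $\svara$ carries two semantic interpretations, $\valp{\svara^+}$ and $\valn{\svara^-}$, and the inference rules interweave both forms via the stoup and the linear negations appearing in $(\obs R)$, $(\os R)$, $(\bs)$ and $(/)$. This threatens to get notationally out of hand, but L.\ref{valduality}, L.\ref{shiftbot} and L.\ref{foureq} were designed precisely to move between these reformulations; once one treats the target $\valp{\ntop\svara^-}\subseteq\valp{\svarb^+}$ as shorthand for any of its four equivalent forms, the case analysis becomes a mechanical—if somewhat tedious—verification that the defining conditions on phase spaces mirror the syntactic inferences one-for-one.
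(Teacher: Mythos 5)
Your proposal is correct and follows essentially the same route as the paper: induction on the two mutually defined judgements, with $(I)$/$(T)$ reducing to reflexivity/transitivity, the shifts and display postulates handled via Lemmas \ref{foureq}, \ref{shiftbot}, \ref{valduality} and the defining biconditionals on $\bot$, the right introductions reduced to generator-level inclusions by two applications of Lemma \ref{galois}, and the structural rules discharged by the frame conditions of Table \ref{constr_strucpost}. The only difference is one of detail: the paper carries out the element-chasing for $(dp)$ and the Grishin interactions explicitly, which your sketch leaves as a mechanical verification.
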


\begin{proof}
By induction, freely applying L.\ref{foureq}. Note axioms $(I)$ and Cut $(T)$ trivially reduce to reflexivity and transitivity of set inclusion, while $(\ntop L)$ and $(\ntop R)$ are immediate by L.\ref{foureq}. The cases $(\lor L)$ and $(\lor R)$ are equally trivial, reducing to the defining properties of greatest lower bounds. This leaves us to check \\

\noindent\textbf{Case} $(dp)$. As a typical instance, we check $\cnlgseq{\stimes{\svara}{\svarb}}{\svarc}$ if $\cnlgseq{\svara}{\sobs{\svarb}{\svarc}}$. The following hypotheses will be used:
\begin{center}
\begin{tabular}{rl}
(IH) & $\valp{\ntop\svara^-}\subseteq\valp{(\sobs{\svarb}{\svarc})^+}$, iff $\valn{\svara^-}^{\bot}\subseteq\valn{\svarb^-}\rightarrow\valp{\svarc^+}$ \\
(a) & $x\in\valn{\svarc^-}^{\bot}=\valp{\svarc^+}^{\bot}$ \\
(b) & $y\in\valp{\svara^+}^{\bot}=\valn{\svara^-}^{\bot}$ \\
(c) & $z\in\valp{\svarb^+}^{\bot}=\valn{\svarb^-}^{\bot}$
\end{tabular}
\end{center}
(IH) being the induction hypothesis. We desire $\valp{\ntop(\stimes{\svara}{\svarb})^-}\subseteq\valp{\svarc^+}$, iff $\valp{\ntop\svarc^-}\subseteq\valp{(\stimes{\svara}{\svarb})^+}$ by L.\ref{foureq}, iff $\valn{\svarc^-}^{\bot}\subseteq\valp{\svara^+}\times\valp{\svarb^+}$ after unfolding. So assume (a)-(c). We show $\langle x,\stimes{y}{z}\rangle\in\bot$, iff $\langle\sobs{z}{x},y\rangle\in\bot$. By (b) and (IH), $y\in\{\sobs{z}{x} \ | \ z\in\valn{\svarb^-}^{\bot}, x\in\valp{\svarc^+}^{\bot}\}^{\bot}$, so we apply (a) and (c). \\

\noindent\textbf{Cases} $(\lgtimes L)$, $(\obs L)$, $(\os L)$. Immediate, upon the realization that $\valp{(\stimes{P}{Q})^+}=\valp{P\lgtimes Q}$, $\valp{(\sobs{N^{\bot}}{P})^+}=\valp{N\obs P}$ and $\valp{(\sos{P}{N^{\bot}})^+}=\valp{P\os N}$. \\

\noindent\textbf{Cases} $(\lgtimes R)$, $(\os R)$, $(\obs R)$. 
We explicitly check $(\os R)$. By the induction hypothesis, $\valp{N^{\bot}}=\valn{N}\subseteq\valp{\svarb^+}$ and $\valp{P}\subseteq\valp{\svara^+}$. Thus, by L.\ref{galois},
\begin{center}
$\{ \sos{x}{y} \ | \ x\in\valp{\svara^+}^{\bot}, y\in\valp{\svarb^+}^{\bot} \}\subseteq\{ \sos{x}{y} \ | \ x\in\valp{P}^{\bot}, y\in\valn{N}^{\bot}\}$
\end{center}
with another application of L.\ref{galois} deriving the desired $\valp{P\os N}\subseteq\valp{(\sos{\svara}{\svarb})^+}$, noting $\valp{\svara^+\os\Delta^-}=\valp{\Gamma^+}\leftarrow\valn{\Delta^-}=\valp{\Gamma^+}\leftarrow\valp{\Delta^+}$. \\

\noindent \textbf{Cases} $(\grisha{I/IV}{1/2})$, $(\grishc{I/IV}{})$, $(\cnlos)$, $(\cnlobs)$. As a typical instance, we check $(\grisha{I}{1})$, i.e., $\cnlgseq{\stimes{\svara_1}{\svara_2}}{\stimes{\svarb_2}{\svarb_1}}$ if $\cnlgseq{\sos{\svara_2}{\svarb_2}}{\sos{\svarb_1}{\svara_1}}$. We use the following hypotheses:
\begin{center}
\begin{tabular}{rl}
(F) & $\langle\sos{z}{u},\sos{v}{y}\rangle\in\bot\Rightarrow\langle\stimes{y}{z},\stimes{u}{v}\rangle\in\bot$ \\
(IH) & $\valp{\ntop(\sos{\svara_2}{\svarb_2})^-}\subseteq\valp{(\sos{\svarb_1}{\svara_1})^+}$, \\ & iff $(\valn{\svara_2^-}\leftarrow\valn{\svarb_2^-})^{\bot}\subseteq\valp{\svarb_1}\leftarrow\valp{\svara_1}$ \\
(a) & $x\in(\valn{\svara_1^-}\times\valn{\svara_2^-})^{\bot}$ \\
(b) & $y\in\valp{\svarb_2^+}^{\bot}=\valn{\svarb_2^-}^{\bot}$ \\
(c) & $z\in\valp{\svarb_1^+}^{\bot}$ \\
(d) & $u\in\valn{\svara_1^-}^{\bot}$ \\
(e) & $v\in\valn{\svara_2^-}^{\bot}=\valp{\svara_2^+}^{\bot}$
\end{tabular}
\end{center}
recalling (F) to be the frame condition associated with $\grisha{I}{1}$. We must show $\valp{\ntop(\stimes{\svara_1}{\svara_2})^-}\subseteq\valp{(\stimes{\svarb_2}{\svarb_1})^+}$, iff $(\valn{\svara_1^-}\times\valn{\svara_2^-})^{\bot}\subseteq\valp{\svarb_2^+}\times\valp{\svarb_1^+}$ by definition unfolding. Thus, we establish $\langle x,\stimes{y}{z}\rangle\in\bot$ on the assumptions (a)-(c). By (a), it suffices to show $\langle\stimes{y}{z},\stimes{u}{v}\rangle\in\bot$ given (d), (e), reducing to $\langle\sos{z}{u},\sos{v}{y}\rangle\in\bot$ by (F), iff $\langle\sos{v}{y},\sos{z}{u}\rangle\in\bot$. By (IH), (c) and (d), the desired result follows from $\sos{v}{y}\in(\valn{\svara_2^-}\leftarrow\valn{\svarb_2^-})^{\bot}$. But this is a consequence of (b), (e) and the fact that $\{\sos{v}{y} \ | \ v\in\valp{\svara_2^+}^{\bot}, y\in\valn{\svarb_2^-}^{\bot}\}\subseteq(\valn{\svara_2^-}\leftarrow\valn{\svarb_2^-})^{\bot}$ by L.\ref{galois}.

\end{proof}

\subsection{Completeness}
The purpose of this section is to demonstrate the completeness of strong focalization w.r.t. the phase models. Like in the previous section, we fix a set $X$ of formulas relative to which the relevant concepts are defined.

\begin{defs}
Completeness will be established w.r.t. the \textit{syntactic (phase) model}, defined by taking the structures $\Pi$ (relative to $X$) as phases, setting $\langle\Pi,\Sigma\rangle\in\bot$ iff $\cnlgseq{\Pi}{\Sigma}$ and letting $v(p)=\{ p\}^{\bot}=\{\Pi \ | \ \cnlgseq{\Pi}{p}\}$.
\end{defs}
\noindent The well-definedness of the syntactic model is a consequence of Lemmas \ref{closure_displ} and \ref{closure_grish}, the frame conditions for \calccnl \ being easily checked. The following is the central lemma of this section. 

\begin{lem}\label{truthlem_polar}
For arbitrary $N,P,\svard,\svare$, we have
\begin{center}
\begin{tabular}{ll}
(i) & $\svard\in\valn{N}$ implies $\cnlgseq{\svard}{\svare}$ for all $\svare\in\invp{N^{\bot}}$ \\
(ii) & $(\forall\svard)((\exists\svarf\in\invp{N^{\bot}})(\cnlgseqp{\svard}{\svarf}\Rightarrow\cnlgseq{\svard}{\svare})$ implies $\svare\in\valn{N}$ \\
(iii) & $\svard\in\valp{P}$ implies $\cnlgseq{\svard}{\svare}$ for all $\svare\in\invp{P}$ \\
(iv) & $(\forall\svard)((\exists\svarf\in\invp{P})(\cnlgseqp{\svard}{\svarf}\Rightarrow\cnlgseq{\svard}{\svare})$ implies $\svare\in\valp{P}$
\end{tabular}
\end{center}
\end{lem}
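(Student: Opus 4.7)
The plan is to establish (i)--(iv) simultaneously by a mutual induction on the size of $N$ (resp.\ $P$), using L.\ref{galois}, L.\ref{shiftbot}, L.\ref{valduality}, and the shapes of the rules in F.\ref{cnlg_strongfoc}. In the syntactic model $\langle\svard,\svare\rangle\in\bot$ iff $\cnlgseq{\svard}{\svare}$, with symmetry of $\bot$ supplied by L.\ref{closure_displ}(a); the two display clauses of D.\ref{phase_space} follow from the remaining items of L.\ref{closure_displ}.

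For the atomic base cases I would dispatch $P = p$ by observing that $\valp{p} = \{p\}^{\bot}$ and $\invp{p} = \{p\}$, making (iii) immediate and reducing (iv) to instantiating the hypothesis with $\svard = p$ and appealing to $(I)$ to produce $\cnlgseqp{p}{p}$. Negative atoms are dual. The shift cases then collapse via L.\ref{shiftbot}: (i)/(ii) for $\pton P$ become (iv)/(iii) for $P$, since $\valn{\pton P} = \valp{P}^{\bot}$ and $\invp{(\pton P)^{\bot}} = \{P^{\bot}\}$, with the translation between stoup and non-stoup judgements performed precisely by the rules $(R)$ (upwards, collecting an $\invp{\cdot}$-indexed family of invertible derivations into one stoup judgement) and $(D)$ (downwards, opening a stoup judgement into an invertible goal). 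In the multiplicative cases, e.g.\ $P = P_1\lgtimes P_2$, I would unfold $\valp{P_1\lgtimes P_2} = \{\stimes{x}{y}\mid x\in\valp{P_1}^{\bot},\,y\in\valp{P_2}^{\bot}\}^{\bot}$ and iterate the induction hypotheses against the structural components, closing with $(\stimes{}{})$ and the display postulates; the $(\os)$ and $(\obs)$ cases and the negative duals proceed symmetrically with $\sos{}{}, \sobs{}{}$ in place of $\stimes{}{}$. Additives are routine since $\invp{P_1\lor P_2} = \invp{P_1}\cup\invp{P_2}$ matches the branching admitted by $(D)$ and by the premise family of $(R)$, while $\valp{P_1\lor P_2} = \valp{P_1}\cap\valp{P_2}$ distributes cleanly over the inductive clauses.

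I expect the main difficulty to lie in parts (ii) and (iv), where one must synthesise a focused derivation witnessing that an abstractly specified $\svare$ lies in a double-bot closure. The strategy, after Okada, is to reduce membership to checking $\bot$-relatedness against each generator $\stimes{x}{y}$ (resp.\ $\sos{x}{y}$, $\sobs{x}{y}$) of the underlying base set, then to apply the (i)/(iii) hypothesis to the components $x, y$, and finally to invoke the display postulates (and, when present, the Grishin closures of L.\ref{closure_grish}) to manoeuvre the pieces into a shape where $(D)$ can fire. The subsidiary bookkeeping obstacle is ensuring that every formula at which $(D)$ is invoked actually belongs to $X^{\tau}$, which is precisely why $X^{\tau}$ was defined in $\S 3.3$ as the closure $\{\subb{\ntop N}\mid N\in X\}$; the closure lemma stated just before the definition of structures guarantees that this property is preserved along the induction.
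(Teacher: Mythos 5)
Your overall architecture matches the paper's exactly: a simultaneous induction on the formula establishing (i)--(iv) together over the syntactic model, atoms dispatched via $(I)$, the shift cases traded for the dual pair of statements on the subformula via $(R)$ and $(D)$ (your pairing (i)$\leftrightarrow$(iv), (ii)$\leftrightarrow$(iii) is the right one), additives via the lattice operations, and well-definedness of the model supplied by L.\ref{closure_displ} and L.\ref{closure_grish}. Parts (i)/(iii) and all the base, shift and additive cases are fine as you describe them.

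The gap is in your plan for the completeness halves (ii) and (iv) at the multiplicatives, which is where the real work lies. Taking $P=P_1\os N_1$ as the representative case, you reduce $\svare\in\valp{P}$ to $\bot$-relatedness against each generator $\sos{x}{y}$ and then propose to "apply the (i)/(iii) hypothesis to the components $x,y$". But those components range over $\valp{P_1}^{\bot}$ and $\valn{N_1}^{\bot}$, whereas (i) and (iii) have as hypotheses membership in $\valn{\cdot}$ and $\valp{\cdot}$ themselves; the induction hypotheses you name simply do not apply to $x$ and $y$, and no display manoeuvring repairs that mismatch. The paper instead peels off one component at a time: displaying $\svare$ against $x$ reduces the goal $\cnlgseq{\svare}{\sos{x}{y}}$ to showing that the auxiliary structure $\stimes{y}{\svare}$ lies in $\valp{P_1}$, which is an instance of IH(iv) (not (iii)); unfolding that instance and displaying again reduces to $\sobs{\svare}{\svard_1}\in\valn{N_1}$, an instance of IH(ii); only at the innermost level are the accumulated stoup hypotheses $\cnlgseqp{\svard_1}{\svarf_1}$ and $\cnlgseqp{\svard_2}{\svarf_2}$ recombined by the context-splitting rule $(\sos{}{})$ into the witness demanded by the top-level hypothesis of (iv). Relatedly, $(D)$ is not the rule that fires here: it is needed only for the preliminary observation that (ii)/(iv) yield $N\in\valn{N}$ and $P^{\bot}\in\valp{P}$, and in the atom/shift cases; in the multiplicative completeness cases the closing moves are $(\stimes{}{})/(\sos{}{})/(\sobs{}{})$ feeding the hypothesis of the very statement being proved. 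Nor are the Grishin closures invoked inside the induction at all --- they enter only in checking the frame conditions of the syntactic model.
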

\begin{proof}
First, note that if for some $\svare\in\invp{N^{\bot}}$ ($\svare\in\invp{P}$) $\cnlgseq{\Pi}{\svare}$, then also $\cnlgseq{\svard}{N}$ ($\cnlgseq{\svard}{P^{\bot}}$) by applying $(D)$. Consequently, (ii) and (iv) imply, respectively, $N\in\uvaln{N}$ and $P^{\bot}\in\uvalp{P}$. In practice, when invoking the induction hypothesis for (ii) or (iv), we often immediately instantiate them by the latter consequences. To prove (i)-(iv), we proceed by a simultaneous induction on $P,N$. As typical cases, we check $p$, $\ntop N$, $P\os N$ and $P\lor Q$. \\

\noindent\textbf{Case} $p$. Since $\invp{p}=\{ p\}$, it suffices to show $\svard\in\valp{p}$ implies $\cnlgseq{\svard}{p}$ for (iii), and if $\cnlgseqp{\svard}{p}$ implies $\cnlgseq{\svard}{\svare}$ then also $\svare\in\valp{p}$, iff $\cnlgseq{\svare}{p}$ for (iv).
\begin{itemize}
\item[(iii)] By definition, as $\valp{p}=\{ p\}^{\bot}$.
\item[(iv)] Immediate from the observation that $\cnlgseqp{\svard}{p}$ iff $\svard=p$, as a simple case analysis on F.\ref{cnlg_strongfoc} will show.
\end{itemize}

\noindent\textbf{Case} $\ntop N$. Since $\invp{\ntop N}=\{ N\}$, it suffices to show $\svard\in\valp{\ntop N}$ implies $\cnlgseq{\svard}{N}$ for (iii), and if $\cnlgseqp{\svard}{N}$ implies $\cnlgseq{\svard}{\svare}$ then also $\svare\in\invp{\ntop N}$ for (iv).
\begin{itemize}
\item[(iii)] Suppose $\svard\in\valp{\ntop N}=\valn{N}^{\bot}$. By IV(ii), $N\in\valn{N}$, so that $\cnlgseq{\svard}{N}$.
\item[(iv)] We show $\svare\in\valp{\ntop N}=\valn{N}^{\bot}$, assuming (a) $\cnlgseqp{\svard}{N}$ implies $\cnlgseq{\svard}{\svare}$ for any $\svard$. Letting (b) $\svarf\in\valn{N}$, it suffices to ensure $\cnlgseq{\svare}{\svarf}$. IH(i) and (b) imply $\cnlgseq{\svarf}{\svarf'}$ for all $\svarf'\in\invp{N^{\bot}}$, hence $\cnlgseqp{\svarf}{N}$ by $(R)$. Thus, $\cnlgseq{\svarf}{\svare}$ by (a), and we apply $(dp)$.
\end{itemize}

\noindent\textbf{Case} $P\os N$. We show (iii) and (iv).
\begin{itemize}
\item[(iii)] Let (a) $\svard\in\valp{P\os N}$ and (b) $\svare\in\invp{P\os N}$, iff $\svare=\sos{\svare_1}{\svare_2}$ for some $\svare_1\in\invp{P}$ and $\svare_2\in\invp{N^{\bot}}$. We show $\cnlgseq{\svard}{\svare}$. By (a), it suffices to ensure $\svare_1\in\valp{P}^{\bot}$ and $\svare_2\in\valn{N}^{\bot}$. I.e., we must ascertain $\cnlgseq{\svare_1}{\svarf_1}$ and $\cnlgseq{\svare_2}{\svarf_2}$ on the assumptions $\svarf_1\in\valp{P}$ and $\svarf_2\in\valn{N}$. The desired result follows from IH(i), IH(iii), $(dp)$ and (b).
\item[(iv)] The following hypotheses will be used:
\begin{center}
\begin{tabular}{ll}
(a) & $\cnlgseqp{\svard}{\svarf}$ for some $\svarf\in\invp{P\os N}$ implies $\cnlgseq{\svard}{\svare}$ for all $\svard$ \\
(b) & $\svarf_1\in\valp{P}^{\bot}$ \\
(c) & $\svarf_2\in\valn{N}^{\bot}$ \\
(d) & $(\forall\svard_1)((\exists\svarf_1\in\invp{P})(\cnlgseqp{\svard_1}{\svarf_1})\Rightarrow\cnlgseq{\svard_1}{\stimes{\svarf_2}{\svare}})$ \\ & implies $\stimes{\svarf_2}{\svare}\in\valp{P}$ \\
(e) & $\cnlgseqp{\svard_1}{\svarf_1}$ for some $\svarf_1\in\invp{P}$ \\
(f) & $(\forall\svard_2)((\exists\svarf_2\in\invp{N^{\bot}})(\cnlgseqp{\svard_2}{\svarf_2})\Rightarrow\cnlgseq{\svard_1}{\sobs{\svare}{\svard_1}})$ \\ & implies $\sobs{\svare}{\svard_1}\in\valp{P}$ \\
(g) & $\cnlgseqp{\svard_2}{\svarf_2}$ for some $\svarf_2\in\invp{N^{\bot}}$
\end{tabular}
\end{center}
Assuming (a), we show $\svare\in\valp{P\os N}$. So let (b), (c). Since $\cnlgseq{\svare}{\sos{\svarf_1}{\svarf_2}}$ iff $\cnlgseq{\svarc_1}{\stimes{\svarf_2}{\svare}}$ by $(dp)$, it suffices by (b) to prove $\stimes{\svarf_2}{\svare}\in\valp{P}$. By (d), i.e., IH(iv), we need only prove $\cnlgseq{\svard_1}{\stimes{\svarf_2}{\svare}}$ on the assumption (e), iff $\cnlgseq{\svarf_2}{\sobs{\svare}{\svard_1}}$ by $(dp)$. Applying (c), we must show $\sobs{\svare}{\svard_1}\in\valn{N}$, which follows from (f), i.e., IH(ii), if we can prove $\cnlgseq{\svard_2}{\sobs{\svare}{\svard_1}}$ on the assumption (g). By $(dp)$ and (a), taking $\svarf=\sos{\svarf_1}{\svarf_2}$ in the latter case, this follows from $\cnlgseqp{\sos{\svard_1}{\svard_2}}{\sos{\svarf_1}{\svarf_2}}$, witnessed by $(\sos{}{})$, (e) and (f).
\end{itemize}

\noindent\textbf{Case} $P\lor Q$. We show (iii) and (iv).
\begin{itemize}
\item[(iii)] List of hypotheses:
\begin{center}
\begin{tabular}{ll}
(a) & $\svard\in\valp{P\lor Q}$, iff $\svard\in\valp{P}$ and $\svard\in\valp{Q}$ \\
(b) & $\svard\in\valp{P}$ implies $\cnlgseq{\svard}{\svare}$ for all $\svare\in\invp{P}$ \\
(c) & $\svard\in\valp{P}$ implies $\cnlgseq{\svard}{\svarf}$ for all $\svarf\in\invp{Q}$
\end{tabular}
\end{center}
Assume (a). The induction hypotheses (b) and (c) immediately imply $\cnlgseq{\svard}{\svard'}$ for all $\svard'\in\invp{P\lor Q}=\invp{P}\bigcup\invp{Q}$.
\item[(iv)] List of hypotheses:
\begin{center}
\begin{tabular}{ll}
(a) & $(\exists\svarf\in\invp{P\lor Q})(\cnlgseqp{\svard}{\svarf}$ implies $\cnlgseq{\svard}{\svare}$ for all $\svard$ \\
(b) & $(\forall\svard)((\exists\svarf'\in\invp{P})(\cnlgseqp{\svard}{\svarf'}\Rightarrow\cnlgseq{\svard}{\svare})$ implies $\svare\in\valp{P}$ \\
(c) & $\cnlgseqp{\svard}{\svarf'}$ for some $\svarf'\in\invp{P}$
\end{tabular}
\end{center}
We show $\svare\in\valp{P\lor Q}=\valp{P}\bigcap\valp{Q}$ on the assumption (a). We only prove $\svare\in\valp{P}$, with $\svare\in\valp{Q}$ following similarly. By the induction hypothesis (b), it suffices to show, for any given $\svard$, that $\cnlgseq{\svard}{\svare}$ on the assumption (c). Applying (a), we may suffice by demonstrating there exists $\svarf$ s.t. $\cnlgseqp{\svard}{\svarf}$. By (c), we may take $\svarf=\svarf'$.
\end{itemize}
\end{proof}
\begin{lem}\label{compl_sem_help} For arbitrary $N,P$, $\invp{N^{\bot}}\subseteq\valn{N}^{\bot}$, $\invp{P}\subseteq\valp{P}^{\bot}$
\end{lem}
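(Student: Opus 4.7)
The plan is to read off both inclusions directly from L.\ref{truthlem_polar}(i) and (iii), using only the symmetry of $\bot$ in the syntactic model. Recall that in the syntactic model, $\langle\svard,\svare\rangle\in\bot$ iff $\cnlgseq{\svard}{\svare}$, and by L.\ref{closure_displ}(a) this relation is symmetric. Unfolding the definition of $\cdot^{\bot}$, we have $\svare\in\valn{N}^{\bot}$ iff $\cnlgseq{\svare}{\svard}$ for every $\svard\in\valn{N}$, and analogously for $\valp{P}^{\bot}$.

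For the first inclusion, fix $\svare\in\invp{N^{\bot}}$ and an arbitrary $\svard\in\valn{N}$. By L.\ref{truthlem_polar}(i), $\cnlgseq{\svard}{\svare}$, hence by L.\ref{closure_displ}(a), $\cnlgseq{\svare}{\svard}$, which is exactly $\langle\svare,\svard\rangle\in\bot$. Since $\svard\in\valn{N}$ was arbitrary, $\svare\in\valn{N}^{\bot}$.

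For the second inclusion, fix $\svare\in\invp{P}$ and an arbitrary $\svard\in\valp{P}$. By L.\ref{truthlem_polar}(iii), $\cnlgseq{\svard}{\svare}$, and another appeal to L.\ref{closure_displ}(a) yields $\cnlgseq{\svare}{\svard}$, so that $\svare\in\valp{P}^{\bot}$.

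There is no real obstacle: the work has already been done in L.\ref{truthlem_polar}, and the current statement is just its dualized restatement made possible by the symmetry of the orthogonality relation in the syntactic model. Its role in the sequel will presumably be to let us feed structural witnesses from $\invp{\cdot}$ into orthogonality arguments when proving the key `truth implies focused provability' completeness step.
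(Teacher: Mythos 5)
Your proof is correct, but it follows a genuinely different route from the paper's. The paper derives $\invp{N^{\bot}}\subseteq\valn{N}^{\bot}$ by applying L.\ref{truthlem_polar}(iv) to the shifted formula $\ntop N$: since $\invp{\ntop N}=\{N\}$, the hypothesis of (iv) reduces to showing that $\cnlgseqp{\svard}{N}$ implies $\cnlgseq{\svard}{\svare}$, and this is discharged by inverting the rule $(R)$ (the only rule that can derive $\cnlgseqp{\svard}{N}$), which hands back $\cnlgseq{\svard}{\svare'}$ for every $\svare'\in\invp{N^{\bot}}$; the conclusion $\svare\in\valp{\ntop N}=\valn{N}^{\bot}$ then follows, with the second inclusion handled dually via (ii). You instead read both inclusions directly off the ``soundness halves'' (i) and (iii) of the truth lemma, using only the symmetry of the syntactic orthogonality relation supplied by L.\ref{closure_displ}(a). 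Both arguments are sound: symmetry of $\bot$ is already needed for the syntactic model to satisfy D.\ref{phase_space} at all, so you are not importing anything that is not already in place. Your version is the more elementary of the two, avoiding both the detour through $\ntop N$ and the inversion on $(R)$; the paper's version, by going through clauses (ii)/(iv), keeps the lemma in the same ``orthogonality-candidate'' style as the rest of the completeness argument and does not need to mention which coordinate of $\bot$ is being quantified over. Your closing remark about the lemma's role is also accurate: it is exactly what lets T.\ref{completeness_sem} convert $\invp{N^{\bot}}\subseteq\valp{\ntop N}\subseteq\valp{P}$ into focused derivability via L.\ref{truthlem_polar}(iii).
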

\begin{proof}
We show the former statement, the second being dual. Suppose $\svare\in\invp{N^{\bot}}$. We apply L.\ref{truthlem_polar}(iv) to show $\svare\in\invp{\ntop N}(=\valn{N}^{\bot})$. So let $\svard$ be such that for some $\svarf\in\invp{\ntop N}$, $\cnlgseqp{\svard}{\svarf}$. Note, however, $\invp{\ntop N}=\{ N\}$, so that $\cnlgseqp{\svard}{N}$. Since the latter is derivable only by $(R)$, we have $\cnlgseq{\svard}{\svare'}$ for all $\svare'\in\invp{N^{\bot}}$, hence also $\cnlgseq{\svard}{\svare}$.
\end{proof}

\noindent We state completeness w.r.t. the syntactic model, implying in particular completeness w.r.t. all phase models.

\begin{thm}\label{completeness_sem} For arbitrary $N,P$, if $\valp{\ntop N}\subseteq\valp{P}$, then $\cnlgseq{\svard}{\svare}$ for every $\svard\in\invp{N^{\bot}}$ and $\svare\in\invp{P}$ with $X$ instantiated by $\{ N, P^{\bot}\}$.
\end{thm}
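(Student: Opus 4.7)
The plan is to treat this theorem as a direct payoff of the truth lemma (L.\ref{truthlem_polar}) and its corollary L.\ref{compl_sem_help}, with the hypothesis $\valp{\ntop N}\subseteq\valp{P}$ serving as the bridge in the middle.

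First I would fix $\svard\in\invp{N^{\bot}}$ and $\svare\in\invp{P}$ arbitrarily and trace through the following chain of inclusions. By L.\ref{compl_sem_help}, $\invp{N^{\bot}}\subseteq\valn{N}^{\bot}$, so $\svard\in\valn{N}^{\bot}$. Now L.\ref{shiftbot} tells us $\valn{N}=\valp{\ntop N}^{\bot}$, and since $\valp{\ntop N}$ is a fact (closed under $\cdot^{\bot\bot}$) we obtain $\valn{N}^{\bot}=\valp{\ntop N}$. Hence $\svard\in\valp{\ntop N}$, and the standing hypothesis $\valp{\ntop N}\subseteq\valp{P}$ upgrades this to $\svard\in\valp{P}$.

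The final step is to invoke L.\ref{truthlem_polar}(iii), which says precisely that $\svard\in\valp{P}$ implies $\cnlgseq{\svard}{\svare'}$ for every $\svare'\in\invp{P}$. Specializing to $\svare'=\svare$ delivers the conclusion.

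I don't anticipate any real obstacle: all the substantive work (handling the interplay between polarity shifts, the sets $\invp{\cdot}$, and the syntactic phase space) has already been carried out in L.\ref{truthlem_polar} and L.\ref{compl_sem_help}. The one small point to verify is that the instantiation $X=\{N,P^{\bot}\}$ makes the definitions of $\invp{N^{\bot}}$ and $\invp{P}$ well-defined relative to $X^{\tau}$, which is immediate from the observation (noted just after D.\ref{def_invp}) that $\invp{N^{\bot}}$ is well-defined whenever $N\in X^{\tau}$, together with the dual fact $\invp{P}=\invp{P^{\bot\bot}}$ with $P^{\bot}\in X$. So in the end the theorem is essentially a one-line assembly of the previously established ingredients.
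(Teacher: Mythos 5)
Your proof is correct and follows essentially the same route as the paper's: L.\ref{compl_sem_help} gives $\invp{N^{\bot}}\subseteq\valn{N}^{\bot}=\valp{\ntop N}\subseteq\valp{P}$, and L.\ref{truthlem_polar}(iii) then yields the conclusion. (The only cosmetic difference is that you derive $\valn{N}^{\bot}=\valp{\ntop N}$ via L.\ref{shiftbot} and fact-closure, whereas this is already the definition of $\valp{\ntop N}$.)
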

\begin{proof}
If $X=\{ N,P^{\bot}\}$, then $\invp{N^{\bot}}$ and $\invp{P}$ are well-defined relative to $X^{\tau}$ as argued in D.\ref{def_invp}, so that it makes sense to speak of presentations $\pres{\svard}{\svare}$ for every $\svard\in\invp{N^{\bot}}$ and $\svare\in\invp{P}$. Now, by L.\ref{compl_sem_help}, $\invp{N^{\bot}}\subseteq\valn{N}^{\bot}=\valp{\ntop N}$, hence $\invp{N^{\bot}}\subseteq\valp{P}$. The desired result follows immediately from L.\ref{truthlem_polar}(iii).
\end{proof}

\begin{cor}\label{final_main_theorem} We have $\cnlgseq{\svara}{\svarb}$ in \lgpol \ or \cnlpol \ iff $\cnlgseq{\svard}{\svare}$ for all $\svard\in\invp{\svara^+}$ and $\svare\in\invp{\svarb^+}$, instantiating $X$ by $\{\svara^-,\svarb^-\}$.
\end{cor}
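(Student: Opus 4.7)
The plan is to prove both directions of the biconditional by composing the machinery already assembled in the preceding sections.

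For the forward direction, I would chain soundness (T.\ref{soundness_sem}) with completeness (T.\ref{completeness_sem}) in the syntactic model. Assuming $\cnlgseq{\svara}{\svarb}$ in \lgpol/\cnlpol, soundness yields $\valp{\ntop\svara^-}\subseteq\valp{\svarb^+}$ in every phase model, including the syntactic one built with $X$ instantiated by $\{\svara^-,\svarb^-\}$ (which satisfies the structural frame conditions of Table \ref{constr_strucpost} by L.\ref{closure_displ} and L.\ref{closure_grish}). Then T.\ref{completeness_sem} applies with the choice $N\defeq\svara^-$ and $P\defeq\svarb^+$; since $P^\bot=\svarb^-$, the required set $\{N,P^\bot\}$ matches $X$, so one obtains focused $\cnlgseq{\svard}{\svare}$ for every $\svard\in\invp{(\svara^-)^\bot}=\invp{\svara^+}$ and $\svare\in\invp{\svarb^+}$.

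For the reverse direction, my strategy is first to observe that each focused rule of F.\ref{cnlg_strongfoc} admits a direct polarized simulation in F.\ref{cnlg_weakfoc}: the synthetic Reactions $(R)$ and Decisions $(D)$ unpack into iterated shifts $(\ntop R)/(\ntop L)$ interleaved with ordinary stoup introductions, while $(\stimes{}{})$, $(\sos{}{})$, $(\sobs{}{})$ fit directly inside $(\lgtimes R)$, $(\os R)$, $(\obs R)$. Thus every assumed focused $\cnlgseq{\svard}{\svare}$ lifts to a polarized derivation of the same sequent. To combine these into $\cnlgseq{\svara}{\svarb}$, I plan to prove an auxiliary lemma by induction on positive formulas $P$: given polarized $\cnlgseq{\svard[\svarf]}{\svarc}$ for every $\svarf\in\invp{P}$, there is a polarized derivation of $\cnlgseq{\svard[P]}{\svarc}$, obtained by iterating the invertible $(\lgtimes L)$, $(\os L)$, $(\obs L)$, $(\lor L)$ together with the display postulates. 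Two applications of this combination lemma---one on the left for $\svara^+$, and one (modulo $(dp)$ to swap sides) on the right for $\svarb^+$---assemble the family of focused derivations into the desired polarized $\cnlgseq{\svara}{\svarb}$.

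The principal obstacle will be formulating the combination lemma precisely enough to handle positive formulas buried inside arbitrary structural contexts: the display postulates are what surface each subformula in turn before the invertible $L$-rule applies, after which the context must be reassembled. The case analysis then mirrors the definition of $\invp{\cdot}$, with $(\lor L)$ branching exactly as $\invp{P\lor Q}=\invp{P}\cup\invp{Q}$ prescribes and the multiplicative cases recursing on structural counterparts $\stimes{\svard}{\svare}$, $\sos{\svard}{\svare}$, $\sobs{\svare}{\svard}$. Once that lemma is in hand, both directions of the biconditional follow immediately, closing the square of F.\ref{summ_results}.
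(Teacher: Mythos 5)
Your forward direction is exactly the paper's: compose T.\ref{soundness_sem} with T.\ref{completeness_sem} in the syntactic model, instantiating $N$ by $\svara^-$ and $P$ by $\svarb^+$. The backward direction, however, is where you genuinely diverge. The paper does \emph{not} simulate the focused rules inside \lgpol/\cnlpol\ directly; it takes a detour through the \emph{unpolarized} calculus, composing T.\ref{strong_unpol} (focused provability implies $\cnlgseq{\svara}{\svarb}$ in \calclg/\calccnl, via the forgetful maps and L.\ref{strong_unpol_lemma}) with T.\ref{unpol_weak} (unpolarized provability implies polarized provability, via the shift decoration $\semt{\cdot}$). This reuses two arrows already on hand in F.\ref{summ_results}, which is why the statement is a corollary. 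Your route would instead require building a polarized analogue of the entire apparatus of $\S$3.3's soundness argument: a polarized version of L.\ref{soundness_displ} for absorbing the display postulates, your ``combination lemma'' as the analogue of L.\ref{soundness_pos_inf}, an analogue of L.\ref{soundness_neg_inf} for recomposing a positive formula in the stoup from its structural decomposition, and a mutual induction playing the role of L.\ref{strong_unpol_lemma}. Two points in your sketch are thinner than they look: the focused structures contain bare negative atoms $N$ where polarized structures contain $\ntop N$, so ``a polarized derivation of the same sequent'' really means a derivation of a translated sequent; and the focused judgement $\cnlgseqp{\svard}{\svare}$ carries a \emph{structure} in the stoup while the polarized $\cnlgseqp{\svara}{P}$ carries a formula, so simulating $(D)$, $(R)$ and the context-splitting rules $(\stimes{}{})$, $(\sos{}{})$, $(\sobs{}{})$ requires stating precisely which polarized judgement each focused one maps to (the $(D)$ case will need a Cut $(T)$, just as the paper's $(D)$ case does). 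None of this is a fatal obstacle---the paper's own L.\ref{strong_unpol_lemma} shows the strategy succeeds with the unpolarized calculus as target, and the polarized target should behave analogously---but it amounts to re-proving a section's worth of lemmas rather than citing two existing theorems. What your approach buys is a self-contained focused-to-polarized translation that never leaves the polarized world; what the paper's buys is economy, at the price of routing through \calclg/\calccnl\ and the decoration $\semt{\cdot}$.
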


\begin{proof}
The direction from left to right follows from composing Theorems \ref{soundness_sem} and \ref{completeness_sem}, while the composition of Theorems \ref{strong_unpol} and \ref{unpol_weak} takes care of the other direction.
\end{proof}

\section{Related Topics}
We consider some related topics and directions for future research.

\subsection{Synthetic inference rules}
Since the works of Girard (\cite{girard00}) and Andreoli (\cite{andreoli01}), the literature on focused proof search has become home to various implementations of synthetic inference rules, mostly concerning classical (linear) logic. The current account borrows a bit from everything, but is perhaps most similar to that of Zeilberger (\cite{zeilberger08}) in its depiction of the non-invertible phase, while more strongly resembling \cite{andreoli01} for the invertible rules. It should be noted, however, that Zeilberger's work stresses a higher order interpretation of focused proofs through the use of Martin-L\"ofs generalized inductive definitions, and proves normalization accordingly. 

\subsection{Focusing as a semantics of proofs}
Following Andreoli, we have explained focusing as a method of streamlining Cut-free backward chaining proof search. Around the same time as Andreoli's initial \cite{andreoli92}, however, Girard (\cite{girard91}) independently published on a similar sequent calculus (weakly focalized, by current terminology) for classical logic, with the aim of restoring the Church-Rosser property for Cut elimination, bypassing Lafont's critical pairs. In particular, Girard's results inspired a novel translation into intuitionistic logic, achieving parsimony by making the introduction of double negations contingent upon the polarity of the formula being translated. Focused derivations thus seem particularly suited to serve as a constructive theory of (classical) proofs, a theme further pursued by Zeilberger (\cite{zeilberger08}). 

The original intended application of \calclg \ and \calccnl \ being the study of natural language \textit{syntax} (argued to be similarly resource sensitive), an intuitionistic translation for focused derivations would be expected to similarly benefit investigations of natural language \textit{semantics} along the line of Montague's work (\cite{montague}). Double negation translations for \calclg \ have been previously studied by Bernardi and Moortgat (\cite{bernardimoortgat}) for precisely this purpose, although their work does not yet benefit of the structure of focused derivations.

\subsection{Normalization by evaluation}
Save for the na\"ive use of the set theoretic language, none of our proofs resort to classical reasoning. In particular, the completeness result of $\S$4.2 proceeds not via the usual proof by contraposition through the construction of countermodels, but rather shows directly that any `truth' in the syntactic model has a focused proof. Thus, through a formalization in a constructive meta language like Martin-L\"of type theory or the calculus of constructions, we might hope to explicate the underlying algorithmic content underlying our work, being a mapping of sequent derivations in \calclg \ or \calccnl \ into focused derivations. Already for intuitionistic logic, such formalizations of constructive completeness proofs have been studied by Coquand (\cite{coquand93}) and Herbelin and Lee (\cite{herbelinlee09}), while Ilik (\cite{ilik10}) additionally considers classical logic. Each of the works cited further stress the connection to \textit{normalization by evaluation}, first appearing in \cite{nbe_first}, seeking normalization proofs for the $\lambda$-calculus (and later, arbitrary term rewriting systems) making no recourse to the usual reduction relations. We leave the study of such connections for \calclg \ and \calccnl \ as future research.

\bibliography{bibliography}

\begin{thebibliography}{10}

\bibitem{abrusci02}
Vito~Michele Abrusci.
\newblock Classical conservative extensions of {L}ambek calculus.
\newblock {\em Studia Logica}, 71(3):277--314, 2002.

\bibitem{andreoli92}
Jean-Marc Andreoli.
\newblock Logic programming with focusing proofs in linear logic.
\newblock {\em Journal of Logic and Computation}, 2(3):297--347, 1992.

\bibitem{andreoli01}
Jean-Marc Andreoli.
\newblock Focussing and proof construction.
\newblock {\em Annals of Pure and Applied Logic}, 107(1-3):131--163, 2001.

\bibitem{andreoli04}
Jean-Marc Andreoli.
\newblock An axiomatic approach to structural rules for locative linear logic.
\newblock In {\em Linear logic in computer science}, volume 316 of {\em London
  mathematical society lecture notes series}. Cambridge University Press, 2004.

\bibitem{nbe_first}
Ulrich Berger and Helmut Schwichtenberg.
\newblock An inverse of the evaluation functional for typed
  $\lambda$--calculus.
\newblock In R.~Vemuri, editor, {\em Proceedings of the Sixth Annual IEEE
  Symposium on Logic in Computer Science}, pages 203--211. IEEE Computer
  Society Press, Los Alamitos, 1991.

\bibitem{bernardimoortgat}
R.~Bernardi and M.~Moortgat.
\newblock Continuation semantics for the {L}ambek-{G}rishin calculus.
\newblock {\em Information and Computation}, 208(5):397--416, 2010.

\bibitem{chaudhuripfenning}
K.~Chaudhuri and F.~Pfenning.
\newblock Focusing the inverse method for linear logic.
\newblock In {\em Computer Science Logic, 19th International Workshop, CSL
  2005}, volume 3634 of {\em Lecture Notes in Computer Science}, pages
  200--215. Springer, 2005.

\bibitem{cockettseely}
J.~R.~B. Cockett and R.~A.~G. Seely.
\newblock Weakly distributive categories.
\newblock In {\em Journal of Pure and Applied Algebra}, pages 45--65.
  University Press, 1991.

\bibitem{coquand93}
Catarina Coquand.
\newblock From semantics to rules: A machine assisted analysis.
\newblock In Egon B{\"o}rger, Yuri Gurevich, and Karl Meinke, editors, {\em
  Computer Science Logic, 7th Workshop, CSL '93, Swansea, United Kingdom,
  September 13-17, 1993, Selected Papers}, pages 91--105, 1993.

\bibitem{degrootelamarche}
{P}hilippe {D}e {G}roote and {F}ran{\c{c}}ois {L}amarche.
\newblock Classical non associative {L}ambek {C}alculus.
\newblock {\em {S}tudia {L}ogica}, 71:355--388, 2002.

\bibitem{girard91}
Jean-Yves Girard.
\newblock A new constructive logic: Classical logic.
\newblock {\em Mathematical Structures in Computer Science}, 1(3):255--296,
  1991.

\bibitem{girard00}
Jean-Yves Girard.
\newblock On the meaning of logical rules {II}: multiplicatives and additives.
\newblock In {\em Foundation of Secure Computation}, pages 183--212, Amsterdam,
  2000. IOS Press.

\bibitem{grishin}
V.N. Grishin.
\newblock On a generalization of the {A}jdukiewicz-{L}ambek system.
\newblock In A.~I. Mikhailov, editor, {\em Studies in Nonclassical Logics and
  Formal Systems}, pages 315--334, Nauka, Moscow, 1983.

\bibitem{herbelinlee09}
Hugo Herbelin and Gyesik Lee.
\newblock Forcing-based cut-elimination for {G}entzen-style intuitionistic
  sequent calculus.
\newblock In Hiroakira Ono, Makoto Kanazawa, and Ruy J. G.~B. de~Queiroz,
  editors, {\em Logic, Language, Information and Computation, 16th
  International Workshop, WoLLIC 2009, Tokyo, Japan, June 21-24, 2009.
  Proceedings}, pages 209--217, 2009.

\bibitem{ilik10}
Danko Ilik.
\newblock {\em Constructive completeness proofs and delimited control}.
\newblock PhD thesis, \'{E}cole Polytechnique, INRIA, Universit\'e Paris
  Diderot, 2010.

\bibitem{lamarche03}
{F}ran{\c{c}}ois {L}amarche.
\newblock On the algebra of structural contexts.
\newblock {\em Mathematical structures in computer science}, 2003.

\bibitem{lambek58}
Joachim Lambek.
\newblock The mathematics of sentence structure.
\newblock {\em American Mathematical Monthly}, 65:154--169, 1958.

\bibitem{lambek61}
Joachim Lambek.
\newblock On the calculus of syntactic types.
\newblock In Roman Jakobson, editor, {\em Structure of Language and its
  Mathematical Aspects, Proceedings of the Twelfth Symposium in Applied
  Mathematics}, 1961.

\bibitem{laurent04}
Olivier Laurent.
\newblock A proof of the focalization property of linear logic.
\newblock Unpublished note, May 2004.

\bibitem{montague}
R.~Montague.
\newblock Universal grammar.
\newblock {\em Theoria}, 36(3):373--398, 1970.

\bibitem{moortgat09}
Michael Moortgat.
\newblock Symmetric categorial grammar.
\newblock {\em Journal of Philosophical Logic}, 38(6):681--710, 2009.

\bibitem{moot07}
Richard Moot.
\newblock Proof nets for display logic.
\newblock {\em CoRR}, abs/0711.2444, 2007.

\bibitem{okada02}
Mitsuhiro Okada.
\newblock A uniform semantic proof for cut-elimination and completeness of
  various first and higher order logics.
\newblock {\em Theoretical Computer Science}, 281(1-2):471--498, 2002.

\bibitem{zeilberger08}
Noam Zeilberger.
\newblock On the unity of duality.
\newblock {\em Annals of Pure and Applied Logic}, 153(1-3):66--96, 2008.

\end{thebibliography}

\end{document}